\newcommand{\MultiCastCore}{\textsc{MultiCastCore}\xspace}
\newcommand{\MultiCast}{\textsc{MultiCast}\xspace}
\newcommand{\MultiCastC}{\textsc{MultiCast}$(C)$\xspace}
\newcommand{\MultiCastAdv}{\textsc{MultiCastAdv}\xspace}
\newcommand{\MultiCastAdvC}{\textsc{MultiCastAdv}$(C)$\xspace}
\begin{document}

\newtheorem{claim}{Claim}[theorem]


\title[{Fast and Resource Competitive Broadcast in Multi-channel Radio Networks}]{Fast and Resource Competitive Broadcast\\ in Multi-channel Radio Networks}

\author{Haimin Chen}
\orcid{}
\affiliation{Nanjing University}
\email{haimin.chen@smail.nju.edu.cn}

\author{Chaodong Zheng}
\orcid{}
\affiliation{Nanjing University}
\email{chaodong@nju.edu.cn}

\begin{abstract}
Consider a single-hop, multi-channel, synchronous radio network in which a source node needs to disseminate a message to all other $n-1$ nodes. An adversary called Eve, which captures environmental noise and potentially malicious interference, aims to disrupt this process via jamming. Assume sending, listening, or jamming on one channel for one time slot costs unit energy. The question is, if Eve spends $T$ units of energy on jamming, can we devise broadcast algorithms in which each node's cost is $o(T)$? Previous results show such \emph{resource competitive} algorithms do exist in the single-channel setting: each node can receive the message within $\tilde{O}(T+n)$ time slots while spending only $\tilde{O}(\sqrt{T/n}+1)$ energy.

In this paper, we show that when Eve is oblivious, the existence of multiple channels allows even \emph{faster} message dissemination, while preserving resource competitiveness. Specifically, we have identified an efficient ``epidemic broadcast'' scheme in the multi-channel setting that is robust again jamming. Extending this scheme leads to a randomized algorithm called \MultiCast which uses $n/2$ channels, and accomplishes broadcast in $\tilde{O}(T/n+1)$ time slots while costing each node only $\tilde{O}(\sqrt{T/n}+1)$ energy. When the value of $n$ is unknown, we further propose \MultiCastAdv, in which each node's running time is $\tilde{O}(T/(n^{1-2\alpha})+n^{2\alpha})$, and each node's cost is $\tilde{O}(\sqrt{T/(n^{1-2\alpha})}+n^{2\alpha})$. Here, $0<\alpha<1/4$ is a tunable parameter affecting the constant hiding behind the big-$O$ notation.

To handle the issue of limited channel availability, we have also devised variants for both \MultiCast and \MultiCastAdv that can work in networks in which only $C$ channels are available, for any $C\geq 1$. These variants remain to be resource competitive, and have (near) optimal time complexity in many cases.
\end{abstract}

\maketitle


\section{Introduction}\label{sec-intro}

Wireless networks are becoming increasingly popular over the last two decades. However, the open and shared nature of wireless medium makes these networks particularly vulnerable to various unintentional or even malicious interference. For example, microwave ovens could affect nearby devices which operate at the 2.4GHz band, and previous work has shown dedicated jammers are able to ``shutdown'' 802.11 networks easily~\cite{gummadi07}.

\bigskip\noindent\textbf{Resource competitive algorithms.} Over the years, researchers have proposed various solutions to mitigate these environmental and/or malicious interference (see related work section for more details), among which a recent and particularly interesting one is \emph{resource competitive} algorithms~\cite{king11,gilbert12,gilbert14,bender15,king18}.

Wireless devices are often battery powered, and sending/listening usually dominate the energy usage of these devices (e.g., sensors). On the other hand, there is a cost associated with causing interference as well. These observations motivate the notion of resource competitive algorithms: assume sending, listening, or jamming on one channel for one time slot costs one unit of energy,\footnote{In reality, the energy expenditure for sending, listening, and jamming might differ, but they are often in the same order. Our assumption of unit cost is standard and consistent with existing work (e.g., \cite{king11,gilbert12,gilbert14}). Moreover, allowing costs of different actions to be different constants will not affect the correctness of our results.} and assume there is an adversary Eve that is willing to spend $T$ units of energy to disrupt communication via jamming, can we devise algorithms that achieve certain goal, while maintaining each node's energy cost to be $o(T)$? Such algorithms would allow honest nodes to quickly ``bankrupt'' jammers, implying communication cannot be efficiently blocked.

Perhaps surprisingly, such algorithms do exist, even when Eve is adaptive. Specifically, King et al.~\cite{king18} show that in the 1-to-1 communication scenario, honest nodes only need to spend $O(\sqrt{T}+1)$ energy and can terminate in $O(T^2)$ time, in expectation. As for 1-to-$n$ broadcast, Gilbert et al.~\cite{gilbert14} show that all honest nodes can receive the message in $\tilde{O}(T+n)$ time, while spending only $\tilde{O}(\sqrt{T/n}+1)$ energy, with high probability.\footnote{Throughout the paper, we use $\tilde{O}$ to hide poly-log factors in $n$ and/or $T$; and we say an event happens ``\emph{with high probability (w.h.p.)}'' if it happens with probability at least $1-1/n^{\gamma}$, for some tunable constant $\gamma\geq 1$.} We note that all these results are obtained in the single-channel setting.

Modern radio devices are often able to access multiple channels (e.g., Bluetooth defines 79 channels, and 802.11ac defines over 100 channels); and government authorities are keeping releasing frequency bands for public use (e.g., TV white space spectrum~\cite{reardon10}). These trends raise an interesting question: will multiple channels allow \emph{faster} robust communication? Particularly, will resource competitive broadcast become faster in multi-channel radio networks?

In this paper, we answer the above question affirmatively, when Eve is an \emph{oblivious} adversary. That is, Eve knows the algorithm to be executed, and can pursue \emph{arbitrary} strategy, but she cannot observe algorithm execution and adjust her strategy accordingly. Admittedly, removing adaptivity simplifies analysis; but the core challenge of designing resource competitive algorithms---handling arbitrary adversary strategy---remains. We now detail our findings.

\bigskip\noindent\textbf{Epidemic broadcast and the \MultiCast algorithm.} To reduce the time complexity of broadcast, the most natural approach is to disseminate the message \emph{in parallel}. In distributed computing, this scheme is usually called ``epidemic broadcast'', as it allows the number of informed nodes (i.e., nodes which know the message) to grow exponentially, much like how a biological virus spreads. In a multi-channel radio network, message dissemination can happen on different channels in parallel: in each time slot, let each node independently choose a random channel, then let informed nodes broadcast and uninformed nodes listen. For each channel, so long as there is a single broadcaster and some listener, a message transmission succeeds. Note that this ``multi-channel epidemic broadcast'' scheme is also resource competitive against jamming: Eve cannot stop the number of informed nodes from increasing exponentially, unless she jams more than some constant fraction of all channels. For example, if $\Theta(n)$ channels are used, then in each time slot, cost of each node is only $O(1)$, but Eve has to spend $\Omega(n)$ energy to effectively disrupt broadcast.

\MultiCastCore is a simple and direct application of the above scheme. It is a randomized broadcast algorithm which uses $n/2$ channels, and ensures the runtime and energy consumption of each node is $\tilde{O}(T/n+1)$, with high probability. Unfortunately, \MultiCastCore needs $T$ as an input parameter, which is undesirable. Roughly speaking, this is because \MultiCastCore contains multiple identical iterations. To enforce correctness, the error probability of each iteration needs to be $O(1/T^{\Theta(1)})$. As a result, $T$ must be known in advance so as to set the iteration length properly.

To resolve this issue, we increase iteration length gradually as execution proceeds. With this modification, the error probability of an iteration naturally decreases as execution continues. We then apply another important adjustment: the broadcasting/listening probabilities of nodes also decrease as execution proceeds. Such ``sparse'' epidemic broadcast further improves competitiveness.

These changes lead to \MultiCast, an algorithm which also uses $n/2$ channels, and ensures the following properties with high probability: (a) all nodes receive the message and terminate within $\tilde{O}(T/n+1)$ time slots; and (b) the cost of each node is $\tilde{O}(\sqrt{T/n}+1)$. Notice, the energy expenditure matches the currently best known algorithm~\cite{gilbert14}, while the time consumption is significantly shorter. Thus, having multiple channels indeed allows faster message dissemination, without sacrificing resource competitiveness.

\bigskip\noindent\textbf{The \MultiCastAdv algorithm.} For ad hoc wireless networks, even knowledge of $n$ might be absent. Our third algorithm, called \MultiCastAdv, deals with such scenario. It contains multiple epochs, each of which contains multiple phases. In each phase, it makes a guess regarding the value of $n$, sets the number of channels to be used accordingly, and then executes an epidemic broadcast.

Several new challenges arise when designing \MultiCastAdv. First, we need to be more careful with setting broadcasting/listening probabilities. In particular, epidemic broadcast only works well in certain ``good'' phases. If honest nodes' energy expenditure in ``bad'' phases dominates the overall cost, resource competitiveness cannot be guaranteed, as Eve only needs to jam ``good'' phases.

The second challenge concerns with \emph{termination detection}, a particularly challenging and tricky issue in designing resource competitive algorithms. In epidemic broadcast, nodes need to stay around to help others even if they are informed. In \MultiCastCore and \MultiCast, a node halts when it hears few noisy slots over a time period, as this indicates low level of jamming, which in turn suggests epidemic broadcast must have succeeded. In \MultiCastAdv, by contrast, the number of used channels changes during execution, and low level of noise could also occur during ``bad'' phases in which epidemic broadcast fails. Thus, a more reliable criteria is hearing the message sufficiently often. However, this condition alone could result in other critical errors. Inspired by \cite{gilbert14}, we eventually adopt a two-stage termination mechanism so as to ensure both correctness and resource competitiveness.

Last but not least, we need to estimate $n$ to correctly identify ``good'' phases, otherwise resource competitiveness could again be broken. In our case, this goal cannot be achieved easily by observing simple metrics such as ``fraction of silent/message/noisy slots''. Instead, we craft a novel criterion via combining several metrics.

In the end, \MultiCastAdv guarantees the following properties with high probability: (a) all nodes receive the message and terminate within $\tilde{O}(T/(n^{1-2\alpha})+n^{2\alpha})$ slots; and (b) the cost of each node is $\tilde{O}(\sqrt{T/(n^{1-2\alpha})}+n^{2\alpha})$. Here, $0<\alpha<1/4$ is a tunable parameter. Notice, ideally $\alpha$ should be as small as possible, but the constant hiding behind the big-$O$ notation increases as $\alpha$ approaches zero.

\bigskip\noindent\textbf{Handling limited channel availability.} \MultiCast uses $n/2$ channels, and the number of channels required by \MultiCastAdv increases as the protocol proceeds. In real world, however, wireless spectrum is a scarce resource. To address this problem, we first describe a simple mechanism that can simulate \MultiCast when only $C\leq n/2$ channels are available. The resulting \MultiCastC algorithm accomplishes broadcast within $\tilde{O}(T/C+n/C)$ time slots, and each node's cost remain to be $\tilde{O}(\sqrt{T/n}+1)$. We then present \MultiCastAdvC, a variant of \MultiCastAdv that handles limited channel availability by a simple ``cut-off'' mechanism. In \MultiCastAdvC, the runtime of each node is dominated by the term $\tilde{O}\left(T/((\min\{C,n\})^{1-2\alpha})\right)$, and the cost of each node is dominated by the term $\tilde{O}\left(\sqrt{T/((\min\{C,n\})^{1-2\alpha})}\right)$, for any value of $C$.\footnote{Recall \MultiCast uses $n/2$ channels and needs $n$ as an input parameter, thus \MultiCastC only needs to handle the scenario in which $C\leq n/2$. In the case of \MultiCastAdvC, however, it needs to work for any value of $C$.} Notice, when $C=O(n)$, the runtime of \MultiCastC and \MultiCastAdvC are near optimal, as Eve can jam all $C$ channels for $T/C$ slots, blocking any communication.

\section{Related Work}\label{sec-related-work}

Broadcasting in radio networks is a well-studied problem. For example, early work from Bar-Yehuda et al.~\cite{bar-yehuda92} propose the widely used \textsc{Decay} procedure, and Alon et al.~\cite{alon91} establish the well-known $\Omega(\lg^2{n})$ time complexity lower bound. More recent results provide faster algorithms, some notable ones include \cite{kowalski05,ghaffari15,haeupler16,czumaj17}. Energy efficient broadcast has also been studied previously (see, e.g., \cite{berenbrink07,gasieniec07,chang18}), but often without considering malicious jamming.

When external interference is present, communication in wireless networks becomes more challenging. System researchers have proposed many physical layer and/or MAC layer approaches to detect jamming (e.g., \cite{xu05}), evade jamming (e.g., \cite{ma05,xu06b,navda07}), or even compete with jammers (e.g., \cite{xu06a}). The theory community, on the other hand, tend to focus on potential limitations the adversary may face and then develop corresponding countermeasures. For example, in an interesting series of papers, Awerbuch et al.~\cite{awerbuch08} and Richa et al.~\cite{richa10,richa11} study how to thwart adaptive jammers and reactive jammers in single-channel wireless networks by limiting the jamming rate.
When multiple channels are available, the restriction on the adversary is usually put on the number of channels she can disrupt simultaneously. Under such framework, Meier et al.~\cite{meier09} study how to solve the neighbor discovery problem efficiently. In another series of papers~\cite{dolev08,gilbert09,dolev10}, Dolev et al.\ and Gilbert et al.\ try to address gossiping in jamming-prone multi-channel wireless networks. Specifically, they have devised both deterministic and randomized algorithms.

These results provide valuable insights and interesting solutions to many important problems in challenging attack models. However, many of them would require nodes to incur significant cost. Moreover, the strategy Eve may employ is still limited: either she cannot jam continuously, or she cannot jam all channels simultaneously. Having observed this, Bender et al.~\cite{bender15} formalize and propose the notion of resource competitive analysis. In this framework, Eve can pursue arbitrary strategy, and the only limitation is her energy budget. This model better captures reality in many cases, but also poses new challenges to algorithm designers.

To the best of our knowledge, King, Saia, and Young~\cite{king11} propose the first resource competitive algorithm, in the context of 1-to-1 communication. (I.e., Alice wants to send a message to Bob.) Specifically, the devised Las Vegas algorithm ensures message delivery so long as Bob is correct. Moreover, cost of Alice and Bob is only $O(T^{0.62}+1)$ in expectation, while the adversary's expenditure is $T$. (The recent journal version~\cite{king18} provides a revised presentation, and serves as an excellent mini survey on resource competitive algorithms.) In \cite{gilbert12}, Gilbert and Young study 1-to-$n$ broadcast in which some nodes are Byzantine and controlled by an adversary. They devise a Monte Carlo resource competitive algorithm which ensures the message is delivered to most (but not all) correct nodes, with high probability. The work that is most closely related to ours is from Gilbert et al.~\cite{gilbert14}. In that paper, the problem in concern is again 1-to-$n$ broadcast, but all nodes are correct, and a single jamming adversary possessing $T$ energy is present. The authors propose a Monte Carlo algorithm which ensures all nodes can get the message in $\tilde{O}(T+n)$ time, while spending only $\tilde{O}(\sqrt{T/n}+1)$ energy, with high probability. The authors have also proved lower bounds to demonstrate nodes' energy expenditure are near optimal.

These pioneering works on resource competitive algorithms all focus on single-channel radio networks, and often assume the adversary is adaptive. In this paper, we consider multi-channel radio networks, assuming a weaker oblivious adversary is present.
We see our work as a first step in understanding how multiple channels affect the performance of resource competitive algorithms.

Lastly, we note that resource competitive analysis might also be relevant in other settings. Interested readers can refer to related work sections in \cite{bender15,king18} for more details.

\section{Model and Problem}\label{sec-model}

We consider a synchronous, single-hop, multi-channel radio network. In the network, there are $n$ \emph{honest nodes} (or simply \emph{nodes}), and one \emph{adversary} called Eve. For the ease of presentation, we assume $n$ is some power of two. Also, we often assume the number of available channels is unlimited when describing and analyzing our algorithms. Towards the end of the paper, we will discuss how to implement our algorithms with limited channel availability.

Time is divided into discrete \emph{slots}, and we assume all nodes start execution at the beginning of the same slot. In each time slot, each node can access one channel, and then choose to broadcast or listen on that channel, or remain idle. A node \emph{cannot} broadcast and listen simultaneously. For each node, the energy cost for broadcast and listen is one unit per slot, while idling incurs no cost. We assume all nodes can independently generate random bits.

In each slot, Eve can jam as many channels as she like, but she cannot spoof messages from honest nodes. Jamming one channel for one slot costs one unit of energy, and the total energy budget of Eve is $T$. We assume Eve is \emph{oblivious}: she knows the algorithm to be executed, and can pursue arbitrary strategy, but she cannot observe algorithm execution and adjust her strategy accordingly. She also does not know honest nodes' random bits.

For each channel, in each time slot, if no node broadcasts on this channel and Eve does not jam this channel, then every node listening on this channel will detect silence; if exactly one node $u$ broadcasts on this channel and Eve does not jam this channel, then every node listening on this channel will receive the message sent by $u$; lastly, if at least two nodes broadcast on this channel or Eve jams this channel (or both), then every node listening on this channel will hear noise. We assume nodes cannot distinguish between collision and jamming. Moreover, broadcasting nodes get no feedback regarding channel status.

The problem we are interested in is \emph{broadcast}, in which a single \emph{source node} wants to disseminate a message $m$ to all other nodes. During algorithm execution, we usually call a node \emph{informed} if it already knows $m$, otherwise the node is \emph{uninformed}.

We are interested in devising \emph{resource competitive} algorithms for the broadcast problem. Such algorithms should enforce two properties: (a) whenever there is no jamming or the jamming is weak, broadcast will soon succeed; and (b) during the time period of strong jamming, the energy cost of every honest node is much less than that of the adversary's. More formally, we adopt the following definition introduced by Bender et al.~\cite{bender15}.

\begin{definition}\label{def-resource-competitive-alg}
Consider an execution $\pi$ of an algorithm $\mathcal{A}$. Let $\texttt{cost}(\pi,u)$ denote the energy cost of an honest node $u$, and $T(\pi)$ denote the adversary's cost. We say $\mathcal{A}$ is \emph{$(\rho,\tau)$-resource competitive} if $\max_{u}\{\texttt{cost}(\pi, u)\}\leq\rho(T(\pi))+\tau$ for any execution $\pi$.
\end{definition}

In the above definition, $\rho$ is a function of Eve's cost and potentially other parameters. It captures the energy expenditure of an honest node when jamming is present, thus we want $\rho(T)\in o(T)$. On the other hand, $\tau$ denotes the unavoidable cost for accomplishing broadcast (in $\mathcal{A}$), even when Eve is absent (i.e., $T=0$). $\tau$ can be a function of parameters such as $n$, but it is not a function of $T$. Resource competitive algorithms usually focus on optimizing $\rho$.

\section{The \MultiCastCore Algorithm}\label{sec-multicastcore}

In this section, we present a simple algorithm called \MultiCastCore. It demonstrates some central ideas through which we achieve fast and resource competitiveness broadcast. The drawback of \MultiCastCore, however, is that it requires knowledge of $n$ and $T$.

As mentioned previously, a key integrant of \MultiCastCore is the epidemic broadcast scheme, in the multi-channel radio network setting. Nonetheless, to apply this scheme, we still need to decide how many channels to use in each time slot. Too few channels hurts parallelism, but too many channels may result in nodes not being able to meet each other sufficiently often, again reducing efficiency. As it turns out, $n/2$ channels is a good choice. To see this, let $t$ be the number of informed nodes. When $t\leq n/2$, we expect to see at least one uninformed node, and at most one informed node (thus informed nodes will not collide with each other), on each channel. Therefore, even if Eve jams constant fraction of all channels, the number of informed nodes can still increase by some constant factor in each slot. Once $t$ reaches $n/2$, we expect to see at least one, and at most some small constant number of informed nodes on each channel (thus contention among informed nodes is limited). Again, even if Eve jams constant fraction of all channels, all remaining uninformed nodes can quickly get the message.

Another key integrant of \MultiCastCore is termination detection: nodes need to decide when to halt. Clearly, a node should not terminate too late, as this increases energy expenditure. On the other hand, in epidemic broadcast, a node also should not terminate too early: parallel message dissemination cannot be achieved if informed nodes stop too soon. In \MultiCastCore, we use the number of noisy slots as the criteria: a node halts iff it does not observe too many noisy slots during one run of epidemic broadcast. The intuition is simple: if there are few noisy slots, then jamming from Eve cannot be strong, thus broadcast must have succeeded. Notice, as we elaborate later, using number of noisy slots is also critical for ensuring resource competitiveness.

We now describe \MultiCastCore in detail. \MultiCastCore contains multiple \emph{iterations}, each of which contains one run of epidemic broadcast. More specifically, let $\hat{T}=\max\{T,n\}$, then each iteration contains $R=a\lg{\hat{T}}$ slots, where $a$ is some sufficiently large constant. For each slot within an iteration, for each node $u$, it will go to a channel chosen uniformly at random from the range $[1,n/2]$. If $u$ is uninformed, then it will listen with probability $1/64$, and remain idle otherwise.\footnote{Throughout the paper, we do not attempt to optimize the constants, instead they are chosen for the ease of analysis and/or presentation.} If $u$ is informed, then it will listen or broadcast the message each with probability $1/64$, and remain idle otherwise. Node $u$ will also record the number of noisy slots it has observed within the current iteration. By the end of an iteration, $u$ will terminate if this number is less than $R/128$. The complete pseudocode of \MultiCastCore is shown in Figure \ref{fig-alg-multicastcore}.

\begin{figure}[!t]
\hrule
\vspace{1ex}\textbf{Pseudocode of \MultiCastCore executed at node $u$:}\vspace{1ex}
\hrule
\begin{normalsize}
\begin{algorithmic}[1]
\State $status\gets un$.
\If {(node $u$ is the source node)} $status\gets in$. \EndIf
\For {(each iteration)}
	\State $N_n\gets 0$.
	\For {(each slot from $1$ to $R=a\lg{\hat{T}}$)}
		\State $ch\gets\texttt{rnd}(1,n/2)$, $coin\gets\texttt{rnd}(1,64)$.
		\If {($coin==1$)}
			\State $feedback\gets\texttt{listen}(ch)$.
			\If {($feedback$ is noise)}
				\State $N_n\gets N_n+1$.
			\ElsIf {($feedback$ contains the message $m$)}
				\State $status\gets in$.
			\EndIf
		\ElsIf {($coin==2$ \textbf{and} $status==in$)}
			\State $\texttt{broadcast}(ch,m)$.
		\EndIf
	\EndFor
	\If {($N_n<R/128$)} \textbf{halt}. \EndIf
\EndFor
\end{algorithmic}
\end{normalsize}
\hrule\vspace{-1ex}
\caption{Pseudocode of the \MultiCastCore algorithm. (In the pseudocode: (a) $\texttt{rnd}(x,y)$ returns a uniformly chosen random integer from $[x,y]$; (b) $\texttt{listen}(ch)$ instructs the node to listen on channel $ch$ and returns channel status; and (c) $\texttt{broadcast}(ch,m)$ instructs the node to broadcast $m$ on $ch$.)}\label{fig-alg-multicastcore}
\vspace{-3ex}
\end{figure}

We now proceed to analyze \MultiCastCore. To begin with, we formally prove the effectiveness of the epidemic broadcast scheme, even when considerably amount of jamming is present.

\begin{lemma}\label{lemma-multicastcore-epidemic-bcst}
If an iteration satisfies: (a) all nodes are active at the beginning of it; and (b) for at least ten percent of all slots, Eve jams at most ninety percent of all $n/2$ channels. Then, by the end of this iteration, all nodes will be informed, with probability $1-1/\hat{T}^{\Omega(1)}$.
\end{lemma}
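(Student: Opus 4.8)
The plan is to analyze the "good" slots—those in which Eve jams at most $90\%$ of the $n/2$ channels—and show that each such slot multiplies the number of informed nodes by a constant factor (until saturation), so that the $\geq R/10$ good slots guaranteed by hypothesis (b) are more than enough to inform everyone. I would split the argument into two regimes based on the current number of informed nodes $t$: the \emph{growth regime} where $t \leq n/2$, and the \emph{cleanup regime} where $t > n/2$. In the growth regime, I would show that after a good slot the informed count increases multiplicatively; in the cleanup regime, I would show that all remaining uninformed nodes get the message within a few good slots.

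\medskip

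\noindent\textbf{Growth regime ($t \leq n/2$).} Fix a good slot with $t$ informed nodes, and condition on the at least $(n/2)/10$ unjammed channels. For a fixed unjammed channel $ch$, a \emph{successful transmission} occurs if exactly one informed node broadcasts on $ch$ (probability governed by each informed node independently choosing $ch$ with probability $2/n$ and broadcasting with probability $1/64$) and at least one uninformed node both chooses $ch$ and listens (probability governed by $2/n$ and $1/64$). Since $t \leq n/2$, the expected number of informed broadcasters on a channel is at most a small constant, so with constant probability a channel carries exactly one broadcaster; and since there are $\geq n/2$ uninformed nodes spread over $n/2$ channels, each unjammed channel has a listening uninformed node with constant probability. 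I would therefore lower-bound the expected number of newly informed nodes in the slot by $\Omega(t)$ (each informed broadcaster informs, in expectation, a constant number of fresh listeners on its channel, and distinct broadcasters that land alone on distinct channels contribute disjointly). A Chernoff bound over the (nearly) independent per-channel successes then gives that the informed count grows by a constant factor $1+\Omega(1)$ with probability $1 - 1/\hat T^{\Omega(1)}$. The mild dependence across channels (a node picks only one channel) is handled by either a balls-in-bins/bounded-difference concentration argument or by restricting attention to a set of channels and using negative association; this is where I would be most careful.

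\medskip

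\noindent\textbf{Cleanup regime ($t > n/2$).} Here I would fix an arbitrary still-uninformed node $v$ and show it becomes informed in a single good slot with constant probability. Node $v$ listens with probability $1/64$ and lands on a uniformly random channel; on that channel the expected number of informed broadcasters is at most a small constant (since $t \leq n$ and each broadcasts with probability $1/64$), so with constant probability exactly one informed node broadcasts there while $v$ listens, delivering the message to $v$. Thus each good slot informs each remaining uninformed node independently-ish with constant probability, so after $O(\lg \hat T)$ good slots all are informed w.h.p. (a union bound over the at most $n \leq \hat T$ nodes, with per-node failure probability $\hat T^{-\Omega(1)}$).

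\medskip

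\noindent\textbf{Stitching and the main obstacle.} Combining the two regimes: starting from $t \geq 1$, the growth regime needs $O(\lg n)$ good slots to reach $t > n/2$ (geometric growth), and the cleanup regime needs $O(\lg \hat T)$ further good slots; since hypothesis (b) furnishes $\geq R/10 = \Omega(\lg \hat T)$ good slots and $R = a\lg\hat T$ with $a$ a sufficiently large constant, there are enough good slots to drive the process to completion, with a union bound over all slots keeping the total failure probability at $1/\hat T^{\Omega(1)}$. The \emph{main obstacle} is the concentration step in the growth regime: the per-slot increase in informed nodes is a sum of weakly dependent indicators (channel choices couple the nodes, and Eve may adversarially jam the specific $10\%$ of channels that would otherwise be most productive). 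I would address this by proving the constant-factor growth bound holds conditioned on the worst-case choice of which $90\%$ of channels Eve jams—establishing that enough productive capacity survives on the remaining unjammed channels regardless of Eve's oblivious choice—and then invoking a concentration inequality robust to the bounded dependence induced by each node selecting a single channel.
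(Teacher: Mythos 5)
Your overall plan---split into a growth regime ($t\le n/2$) and a cleanup regime, handle the dependence induced by single-channel choices via a bounded-differences/balls-in-bins concentration argument, and finish the cleanup with a per-node constant success probability amplified over $O(\lg\hat{T})$ good slots plus a union bound over nodes---is essentially the paper's proof (the paper's per-slot concentration is exactly the method of bounded differences, in Claim~\ref{claim-multicastcore-epidemic-bcst-1}).

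There is, however, one step that fails as stated. In the growth regime you claim that a single good slot increases the informed count by a factor $1+\Omega(1)$ with probability $1-1/\hat{T}^{\Omega(1)}$, and your stitching (a union bound over the $O(\lg n)$ growth slots) leans on this. That per-slot bound cannot hold when $t$ is small: with $t=1$ the lone informed node broadcasts with probability only $1/64$, so with probability $63/64$ nothing is transmitted at all in that slot, and no concentration inequality can rescue this. More generally, the bounded-differences bound gives a deviation probability of only $\exp(-\Theta(t))$, which is a constant rather than $\hat{T}^{-\Omega(1)}$ whenever $t=O(\lg\hat{T})$. The paper accordingly asserts only that each good slot yields $\Theta(t)$ good channels (hence a constant-factor increase) with \emph{constant} probability, and then amplifies across slots: since, conditioned on the history, each of the $\Omega(\lg\hat{T})$ good slots in the growth regime succeeds with probability at least a constant, a Chernoff bound over slots shows that at least $\lg n$ of them succeed, which suffices to reach $t\ge n/2$, with failure probability $1/\hat{T}^{\Omega(1)}$. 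Replacing your per-slot union bound with this across-slot amplification repairs the argument; your cleanup regime and the final union bound over the $O(n)\le\hat{T}$ nodes are fine as written.
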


\begin{proof}
If an iteration satisfies (a) and (b), then in at least $2b\lg{\hat{T}}$ slots, at least $n/20$ channels are not jammed, where $b$ is some sufficiently large constant. Let $\mathcal{R}_1$ denote the collection of the first half of these $2b\lg{\hat{T}}$ slots, and let $\mathcal{R}_2$ denote the second half.

We first focus on $\mathcal{R}_1$. Consider an arbitrary slot in $\mathcal{R}_1$, let $t$ denote the number of informed nodes at the beginning of this slot. Define a channel to be \emph{good} if it is not jammed by Eve, and there is exactly one informed node broadcasting on that channel. Via the method of bounded differences (see, e.g., \cite{dubhashi09}), we can prove:\footnote{Due to space constraint, omitted proofs are provided in Appendix \ref{sec-appendix-proof}.}

\begin{claim}\label{claim-multicastcore-epidemic-bcst-1}
For any fixed slot in $\mathcal{R}_1$, if $t\leq n/2$, then the number of good channels is at least $\Theta(t)$, with at least constant probability.
\end{claim}

The above claim suggests, for each slot in $\mathcal{R}_1$, if $t\leq n/2$ at the beginning of it, then $t$ will increase by at least some constant factor by the end of this slot, with at least some constant probability. Since whether such exponential increase will happen are independent among the slots in $\mathcal{R}_1$, and since $|\mathcal{R}_1|=b\lg{\hat{T}}$, by a standard Chernoff bound (see, e.g., \cite{dubhashi09}), we know by the end of $\mathcal{R}_1$, the number of informed nodes will reach $n/2$, with probability $1-1/\hat{T}^{\Omega(1)}$.

We now turn attention to the slots in $\mathcal{R}_2$. Assume indeed the number of informed nodes is at least $n/2$ for all slots in $\mathcal{R}_2$. By an analysis similar to the proof of Claim \ref{claim-multicastcore-epidemic-bcst-1}, we know: for any fixed slot in $\mathcal{R}_2$, if $t\geq n/2$, then the number of good channels is $\Theta(n)$, with at least constant probability. Now, consider a slot in $\mathcal{R}_2$, and fix a node $u$ that is still uninformed at the beginning of this slot. By the end of this slot, $u$ will be informed with at least constant probability. Since $|\mathcal{R}_2|=b\lg{\hat{T}}$, we know $u$ will be informed by the end of $\mathcal{R}_2$, with probability $1-1/\hat{T}^{\Omega(1)}$. Take a union bound over all the $O(n)$ uninformed nodes, we know all nodes will be informed by the end of $\mathcal{R}_2$, with probability $1-1/\hat{T}^{\Omega(1)}$.
\end{proof}

Our next lemma demonstrates the correctness of the termination mechanism: when a node decides to terminate, all nodes must have been informed. That is, no node will halt before a successful epidemic broadcast is executed.

\begin{lemma}\label{lemma-multicastcore-halt-imply-inform}
Fix an iteration and a node $u$, assume all nodes are active at the beginning of this iteration. With probability $1-1/\hat{T}^{\Omega(1)}$, the following two events cannot happen simultaneously: (a) $u$ terminates by the end of this iteration; and (b) some node is still uninformed by the end of this iteration.
\end{lemma}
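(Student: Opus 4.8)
The plan is to exploit the fact that Eve is oblivious, so her entire jamming schedule for this iteration is fixed in advance and independent of the nodes' random choices. This lets me split the analysis into two \emph{deterministic} cases according to whether condition (b) of Lemma~\ref{lemma-multicastcore-epidemic-bcst} holds for this iteration, and to bound $\Pr[(a)\wedge(b)]$ separately in each case. Since the case distinction depends only on Eve's fixed strategy, exactly one case occurs, so it suffices to establish the bound $1/\hat{T}^{\Omega(1)}$ in both.

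First I would handle the case in which condition (b) of Lemma~\ref{lemma-multicastcore-epidemic-bcst} is satisfied, i.e., in at least ten percent of the slots Eve jams at most ninety percent of the $n/2$ channels. Since all nodes are active at the start of the iteration by assumption (matching condition (a)), Lemma~\ref{lemma-multicastcore-epidemic-bcst} applies directly and guarantees that all nodes are informed by the end of the iteration with probability $1-1/\hat{T}^{\Omega(1)}$. Hence event (b) of the present lemma---some node remaining uninformed---occurs with probability at most $1/\hat{T}^{\Omega(1)}$, and so does the conjunction $(a)\wedge(b)$.

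The substantive case is when condition (b) of Lemma~\ref{lemma-multicastcore-epidemic-bcst} fails: in more than ninety percent of the $R$ slots, Eve jams more than ninety percent of all channels. Here I would argue directly that $u$ sees so many noisy slots that it cannot terminate. Fix any such heavily-jammed slot. In it, $u$ listens with probability $1/64$, and conditioned on listening its uniformly chosen channel is jammed---hence returns noise by the model---with probability more than $0.9$; so $u$ records a noisy slot with probability more than $0.9/64$. Because $u$'s channel and coin are drawn freshly each slot while Eve's pattern is fixed, the indicators of ``$u$ records noise'' are mutually independent across the more than $0.9R$ heavily-jammed slots, giving $E[N_n]>0.9R\cdot(0.9/64)=0.81R/64$. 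Since $R=a\lg{\hat{T}}$ for a large constant $a$, a lower-tail Chernoff bound then yields $N_n\geq R/128$---i.e., $u$ does \emph{not} terminate---except with probability $1/\hat{T}^{\Omega(1)}$. Thus event (a) fails w.h.p., so $(a)\wedge(b)$ again occurs with probability at most $1/\hat{T}^{\Omega(1)}$.

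I expect the main obstacle to be the constant bookkeeping in this second case: the chain of inequalities must confirm that the guaranteed expectation $0.81R/64$ strictly exceeds the termination threshold $R/128$ (it does, since $0.81/64>1/128$, a margin of roughly $1.6$), which is precisely why the listening probability $1/64$ and the halting threshold $R/128$ are calibrated as they are. One must also state the independence structure carefully so that the Chernoff bound applies to the ``$u$ records noise'' indicators restricted to the heavily-jammed slots, and absorb the mild gap between the expectation and the threshold into the $\Omega(1)$ exponent by taking $a$ large enough. Taking the worse of the two case bounds then completes the argument.
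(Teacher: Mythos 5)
Your proposal is correct and follows essentially the same route as the paper's proof: the paper also splits on the event $\mathcal{E}$ that Eve jams at most ninety percent of channels in at least ten percent of slots, invokes Lemma \ref{lemma-multicastcore-epidemic-bcst} to kill event (b) when $\mathcal{E}$ holds, and otherwise lower-bounds the count of slots in which $u$ listens on a jammed channel (expectation at least $0.81R/64 \geq R/80 > R/128$) via a Chernoff bound to kill event (a). Your observation that obliviousness makes the case split deterministic is a harmless reframing of the paper's use of total probability over $\mathcal{E}$ and $\overline{\mathcal{E}}$.
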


\begin{proof}
Let $\mathcal{E}_1$ be event (a), and $\mathcal{E}_2$ be event (b). Let $\mathcal{E}$ be the event that for at least ten percent of all slots within current iteration, Eve jams at most ninety percent of all $n/2$ channels. We know $\Pr(\mathcal{E}_1\mathcal{E}_2)=\Pr(\mathcal{E}_1\mathcal{E}_2\mathcal{E})+\Pr(\mathcal{E}_1\mathcal{E}_2\overline{\mathcal{E}})$.

Lemma \ref{lemma-multicastcore-epidemic-bcst} implies $\Pr(\mathcal{E}_1\mathcal{E}_2\mathcal{E})\leq\Pr(\mathcal{E}_2\mathcal{E})\leq\Pr(\mathcal{E}_2|\mathcal{E})\leq 1/\hat{T}^{\Omega(1)}$.

On the other hand, we bound $\Pr(\mathcal{E}_1\mathcal{E}_2\overline{\mathcal{E}})$ via $\Pr(\mathcal{E}_1|\overline{\mathcal{E}})$. If $\overline{\mathcal{E}}$ occurs, then Eve jams at least ninety percent of all $n/2$ channels for at least ninety percent of all slots within current iteration. Let $\mathcal{R}_1$ denote the set of slots in which Eve jams at least ninety percent of all $n/2$ channels. To make the number of noisy slots observed by $u$ as small as possible, without loss of generality, assume Eve jams exactly ninety percent of all $n/2$ channels for every slot in $\mathcal{R}_1$. Let $X_i$ be an indicator random variable taking value one iff $u$ listens on a channel jammed by Eve in the $i$\textsuperscript{th} slot in $\mathcal{R}_1$, where $1\leq i\leq|\mathcal{R}_1|$. Let $X=\sum_{i=1}^{|\mathcal{R}_1|}X_i$. We know $\mathbb{E}[X_i]=\Pr(X_i=1)=(9/10)\cdot(1/64)$, thus $\mathbb{E}[X]=|\mathcal{R}_1|\cdot\mathbb{E}[X_i]\geq (9R/10)\cdot\mathbb{E}[X_i]\geq R/80$. Now, notice that $\{X_1,X_2,\cdots,X_{|\mathcal{R}_1|}\}$ is a set of independent random variables, as nodes' channel choices and listening probabilities are not affected by Eve within an iteration, and we have assumed Eve jams exactly ninety percent of all $n/2$ channels for every slot in $\mathcal{R}_1$. Thus, by a Chernoff bound, $\Pr(X< R/128)$ is exponentially small in $R=a\lg{\hat{T}}$, implying $\Pr(\mathcal{E}_1~|~\overline{\mathcal{E}})\leq 1/\hat{T}^{\Omega(1)}$.
\end{proof}

We are not done yet. In particular, it could be the case that by the end of an iteration, all nodes are informed, but only \emph{some} of them decide to halt. Will remaining nodes ever terminate, and will \MultiCastCore remain competitive during this process?

The answer to both questions are yes, and this highlights another advantage of using fraction of noisy slots as the criteria for termination: \emph{decrease in the number of active nodes does not affect the ability of halting}, as less active nodes means less collisions, thus less noisy slots. As a result, if Eve wants to stop remaining active nodes from halting, she is again forced to spend much energy on jamming. The following lemma captures this observation:

\begin{lemma}\label{lemma-multicastcore-competitive-halt}
Fix an iteration and a node $u$, assume $u$ is active at the beginning of this iteration. If for at least eighty percent of all slots within this iteration, Eve jams at most twenty percent of all $n/2$ channels, then $u$ hears less than $R/128$ noisy slots within this iteration, with probability at least $1-1/\hat{T}^{\Omega(1)}$.
\end{lemma}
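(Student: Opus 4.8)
The plan is to show that the expected number of noisy slots heard by $u$ is bounded away from $R/128$ by a constant fraction of $R$, and then invoke concentration. First I would observe that, regardless of whether $u$ is informed or uninformed, the pseudocode makes $u$ listen with probability exactly $1/64$ in every slot (via $coin==1$), so the listening decision is independent across slots and independent of everything else. A slot is noisy for $u$ precisely when $u$ listens and the channel it picks is either jammed by Eve or carries a collision (two or more broadcasters). I would decompose accordingly: for slot $i$ let $A_i$ be the indicator that $u$ listens and its channel is jammed, and $B_i$ the indicator that $u$ listens and at least two other nodes broadcast on its channel. Then the count of noisy slots $X$ satisfies $X \le X_A + X_B$ where $X_A=\sum_i A_i$ and $X_B=\sum_i B_i$, so it suffices to bound each piece.

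For the expectations, since Eve is oblivious I would condition on her realized jamming pattern, which is then deterministic and, by hypothesis, jams at most a $1/5$ fraction of the $n/2$ channels in at least $0.8R$ slots. Writing $f_i$ for the jammed fraction in slot $i$, the worst case puts $0.2R$ slots fully jammed and the rest at fraction $1/5$, so $\sum_i f_i \le 0.2R + 0.8R\cdot(1/5)=0.36R$ and hence $\mathbb{E}[X_A]=(1/64)\sum_i f_i \le 0.36R/64 < R/160$. For $X_B$, the key point is a uniform collision bound: in any slot, conditioned on any history, each of the at most $n$ nodes broadcasts on $u$'s chosen channel independently with probability $(1/64)(2/n)=1/(32n)$, so the probability that at least two of them collide there is at most $\binom{n}{2}(1/(32n))^2 \le 1/2048$; thus $\Pr(B_i=1\mid\text{history})\le (1/64)(1/2048)$ and $\mathbb{E}[X_B]$ is negligible. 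Crucially $R/128 = R/160 + R/640$, so the event $X\ge R/128$ forces either $X_A\ge R/160$ or $X_B\ge R/640$, and both thresholds exceed the corresponding expectation by $\Omega(R)$.

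For concentration I would handle the two sums separately. The variables $A_1,\dots,A_R$ depend only on $u$'s per-slot coin/channel choice and the (now deterministic) jamming, hence are independent, and a standard additive Chernoff bound gives $\Pr(X_A\ge R/160)\le \exp(-\Omega(R))$. Since $R=a\lg\hat{T}$ with $a$ a sufficiently large constant, this is $1/\hat{T}^{\Omega(1)}$. For $X_B$ the subtlety---and the part I expect to be the main obstacle---is that whether a collision occurs in slot $i$ depends on which nodes are informed, which is a function of the history of slots $1,\dots,i-1$; the $B_i$ are therefore \emph{not} independent. I would resolve this using the uniform conditional bound above: since $\Pr(B_i=1\mid B_1,\dots,B_{i-1})\le (1/64)(1/2048)$ for every history, $X_B$ is stochastically dominated by a $\mathrm{Binomial}(R,(1/64)(1/2048))$ variable, so a Chernoff bound for such conditionally bounded sequences yields $\Pr(X_B\ge R/640)\le\exp(-\Omega(R))=1/\hat{T}^{\Omega(1)}$. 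A union bound over the two events then gives $\Pr(X\ge R/128)\le 1/\hat{T}^{\Omega(1)}$, as required.
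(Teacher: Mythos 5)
Your proof is correct, and it reaches the same conclusion by the same overall strategy (bound the expected noise count strictly below the threshold, then apply a Chernoff-type bound with $R=a\lg\hat{T}$), but the decomposition and the treatment of dependence genuinely differ from the paper's. The paper partitions the \emph{slots} by jamming intensity: in the heavily-jammed slots it simply counts every listen as noise, and in the lightly-jammed slots it bounds the per-slot noise probability by $(1/64)(0.2+0.8\Pr(I_j\geq 2))$ with $\Pr(I_j\geq 2)<0.05$ computed from explicit lower bounds on $\Pr(I_j=0)$ and $\Pr(I_j=1)$; crucially, it makes the indicators independent by assuming ``without loss of generality'' that all nodes are informed from the start of the iteration, which is what maximizes collisions. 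You instead partition the \emph{causes} of noise (jamming versus collision), split the threshold as $R/128=R/160+R/640$, and handle the evolving informed-set directly via the uniform conditional bound $\binom{n}{2}(1/(32n))^2\leq 1/2048$ together with stochastic domination by a binomial. Your route avoids the paper's implicit monotonicity/coupling step behind its ``WLOG all informed'' assumption and makes the non-independence of the collision indicators explicit, at the cost of a slightly more elaborate union of two tail events; both yield the required $1-1/\hat{T}^{\Omega(1)}$ guarantee.
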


\begin{proof}
Let $\mathcal{R}_1$ be the set of slots in which Eve jams more than twenty percent of all $n/2$ channels, and let $\mathcal{R}_2$ be the remaining slots. Let $R_1=|\mathcal{R}_1|$, $R_2=|\mathcal{R}_2|$, we know $R_1\leq 0.2R$, $R_2\geq 0.8R$. To make the number of noisy slots observed by $u$ as large as possible, without loss of generality, assume: (a) Eve jams all channels for all slots in $\mathcal{R}_1$ and Eve jams exactly twenty percent of all channels for all slots in $\mathcal{R}_2$; (b) all nodes are active and informed at the beginning of this iteration; and (c) $R_1=0.2R$ and $R_2=0.8R$.

Let $X_i$ be an indicator random variable taking value one iff $u$ decides to listen in the $i$\textsuperscript{th} slot in $\mathcal{R}_1$, where $1\leq i\leq R_1$. We know $\mathbb{E}[X_i]=\Pr(X_i=1)=1/64$.

Let $Y_j$ be an indicator random variable taking value one iff $u$ hears noise in the $j$\textsuperscript{th} slot in $\mathcal{R}_2$, where $1\leq j\leq R_2$. Further define $J_j$ be an indicator random variable taking value one iff the channel chosen by $u$ in the $j$\textsuperscript{th} slot in $\mathcal{R}_2$ is jammed by Eve, and define $I_j$ be a random variable denoting the number of broadcasting informed nodes on the channel chosen by $u$ in the $j$\textsuperscript{th} slot in $\mathcal{R}_2$. We know $\Pr(Y_j=1)=(1/64)\cdot\left(\Pr(J_j=1)+\Pr(I_j\geq 2)\cdot\Pr(J_j=0)\right)=(1/64)\cdot\left(0.2+0.8\cdot\Pr(I_j\geq 2)\right)$. To upper bound $\Pr(I_j\geq 2)=1-\Pr(I_j=0)-\Pr(I_j=1)$, we give lower bounds for $\Pr(I_j=0)$ and $\Pr(I_j=1)$. In particular, $\Pr(I_j=0)=\left(1-(1/64)\cdot(2/n)\right)^{n-1}\geq \exp\left(-2\cdot(n-1)\cdot(1/64)\cdot(2/n)\right)\geq e^{-1/16}$; and $\Pr(I_j=1)=(n-1)\cdot(1/64)\cdot(2/n)\cdot\left(1-(1/64)\cdot(2/n)\right)^{n-2}\geq (n/2)\cdot(1/64)\cdot(2/n)\cdot e^{-1/16}=e^{-1/16}/64$. Hence, $\Pr(I_j\geq 2)\leq 1-(1+1/64)\cdot e^{-1/16}<0.05$, implying $\Pr(Y_j=1)<(1/64)\cdot0.24$.

Let $Z=\sum_{i=1}^{R_1}X_i + \sum_{j=1}^{R_2}Y_j$, we know $\mathbb{E}[Z]< R_1/64+(R_2/64)\cdot 0.24<R/160$. Notice that $\{X_1,X_2,\cdots,X_{R_1},Y_1,Y_2,\cdots,Y_{R_2}\}$ is a set of independent random variables. By a Chernoff bound, $\Pr(Z\geq R/128)$ is exponentially small in $R=a\lg{\hat{T}}$. Since $Z$ is an upper bound on the number of noisy slots $u$ will observe within current iteration, the lemma follows.
\end{proof}

We are now ready to state and prove the guarantees enforced by \MultiCastCore.

\begin{theorem}\label{thm-multicastcore}
When $n/2$ channels are available, the \MultiCastCore algorithm guarantees the following properties with high probability: (a) all nodes receive the message; and (b) the cost and active period of each node is $O(T/n+\max\{\lg{T},\lg{n}\})$.
\end{theorem}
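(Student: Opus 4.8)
The plan is to charge every iteration in which a node remains active to Eve's energy expenditure, except for one final iteration. The key definition is to call an iteration \emph{Eve-heavy} if the hypothesis of Lemma~\ref{lemma-multicastcore-competitive-halt} fails, i.e., if in more than twenty percent of its slots Eve jams more than twenty percent of all $n/2$ channels. In such an iteration Eve spends more than $0.2R\cdot 0.2\cdot(n/2)=Rn/50$ energy, so the total number of Eve-heavy iterations is deterministically at most $K=50T/(Rn)$, regardless of her (oblivious) strategy. I would also record the simple observation that a \emph{non}-Eve-heavy iteration automatically satisfies hypothesis (b) of Lemma~\ref{lemma-multicastcore-epidemic-bcst}, since ``at most twenty percent jammed'' implies ``at most ninety percent jammed'' and eighty percent exceeds ten percent.

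Next I would fix a deterministic window of the first $M=K+1$ iterations. Since at most $K$ of them are Eve-heavy, at least one is non-Eve-heavy; let $j\le M$ be the first such iteration. I would then take a union bound of the failure events of Lemmas~\ref{lemma-multicastcore-epidemic-bcst}, \ref{lemma-multicastcore-halt-imply-inform}, and~\ref{lemma-multicastcore-competitive-halt}, applied to every node over all $M$ iterations. The total number of events is $O(nM)=O(T/R+n)=O(T+n)\le\hat{T}^{O(1)}$, and each fails with probability $1/\hat{T}^{\Omega(1)}$; choosing the constant $a$ large enough makes the overall failure probability at most $1/n^{\gamma}$, so I may condition on all of these lemmas holding simultaneously.

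Conditioned on this, I would analyze iteration $j$ by cases on whether any node has halted before it. If some node has halted in an earlier iteration, then Lemma~\ref{lemma-multicastcore-halt-imply-inform} (applied to that iteration) forces all nodes to have been informed at that point, so every node is informed; and every node still active at the start of $j$ hears fewer than $R/128$ noisy slots by Lemma~\ref{lemma-multicastcore-competitive-halt} and therefore halts at the end of $j$. If instead no node has halted before $j$, then all nodes are active at the start of $j$, so Lemma~\ref{lemma-multicastcore-epidemic-bcst} (whose hypotheses now both hold) informs everyone by the end of $j$, while Lemma~\ref{lemma-multicastcore-competitive-halt} again makes every node halt at the end of $j$. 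Either way, by the end of iteration $j\le M$ all nodes are informed and have terminated, establishing part~(a).

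For part~(b), the active period of any node is at most $jR\le MR=(50T/(Rn)+1)R=50T/n+a\lg\hat{T}=O(T/n+\max\{\lg T,\lg n\})$, using $\lg\hat{T}=\max\{\lg T,\lg n\}$. Since a node spends at most one unit of energy in each slot of its active period, its cost is bounded by the same quantity. I expect the main subtlety to be the management of the ``all nodes active'' precondition of Lemma~\ref{lemma-multicastcore-epidemic-bcst} across the window: a node could in principle halt during an Eve-heavy iteration, which is exactly why I route the argument through Lemma~\ref{lemma-multicastcore-halt-imply-inform} and through a fixed deterministic window $M$, so that the union bound is over polynomially many events and the case analysis at iteration $j$ covers both the ``early halt'' and ``all still active'' scenarios.
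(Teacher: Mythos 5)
Your proposal is correct and follows essentially the same route as the paper's proof: both charge heavily-jammed iterations to Eve's budget of $T$ (at a rate of $0.02nR$ energy per such iteration), locate a lightly-jammed iteration among the first $O(T/(nR))+1$ iterations, apply Lemma~\ref{lemma-multicastcore-competitive-halt} there to force all remaining active nodes to halt, and invoke Lemma~\ref{lemma-multicastcore-halt-imply-inform} to guarantee nodes are informed when they halt. Your write-up is merely somewhat more explicit about the union bound over the deterministic window and the case analysis on whether some node halted before the light iteration.
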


\begin{proof}
Assume the last active node terminates by the end of iteration $l$. Let $L=\lfloor T/(0.02nR)\rfloor+1$. Let $A_u$ be the event that node $u$ is still active by the end of iteration $L$. Due to union bound, $\Pr(l>L)\leq \sum_{u}\Pr(A_u)=n\cdot\Pr(A_u)$. Notice that the total energy budge of Eve is $T$, thus among the first $L$ iterations, there must exist an iteration in which Eve spends less than $0.02nR$ energy. Assume iteration $\hat{i}$ is the first such iteration, and let $B$ denote the event that Eve spends less than $0.02nR$ energy in iteration $\hat{i}$, we have $\Pr(A_u)=\Pr(A_u\wedge B)$. But according to Lemma \ref{lemma-multicastcore-competitive-halt}, $\Pr(A_u\wedge B)\leq\Pr(A_u~|~B)\leq 1/\hat{T}^{\Omega(1)}$. (Specifically, if Eve spends less than $0.02nR$ energy in an iteration, then it must be the case that Eve jams at most twenty percent of all channels for at least eighty percent of all slots in that iteration.) Therefore, $\Pr(l>L)\leq 1/\hat{T}^{\Omega(1)}$. Since each iteration is of length $R=\Theta(\lg{\hat{T}})$, and since with probability $1-1/\hat{T}^{\Omega(1)}$ the energy cost of any fixed node in any fixed iteration is $\Theta(R)=\Theta(\lg{\hat{T}})$, we know all nodes will terminate within $O(T/n+\lg{\hat{T}})$ slots, and the cost of each node is also $O(T/n+\lg{\hat{T}})$, w.h.p.

Lastly, we note that Lemma \ref{lemma-multicastcore-halt-imply-inform} implies every node is informed when it decides to halt, w.h.p.
\end{proof}

Before proceeding to the next section, we note that \MultiCastCore possesses a nice property: once Eve stops disrupting protocol execution, all remaining active nodes will learn message $m$ (if they are still uninformed) and then halt, within one iteration. That is, within $\Theta(\lg{\hat{T}})=\Theta(\max\{\lg{T},\lg{n}\})$ slots. Existing resource competitive algorithms, including the other ones presented in this paper, usually demand at least $\tilde{\Theta}(T)$ slots for such scenario.

\section{The \MultiCast Algorithm}\label{sec-multicast}

In this section, we build upon \MultiCastCore and present \MultiCast, an algorithm which provides better resource competitiveness, and does not need $T$ as an input parameter.

To better understand \MultiCast, we first briefly discuss why \MultiCastCore needs to know $T$. In \MultiCastCore, all iterations are \emph{identical}. Since our algorithm is randomized and has small chance to fail, if $T$ is not available, the error probabilities of all iterations can only be expressed as a function of $n$. Thus, if $T$ is sufficiently large and the algorithm is executed long enough, bad events will eventually occur, with non-trivial probability. One simple solution to this problem is to let the error probability of each iteration also depends on $T$, resulting $T$ must be known at prior.

Inspired by previous work~\cite{king11,gilbert12,gilbert14,king18}, \MultiCast takes a more clever approach to solve this problem: make iterations \emph{different}. More specifically, we let the length of each iteration grow as the iteration number increases. In this way, later iterations are less likely to fail. Eventually, \MultiCast ensures the chance of error throughout the entire execution is bounded by some function of $n$, regardless of the length of the execution.

To make the algorithm more competitive, we have also decreased nodes' broadcasting/listening probabilities. Via a ``birthday paradox'' style analysis, we show this ``sparse'' epidemic broadcast is also correct. On the other hand, however, Eve still needs to jam at least constant fraction of all channels for at least constant fraction of all slots to effectively disrupt message dissemination.

With the above discussions in mind, we now describe \MultiCast in detail. The algorithm contains multiple iterations of different lengths. Specifically, the length of iteration $i$ is $R_i=a\cdot i\cdot 4^i\cdot \lg^2{n}$, for some sufficiently large constant $a$. For the ease of analysis, initially we set $i=6$. In each slot in iteration $i$, for each node $u$, it will go to a channel chosen uniformly at random from $[1,n/2]$. If $u$ is uninformed, then it will listen with probability $p_i=1/2^i$, and remain idle otherwise. If $u$ is informed, then it will listen or broadcast the message each with probability $p_i=1/2^i$, and remain idle otherwise. Node $u$ will also record the number of noisy slots it has observed within the current iteration. By the end of iteration $i$, node $u$ will terminate if this number is less than $R_ip_i/2$. The complete pseudocode of \MultiCast is shown in Figure \ref{fig-alg-multicast}.

\begin{figure}[!t]
\hrule
\vspace{1ex}\textbf{Pseudocode of \MultiCast executed at node $u$:}\vspace{1ex}
\hrule
\begin{normalsize}
\begin{algorithmic}[1]
\State $status\gets un$.
\If {(node $u$ is the source node)} $status\gets in$. \EndIf
\For {(each iteration $i\geq 6$)}
	\State $N_n\gets 0$.
	\For {(each slot from $1$ to $R_i=ai\cdot 4^i\cdot \lg^2{n}$)}
		\State $ch\gets\texttt{rnd}(1,n/2)$, $coin\gets\texttt{rnd}(1,2^i)$.
		\If {($coin==1$)}
			\State $feedback\gets\texttt{listen}(ch)$.
			\If {($feedback$ is noise)}
				\State $N_n\gets N_n+1$.
			\ElsIf {($feedback$ contains the message $m$)}
				\State $status\gets in$.
			\EndIf
		\ElsIf {($coin==2$ \textbf{and} $status==in$)}
			\State $\texttt{broadcast}(ch,m)$.
		\EndIf
	\EndFor
	\If {($N_n<R_i/2^{i+1}$)} \textbf{halt}. \EndIf
\EndFor
\end{algorithmic}
\end{normalsize}
\hrule\vspace{-1ex}
\caption{Pseudocode of the \MultiCast algorithm.}\label{fig-alg-multicast}
\vspace{-3ex}
\end{figure}

To analyze \MultiCast, we follow the same path as in the analysis of \MultiCastCore.

Once again, we begin by proving the effectiveness of the epidemic broadcast scheme, when jamming from Eve is limited. Notice, in the \MultiCast setting, we can no longer expect the number of informed nodes to grow exponentially in each time slot, as the broadcasting and listening probabilities of honest nodes are reduced. Instead, for the purpose of analysis, we divide an iteration into multiple segments each of length $\Theta(i\cdot 4^i\cdot\lg{n})$, and show that: (a) when the number of informed nodes is less than $n/2$, after each segment, each informed node will independently inform at least one uninformed node, so that the number of informed nodes will at least double; and (b) once the number of informed nodes reach $n/2$, all remaining uninformed nodes will be informed within a few segments. More precisely, we claim:

\begin{lemma}\label{lemma-multicast-epidemic-bcst}
If iteration $i$ satisfies: (a) all nodes are active at the beginning of it; and (b) for at least ten percent of all slots, Eve jams at most ninety percent of all $n/2$ channels. Then, by the end of this iteration, all nodes will be informed with probability at least $1-1/n^{5i}$.
\end{lemma}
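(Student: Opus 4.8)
The plan is to mirror the two-phase structure used for Lemma~\ref{lemma-multicastcore-epidemic-bcst}, but to amortize progress over \emph{segments} of slots rather than single slots, since the reduced broadcasting/listening probability $p_i=1/2^i$ makes the per-slot progress of any one node negligible. First I would use hypothesis (b) to extract at least $0.1R_i$ \emph{lightly jammed} slots, each having at least $n/20$ unjammed channels, and group them into $\Theta(\lg n)$ consecutive segments, each consisting of $\Theta(i\cdot 4^i\cdot\lg n)$ lightly jammed slots; the constant $a$ in $R_i$ is chosen large enough that the number of segments comfortably exceeds $\lg n$ plus a small additive slack. The goal is then: while the number of informed nodes $t$ satisfies $t\le n/2$, each segment at least doubles $t$ (Phase 1, where even constant-factor growth would suffice given the slack in the segment count); and once $t\ge n/2$, a constant number of further segments informs everyone (Phase 2).

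For Phase 1, fix an informed node $v$ and a lightly jammed slot with $t\le n/2$. I would lower-bound the probability that $v$ informs \emph{some} uninformed node in that slot by the product of four factors: $v$ broadcasts ($p_i$); its uniformly chosen channel $c$ is unjammed ($\ge 1/10$); no other informed node broadcasts on $c$ (a constant, since each of the at most $n/2$ other informed nodes collides with probability $\frac{2}{n}p_i$, giving survival probability at least $e^{-p_i}$); and at least one of the $\ge n/2$ uninformed nodes listens on $c$ (which is $1-e^{-(n-t)\frac{2}{n}p_i}\ge \Theta(1/2^i)$ by a birthday-paradox-style estimate). The product is $\Theta(1/4^i)$, so over the $\Theta(i\cdot 4^i\cdot\lg n)$ slots of a segment the probability that $v$ informs nobody is at most $(1-\Theta(1/4^i))^{\Theta(i\cdot 4^i\cdot\lg n)}=n^{-\Theta(i)}$.

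The main obstacle is that ``each informed node informs at least one uninformed node'' does not by itself yield doubling, because different informed nodes might inform the \emph{same} uninformed node; a naive fix that assigns a specific target to each $v$ fails, since requiring a \emph{named} uninformed node to select $v$'s channel costs an extra $\frac{2}{n}$ factor and drives the per-segment success probability below one. I would resolve this by counting \emph{distinct} newly informed nodes directly: condition on all broadcasters' random choices (treating only nodes informed at the segment's start as broadcasters, which only underestimates progress), which fixes, in each slot, the set of unjammed channels carrying exactly one informed broadcaster. Conditioned on this, the uninformed nodes choose channels and listen \emph{independently}, so for each uninformed $w$ the event that $w$ becomes informed during the segment is independent across $w$; a Chernoff bound then shows that the number of distinct newly informed nodes is at least $t$ (indeed $\Theta(t\cdot i\cdot\lg n)$ in expectation), except with probability $n^{-\Theta(i)}$. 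One first establishes, via the method of bounded differences as in Claim~\ref{claim-multicastcore-epidemic-bcst-1}, that enough single-informed-broadcaster unjammed channels arise across the segment for this conditional bound to have the right magnitude. Phase 2 is analogous but simpler: for a fixed uninformed $w$ and $t\ge n/2$, the per-slot probability that $w$ hears a lone informed broadcaster on an unjammed channel is again $\Theta(1/4^i)$, so $w$ is informed within one segment with probability $1-n^{-\Theta(i)}$.

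Finally I would assemble the pieces: Phase 1 needs at most $\lg n$ successful segments to bring $t$ from $1$ to $n/2$, and Phase 2 needs only $O(1)$ more, all of which fit within the $\Theta(\lg n)$ available segments once $a$ is large enough. Taking a union bound over the $O(\lg n)$ segments and the at most $n$ nodes, each failing with probability $n^{-\Theta(i)}$, the total failure probability is at most $n^{-5i}$ for a suitable choice of the constant $a$, which is exactly the claimed bound.
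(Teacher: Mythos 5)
Your overall architecture matches the paper's: extract the $\Theta(i\cdot 4^i\cdot\lg^2 n)$ lightly jammed slots, split them into $\Theta(\lg n)$ segments, double the informed set per segment until it reaches $n/2$, finish with a per-node argument, and union-bound everything down to $n^{-5i}$. Your per-slot probability bound $\Theta(1/4^i)$ (broadcast $\times$ unjammed $\times$ no collision $\times$ some listener) is exactly the paper's $p^2/(20e^{2p})$. However, the step you flag as the ``main obstacle'' is not an obstacle, and the machinery you build to get around it contains a real gap. If ``$v$ informs a previously uninformed node $w$ in slot $s$'' means (as in the paper's calculation, which counts the $n-t$ nodes uninformed \emph{at the start of slot $s$} as the potential listeners) that $w$ was uninformed at the beginning of $s$ and receives $m$ from $v$ in $s$, then the attribution $v\mapsto w$ is automatically injective: $w$ can be uninformed at the start of at most one slot in which it receives $m$ (it flips to informed immediately), and in that slot it hears exactly one lone broadcaster. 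So distinct initially-informed nodes witness distinct newly informed nodes, and ``every initially-informed node informs someone'' \emph{does} yield doubling. This is precisely the paper's argument; no counting of distinct recipients via Chernoff is needed.

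The replacement you propose has a genuine flaw. You condition on the random choices of the nodes informed at the segment's start, ``treating only [them] as broadcasters, which only underestimates progress.'' It does not: in the real protocol a node informed mid-segment begins broadcasting with probability $p_i$, and such a broadcast can collide with an original broadcaster's lone transmission, destroying a channel that your modified process counts as good. So the modified process is not a stochastic lower bound on the real one, and without it the claimed conditional independence of the events ``$w$ becomes informed'' across uninformed $w$ also fails, because which channels are good in slot $s$ of the real execution depends on which initially-uninformed nodes got informed in earlier slots. One could patch this (e.g., by bounding the collision probability contributed by newly informed nodes and absorbing it into the constants), but as written the domination claim is false and it is load-bearing for your Chernoff step. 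The paper's route --- bound, conditioned on any history with $t\le n/2$, the per-slot probability that a fixed initially-informed $u$ delivers $m$ to some currently-uninformed node (the $(1-p/c)^{t-1}$ factor already accounts for collisions from \emph{all} currently informed nodes), then use the injectivity above --- avoids the issue entirely. Your Phase 2 and the final union bound are fine and coincide with the paper's.
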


Now that the effectiveness of the ``sparse'' epidemic broadcast scheme is established, by an analysis similar to the proof of Lemma \ref{lemma-multicastcore-halt-imply-inform}, we show \MultiCast correctly ensures no node will terminate before all nodes are informed:

\begin{lemma}\label{lemma-multicast-halt-imply-inform}
Fix an iteration $i$ and a node $u$, assume all nodes are active at the beginning of this iteration. With probability at least $1-1/n^{5i}$, the following two events cannot happen simultaneously: (a) $u$ terminates by the end of this iteration; and (b) some node is still uninformed by the end of this iteration.
\end{lemma}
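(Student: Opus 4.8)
The plan is to mirror the proof of Lemma~\ref{lemma-multicastcore-halt-imply-inform}, substituting the sparse epidemic broadcast guarantee of Lemma~\ref{lemma-multicast-epidemic-bcst} for Lemma~\ref{lemma-multicastcore-epidemic-bcst} and using the \MultiCast parameters: iteration length $R_i$, listening/broadcasting probability $p_i=1/2^i$, and termination threshold $R_i/2^{i+1}=R_ip_i/2$. Let $\mathcal{E}_1$ be event~(a) (node $u$ records fewer than $R_i/2^{i+1}$ noisy slots and halts), let $\mathcal{E}_2$ be event~(b) (some node is still uninformed at the end of iteration $i$), and let $\mathcal{E}$ be the event that Eve jams at most ninety percent of the $n/2$ channels in at least ten percent of the $R_i$ slots. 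I would then write
\[
\Pr(\mathcal{E}_1\mathcal{E}_2)=\Pr(\mathcal{E}_1\mathcal{E}_2\mathcal{E})+\Pr(\mathcal{E}_1\mathcal{E}_2\overline{\mathcal{E}})
\]
and bound each summand separately.

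The first summand is controlled by the epidemic broadcast. Event $\mathcal{E}$ is precisely hypothesis~(b) of Lemma~\ref{lemma-multicast-epidemic-bcst}, and hypothesis~(a) (all nodes active at the start of iteration $i$) holds here by assumption; hence conditioned on $\mathcal{E}$ every node becomes informed except with probability at most $1/n^{5i}$, giving $\Pr(\mathcal{E}_1\mathcal{E}_2\mathcal{E})\le\Pr(\mathcal{E}_2\mid\mathcal{E})\le 1/n^{5i}$.

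The bulk of the argument is the second summand, which I bound by $\Pr(\mathcal{E}_1\mid\overline{\mathcal{E}})$. Under $\overline{\mathcal{E}}$, Eve jams at least ninety percent of the $n/2$ channels in at least ninety percent of the slots; to minimise the noise $u$ could observe, assume without loss of generality she jams exactly ninety percent in each such slot. Let $\mathcal{R}$ be this collection of slots, so $|\mathcal{R}|\ge 0.9R_i$, and for $\ell\in\mathcal{R}$ let $X_\ell$ indicate that $u$ listens on a jammed channel in slot $\ell$, so that $\Pr(X_\ell=1)=p_i\cdot(9/10)$ and $X=\sum_\ell X_\ell$ is a lower bound on $u$'s noisy-slot count $N_n$. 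Then $\mathbb{E}[X]\ge 0.9R_i\cdot(9/10)\cdot p_i=0.81\,R_ip_i$, which exceeds the termination threshold $R_ip_i/2$ by a fixed constant factor. Since Eve is oblivious, her jamming schedule is fixed independently of the nodes' coins, so the $X_\ell$ are mutually independent and a Chernoff lower-tail bound yields $\Pr(\mathcal{E}_1\mid\overline{\mathcal{E}})\le\Pr(X<R_ip_i/2)\le\exp(-\Omega(R_ip_i))$.

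The step that needs the most care is checking that this tail actually beats $1/n^{5i}$, and this is exactly where the choice $R_i=a\cdot i\cdot 4^i\cdot\lg^2 n$ earns its keep: $R_ip_i=a\cdot i\cdot 2^i\cdot\lg^2 n$, so $\exp(-\Omega(R_ip_i))=\exp\bigl(-\Omega(a\,i\,2^i\lg^2 n)\bigr)$, and because $2^i\lg^2 n$ dominates $\ln n$ for every $i\ge 6$, a sufficiently large constant $a$ drives this below $1/n^{5i}$. Taking the constants in Lemma~\ref{lemma-multicast-epidemic-bcst} and in $R_i$ with a little slack so that each summand is at most $1/(2n^{5i})$, summing the two bounds gives $\Pr(\mathcal{E}_1\mathcal{E}_2)\le 1/n^{5i}$, proving the lemma.
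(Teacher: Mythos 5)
Your proposal is correct and follows essentially the same route as the paper's proof: the identical decomposition over the event $\mathcal{E}$, Lemma~\ref{lemma-multicast-epidemic-bcst} for the $\Pr(\mathcal{E}_2\mid\mathcal{E})$ term, and a Chernoff lower-tail bound on the independent listen-on-jammed-channel indicators (with expectation $0.81R_ip_i$ versus threshold $R_ip_i/2$) for the $\Pr(\mathcal{E}_1\mid\overline{\mathcal{E}})$ term. The only cosmetic difference is that you make explicit the final check that $\exp(-\Omega(R_ip_i))$ beats $n^{-5i}$, which the paper leaves implicit.
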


Next, by an analysis similar to the proof of Lemma \ref{lemma-multicastcore-competitive-halt}, we claim once all nodes are informed and start halting, Eve cannot disrupt this process unless she spends a lot of energy.

\begin{lemma}\label{lemma-multicast-competitive-halt}
Fix an iteration $i$ and a node $u$, assume $u$ is active at the beginning of iteration $i$. If for at least eighty percent of all slots within this iteration, Eve jams at most twenty percent of all $n/2$ channels, then $u$ will terminate by the end of iteration $i$, with probability at least $1-1/n^{\Theta(i\cdot 2^i\cdot\lg{n})}$.
\end{lemma}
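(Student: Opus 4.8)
The plan is to mirror the proof of Lemma~\ref{lemma-multicastcore-competitive-halt} almost verbatim, adjusting only for the sparse listening/broadcasting probability $p_i=1/2^i$ and the new termination threshold $R_i/2^{i+1}=R_ip_i/2$. First I would partition the slots of iteration $i$ into $\mathcal{R}_1$, the slots in which Eve jams more than twenty percent of the $n/2$ channels, and $\mathcal{R}_2$, the remaining slots; the hypothesis gives $|\mathcal{R}_1|\leq 0.2R_i$ and $|\mathcal{R}_2|\geq 0.8R_i$. To upper bound the number of noisy slots $N_n$ that $u$ records, I would pass to the worst case that maximizes observed noise: Eve jams all channels during $\mathcal{R}_1$ and exactly twenty percent during $\mathcal{R}_2$, every node is active and informed (so collisions among informed broadcasters are maximized), and $|\mathcal{R}_1|=0.2R_i$, $|\mathcal{R}_2|=0.8R_i$. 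Then I would define, for each slot, an indicator dominating the event ``$u$ hears noise'': in $\mathcal{R}_1$ this is the event $X_k$ that $u$ listens, with $\mathbb{E}[X_k]=p_i$; in $\mathcal{R}_2$ it is the event $Y_j$ that $u$ listens and its chosen channel is either jammed or carries at least two informed broadcasters.

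The one genuinely new calculation is bounding $\Pr(Y_j=1)$, and this is exactly where the reduced probabilities help. Writing $J_j$ for the event that $u$'s channel is jammed and $I_j$ for the number of informed nodes broadcasting on it, I have $\Pr(Y_j=1)=p_i\bigl(\Pr(J_j=1)+\Pr(J_j=0)\Pr(I_j\geq 2)\bigr)=p_i\bigl(0.2+0.8\Pr(I_j\geq 2)\bigr)$. Since $I_j\sim\mathrm{Binomial}(n-1,\,2p_i/n)$, a union-bound (second-moment) estimate gives $\Pr(I_j\geq 2)\leq\binom{n-1}{2}(2p_i/n)^2\leq 2p_i^2$, which for $i\geq 6$ is tiny compared with the constant $0.05$ appearing in the \MultiCastCore bound. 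Hence $\Pr(Y_j=1)\leq 0.2p_i(1+o(1))$, and with $Z=\sum_k X_k+\sum_j Y_j\geq N_n$ I would obtain $\mathbb{E}[Z]\leq 0.2R_ip_i+0.8R_i\cdot 0.2p_i(1+o(1))<0.4\,R_ip_i$. Crucially this sits a constant factor below the threshold $R_ip_i/2$, leaving room for a concentration argument.

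Finally I would invoke a Chernoff bound, using that $\{X_k\}\cup\{Y_j\}$ are independent: because Eve is oblivious, her per-slot jamming pattern within the iteration is fixed independently of the nodes' channel and coin choices, so each indicator depends only on $u$'s own coins and the (independent) coins of the other nodes. Since $\mathbb{E}[Z]=\Theta(R_ip_i)=\Theta(a\cdot i\cdot 2^i\cdot\lg^2 n)$ and the threshold exceeds $\mathbb{E}[Z]$ by a constant factor, $\Pr\bigl(Z\geq R_ip_i/2\bigr)\leq\exp\bigl(-\Omega(i\cdot 2^i\cdot\lg^2 n)\bigr)=1/n^{\Theta(i\cdot 2^i\cdot\lg n)}$, matching the claimed bound. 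Whenever $Z<R_ip_i/2$ we have $N_n\leq Z<R_i/2^{i+1}$, so $u$ halts at the end of iteration $i$, completing the argument. The part needing the most care is bookkeeping rather than ideas: I must check that the collision term $\Pr(I_j\geq 2)=O(p_i^2)$ really is negligible for every $i\geq 6$, so that the constant multiplicative gap between $\mathbb{E}[Z]$ and the threshold survives, and that the Chernoff exponent $\Theta(R_ip_i)=\Theta(i2^i\lg^2 n)$ converts into the stated failure probability $n^{-\Theta(i2^i\lg n)}$ via $\exp(-\Theta(\lg^2 n))=n^{-\Theta(\lg n)}$.
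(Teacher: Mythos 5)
Your proposal is correct and follows essentially the same route as the paper's proof: the same partition into $\mathcal{R}_1$/$\mathcal{R}_2$, the same worst-case assumptions, the same dominating indicators $X_k$, $Y_j$, the bound $\mathbb{E}[Z]<0.4R_ip_i$ against the threshold $R_ip_i/2$, and the same Chernoff conversion of $\exp(-\Omega(R_ip_i))$ into $n^{-\Theta(i\cdot 2^i\cdot\lg n)}$. The only (harmless) deviation is that you bound $\Pr(I_j\geq 2)$ by a pairwise union bound giving $O(p_i^2)$, whereas the paper lower-bounds $\Pr(I_j=0)$ to get the cruder constant $1-e^{-1/16}<0.061$; both suffice for the required constant gap.
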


Finally, we state the guarantees enforced by \MultiCast in the following theorem. Due to space constraint, we only sketch the proof here. (The complete proof is provided in Appendix \ref{sec-appendix-proof}.)

\begin{theorem}\label{thm-multicast}
When $n/2$ channels are available, the \MultiCast algorithm guarantees the following properties with high probability: (a) all nodes receive the message and terminate within $O(T/n+\lg^2{n})$ slots; and (b) the cost of each node is $O(\sqrt{T/n}\cdot\sqrt{\lg{T}}\cdot\lg{n}+\lg^2{n})$.
\end{theorem}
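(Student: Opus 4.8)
The plan is to mirror the structure of the proof of Theorem~\ref{thm-multicastcore}, but to accommodate the two features new to \MultiCast: the iteration lengths $R_i = a\,i\,4^i\lg^2 n$ grow geometrically, and the per-slot activity probability $p_i = 1/2^i$ shrinks. Both parts of the theorem hinge on identifying a single ``unaffordable'' iteration. Let $i^*$ be the smallest index $i \geq 6$ with $0.02\,n R_i > T$; equivalently, $R_{i^*} > 50T/n$. The point is that to violate the hypothesis of Lemma~\ref{lemma-multicast-competitive-halt} in iteration $i$ (i.e., to jam more than twenty percent of the $n/2$ channels in more than twenty percent of the slots) Eve must spend more than $0.02\,n R_i$ energy; since her entire budget is only $T$, she simply cannot do this in iteration $i^*$.

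For part (a) I would first bound the termination time. Because Eve cannot disrupt iteration $i^*$, Lemma~\ref{lemma-multicast-competitive-halt} guarantees that each node still active at the start of iteration $i^*$ halts by its end, except with probability $1/n^{\Theta(i^* 2^{i^*}\lg n)}$; a union bound over all $n$ nodes shows every node halts by the end of iteration $i^*$, w.h.p. The total running time is then $\sum_{i=6}^{i^*} R_i$, which is $O(R_{i^*})$ since the $R_i$ grow geometrically. A short computation using $R_{i^*-1} \leq 50T/n$ (valid when $i^* \geq 7$) together with the bounded ratio $R_{i^*}/R_{i^*-1} = O(1)$ gives $R_{i^*} = O(T/n)$, while in the small-budget case $i^* = 6$ yields $R_6 = \Theta(\lg^2 n)$; together these give the claimed $O(T/n + \lg^2 n)$ bound. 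Correctness---that every node is informed---then follows from Lemma~\ref{lemma-multicast-halt-imply-inform}: since some node does halt, a union bound over iterations and nodes, using the rapidly decaying $1/n^{5i}$ failure probabilities, rules out the possibility that a halt occurs while an uninformed node remains.

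For part (b) I would bound the energy cost. In iteration $i$ a node is active in any given slot with probability $\Theta(p_i) = \Theta(1/2^i)$, so its expected cost in that iteration is $\Theta(R_i / 2^i) = \Theta(i\, 2^i \lg^2 n)$. Since this expectation is $\Omega(\lg^2 n)$, a Chernoff bound makes the cost concentrate to $O(i\,2^i\lg^2 n)$ except with probability super-polynomially small in $n$, so after a union bound over all nodes and all iterations $6 \leq i \leq i^*$ the total cost of every node is $\sum_{i=6}^{i^*} O(i\,2^i\lg^2 n) = O(i^* 2^{i^*}\lg^2 n)$, again by geometric domination. The crux is then to translate this into the stated bound: from $R_{i^*} = \Theta(T/n)$ I get $4^{i^*} = \Theta\!\left(T/(n\,i^*\lg^2 n)\right)$, hence $2^{i^*} = \Theta\!\left(\sqrt{T/n}\,/(\sqrt{i^*}\,\lg n)\right)$, and substituting gives a total cost of $\Theta(\sqrt{i^*}\cdot\sqrt{T/n}\cdot\lg n)$. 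Finally $i^* = O(\lg(T/n) + 1) = O(\lg T)$, so $\sqrt{i^*} = O(\sqrt{\lg T})$, producing $O(\sqrt{T/n}\cdot\sqrt{\lg T}\cdot\lg n)$; the $+\lg^2 n$ term is contributed by the $i^* = 6$ case.

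I expect the cost calculation in part (b) to be the main obstacle---not because any single step is hard, but because it is where the resource-competitive savings actually appear and the bookkeeping must be exact. The essential phenomenon is the square-root gap between the iteration length ($\propto 4^i$) and the per-iteration cost ($\propto 2^i$): it is exactly this that converts a $\Theta(T/n)$ running time into a $\Theta(\sqrt{T/n})$ energy cost, and getting the $\sqrt{\lg T}$ and $\lg n$ factors into the right places requires carefully tracking the constants inside $R_i$ and $p_i$ through the definition of $i^*$. A secondary point needing care is verifying that all the per-iteration, per-node failure probabilities (from Lemmas~\ref{lemma-multicast-epidemic-bcst}, \ref{lemma-multicast-halt-imply-inform}, and \ref{lemma-multicast-competitive-halt}, together with the Chernoff bound for the cost) sum to $o(1)$ once union-bounded over the $\Theta(n)$ nodes and $O(\lg T)$ relevant iterations; the geometrically decaying bounds of the form $1/n^{5i}$ make this routine, but it must be stated.
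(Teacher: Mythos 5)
Your proposal is correct and follows essentially the same route as the paper's proof: identify a critical iteration that Eve's budget cannot heavily jam (the paper takes $l+1$ where $l$ is the last heavily-jammed iteration, you take the first unaffordable iteration $i^*$; the two are equivalent budget/pigeonhole arguments yielding $R\le O(T/n)$), apply Lemma~\ref{lemma-multicast-competitive-halt} there for termination, Lemma~\ref{lemma-multicast-halt-imply-inform} with the $1/n^{5i}$ union bound for correctness, and exploit the $4^i$-versus-$2^i$ gap between iteration length and per-iteration cost to get the $\sqrt{T/n}$ energy bound. The remaining differences are presentational (the paper computes $E^2$ directly rather than substituting $2^{i^*}=\sqrt{4^{i^*}}$).
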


\begin{proof}[Proof sketch]
Let $l$ be the last iteration in which Eve jams more than twenty percent of all the $n/2$ channels for more than twenty percent of all slots, it is easy to verify the cost of Eve in iteration $l$ is at least $0.02anl\cdot 4^l\cdot \lg^2{n}$, implying $T\geq 0.02anl\cdot 4^l\cdot \lg^2{n}$.

We first analyze the energy consumption of honest nodes. Fix a node $u$. For the first $l$ iterations, we show the total energy consumption of $u$ is at most $6al\cdot 2^l\cdot \lg^2{n}$, w.h.p. As for iteration $l+1$, we show the cost of $u$ will be at most $7al\cdot 2^{l}\cdot\lg^2{n}$, w.h.p. Notice Lemma \ref{lemma-multicast-competitive-halt} implies node $u$ will halt by the end of iteration $l+1$, w.h.p. Thus, node $u$'s total cost during the entire execution is at most $13al\cdot 2^{l}\cdot\lg^2{n}$, w.h.p. Let $E$ denote the total cost of node $u$, we know $E^2\leq (169al\lg^2{n})\cdot al\cdot 4^l\cdot\lg^2{n}\leq (169al\lg^2{n})\cdot T/(0.02n)=O(l\cdot (T/n)\cdot\lg^2{n})$. Moreover, it is easy to see $l=O(\lg{T})$. Thus, $E=O(\sqrt{T/n}\cdot\sqrt{\lg{T}}\cdot\lg{n})$. Take a union bound over all $n$ nodes, we know w.h.p., the total cost of each node is $O(\sqrt{T/n}\cdot\sqrt{\lg{T}}\cdot\lg{n})$.

We then analyze how long each node will remain active. Similar to the above analysis, we first show from the start of execution to the end of iteration $l$, the total time consumption is at most $2a\lg^2{n}\cdot l\cdot 4^l=O(T/n)$. On the other hand, the length of iteration $l+1$ is also $O(T/n)$. Apply Lemma \ref{lemma-multicast-competitive-halt}, we know all nodes will terminate by the end of iteration $l+1$, w.h.p. Therefore, all nodes will terminate within $O(T/n)$ slots, w.h.p.

We still need to show each node is informed when it terminates, and this is guaranteed by Lemma \ref{lemma-multicast-halt-imply-inform}.

Finally, we note that when Eve is not present (i.e., $T=0$) or $T=o(n)$, w.h.p.\ all honest nodes will be able to terminate by the end of the first iteration. In such scenario, the time and energy cost of each node is $O(\lg^2{n})$.
\end{proof}

\section{The \MultiCastAdv Algorithm}\label{sec-multicastadv}

We now present \MultiCastAdv, a resource competitive broadcast algorithm that further eliminates the dependency on $n$. The design and analysis of \MultiCastAdv is more involved than our previous two algorithms. Therefore, in the reminder of this section, we will first discuss the design decisions we made and the mechanisms we employed when building \MultiCastAdv, then give complete description of the algorithm, and finally proceed to the analysis.

\subsection{Crafting the Algorithm}

If $n$ is unknown, how can we determine the number of channels to be used? A natural approach is to \emph{guess} $n$. Imagine a protocol containing multiple \emph{epochs}, each of which contains multiple \emph{phases}. In the $j$\textsuperscript{th} phase within the $i$\textsuperscript{th} epoch (where $0\leq j\leq i-1$), nodes execute some variant of epidemic broadcast using $2^j$ channels, assuming $n\approx 2^{j+1}$. In this way, starting from epoch $\lg{n}$, there exists at least one ``good'' phase in each epoch in which the guess is correct (specifically, $j=\lg{n}-1$), and Eve must disrupt that phase to block message propagation.

This ``epoch-phase'' structure provides a skeleton for \MultiCastAdv, but we still need to fill in the details. To begin with, we must be careful with honest nodes' energy consumption of each phase. In particular, for honest nodes, ideally the total energy consumption of an epoch should be dominated by the energy consumption of phase $\lg{n}-1$ of that epoch. Otherwise, Eve could only jam phase $\lg{n}-1$ of each epoch, resulting in poor resource competitiveness. Unfortunately, we cannot directly prioritize phase $\lg{n}-1$ since $n$ in unknown. In \MultiCastAdv, our solution is to let the first phase of each epoch dominate the total energy consumption of that epoch, and try to decrease the energy consumption of each phase gently as $j$ increases. In this way, the energy consumption of phase $\lg{n}-1$ will be at least some sufficiently large fraction of the total energy consumption of current epoch.

For honest nodes, we also need to make sure that for each fixed $j$, the energy spent in the most recent phase $j$ dominates the sum of all energy spent in phase $j$ since the start of execution. That is, for each fixed $j$, if we extract the corresponding phases from all epochs, they should look similar to an execution of \MultiCast (using $2^j$ channels). This helps handle the problem that $T$ is unknown.

The next issue is termination detection. Recall in \MultiCastCore and \MultiCast, a node halts when it hears few noisy slots within an iteration. In \MultiCastAdv, this no longer works, as Eve can jam all phases until current phase number $j\gg\lg{n}$. In that phase, the number of channels used is too large and epidemic broadcast will fail, yet nodes will hear very few noisy slots (as nodes will not choose same channel frequently). Instead, the number of messages heard is a better criteria. In particular, in each phase, all nodes follow the same protocol and act independently. Therefore, if a node $u$ hears a lot of messages in a phase, then every other node is likely to have heard the message at least once, implying broadcast has succeeded and $u$ can safely terminate.

However, this simple criteria can result in critical error. Imagine Eve adjusts her strategy so as to let some nodes terminate first. Now, remaining nodes might never be able to halt, as there are not enough active nodes to broadcast messages! To resolve this issue, we take an approach inspired by Gilbert et al.~\cite{gilbert14}. Specifically, if a node $u$ observes sufficiently many message slots and silence slots in some phase $\hat{j}$ of some epoch $\hat{i}$, it does not halt immediately. Instead, it becomes a \emph{helper} and continues execution normally. Starting from epoch $\hat{i}+\Theta(1)$, in phase $\hat{j}$ within that epoch, node $u$ will check whether the number of noisy slots is sufficiently low. If indeed noisy slots are not frequent, then $u$ will halt. This two-stage termination mechanism enforces two nice properties: (a) all nodes must be informed when some node becomes helper; and (b) all nodes must be helper when some node halts. Critically, (b) implies the termination of some nodes will not affect the ability for remaining nodes to terminate: fewer active nodes means less noise.

Unfortunately, this two-stage termination mechanism creates a new problem: nodes need to become helper when the estimate of $n$ is nearly perfect (i.e., $2^{j+1}\approx n$). Otherwise, if some node $u$ obtains helper status in a phase $\hat{j}$ that is far from $\lg{n}$, then in later epochs, Eve only needs to jam phase $\hat{j}$ to prevent $u$ from halting, resulting in poor resource competitiveness. We resolve this issue by introducing another metric that nodes need to check before becoming helpers. As we detail in the analysis, this guarantees nodes will only become helper when $j=\lg{n}-1$.

\subsection{Algorithm Description}

The \MultiCastAdv algorithm contains multiple \emph{epochs}, each of which contains multiple \emph{phases}. In particular, for every epoch $i\geq 1$, there are $i$ phases within that epoch. These phases are numbered from $0$ to $i-1$. We use the term \emph{$(i,j)$-phase} to denote phase $j$ of epoch $i$. An $(i,j)$-phase contains two \emph{steps}, each of which contains $R(i,j)=\Theta(2^{2\alpha(i-j)}\cdot i^3)$ slots. Here, $0<\alpha<1/4$ is a tunable constant. We often use $R$ as a shorthand for $R(i,j)$ when the $(i,j)$ pair is clear from context. For each phase, the first step is mainly used for message dissemination, while the second step allows nodes to adjust their status and decide whether to halt.

Prior to algorithm execution, the status of the source node is set to \texttt{informed}, and the status of all other nodes are set to \texttt{uninformed}.

We now describe each step of an $(i,j)$-phase in detail. In each slot in the first step, each node will choose a channel uniformly at random from the range $[1,2^j]$ and hop to that channel. Then, each \texttt{uninformed} node will listen with probability $p(i,j)=2^{-\alpha(i-j)}/2$. (Again, we often use $p$ as a shorthand for $p(i,j)$ when the $(i,j)$ pair is clear from context.) If an \texttt{uninformed} node hears the message $m$, it becomes \texttt{informed}. On the other hand, if a node already knows $m$, then in each slot, it will broadcast $m$ with probability $p(i,j)$.

The second step of an $(i,j)$-phase is more complicated. At the beginning of the second step, each node will set four counters to zero: $N_m$, $N'_m$, $N_n$, and $N_s$. In each slot within step two, each node will again choose a channel uniformly at random from the range $[1,2^j]$ and hop to that channel. Each node will then choose to broadcast or listen each with probability $p(i,j)$. If a node chooses to broadcast, then it will broadcast the message $m$ if it is not \texttt{uninformed}, otherwise it will broadcast a special beacon message $\pm$. On the other hand, if a node chooses to listen, then it will increase $N_n$ (respectively, $N_s$) by one if it has observed a noisy (respectively, silent) slot. The tricky part is how $N_m$ and $N'_m$ are maintained: a node will increase $N_m$ if it hears the message $m$; and a node will increase $N'_m$ if it hears the message $m$ \emph{or} the special beacon message $\pm$. Also, the status of a node will \emph{not} change in the middle of step two. Critically, even if an \texttt{uninformed} node hears the message $m$ in a slot in step two, it will not change its status to \texttt{informed}.

By the end of step two, nodes will perform three checks, so as to adjust their status and decide whether to terminate. The first check is, if a node $u$ is currently \texttt{uninformed} yet $N_m\geq 1$, then it will update its status to \texttt{informed}. That is, if $u$ is \texttt{uninformed} prior to step two, but has learned the message $m$ during step two, then it will become \texttt{informed}. The second check is, if a node $u$ is currently \texttt{informed}, $N_m\geq 1.5Rp^2$, $N_s\geq 0.9Rp$, and $N'_m\leq 2.2Rp^2$, then it will update its status to \texttt{helper}. That is, if $u$ is currently \texttt{informed} and has heard $m$ and silence sufficiently often during step two, and if $2^j$ equals $n/2$ (which, as later analysis will show, can be deduced via values of $N'_m$ and $N_s$), then $u$ will become a \texttt{helper}. Moreover, when $u$ becomes a \texttt{helper}, it will record current $(i,j)$ pair as $(\hat{i}_u,\hat{j}_u)$. The last check concerns with termination. Specifically, if a node $u$ becomes \texttt{helper} in phase $(\hat{i}_u,\hat{j}_u)$, it will consider termination in phase $\hat{j}_u$ in epochs $i\geq \hat{i}_u+2/\alpha$. For each such phase, if during step two the number of noisy slots observed by $u$ is at most $Rp/3000$ (i.e., $N_n\leq Rp/3000$), then $u$ will halt by the end of that epoch.

See Figure \ref{fig-alg-multicastadv} in Appendix \ref{sec-appendix-figure} for complete pseudocode.

\subsection{Analysis}

We divide the analysis into several parts for clarity.

\bigskip\noindent\textbf{Good $(i,j)$ phases.} In this first part, we argue that when all nodes are active, \texttt{informed} nodes can only become \texttt{helper} when $i>\lg{n}$ and $j=\lg{n}-1$. This reduces the space of $(i,j)$ pairs we need to consider, and helps ensure the competitiveness of \MultiCastAdv.

We begin by showing that nodes cannot become \texttt{helper} when $i\leq\lg{n}$, as in these epochs too few channels are used and broadcasting probabilities are too high, thus collisions among nodes stop enough messages from being heard in step two of any phase.

\begin{lemma}\label{lemma-multicastadv-ij-range-1}
With high probability, no node will become \texttt{helper} during the first $\lg{n}$ epochs.
\end{lemma}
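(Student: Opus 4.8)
The plan is to show that during the first $\lg{n}$ epochs, in step two of any $(i,j)$-phase with $i \leq \lg{n}$, an \texttt{informed} node $u$ is very unlikely to satisfy the conjunction of conditions required to become a \texttt{helper}. Since the helper check requires $N_m \geq 1.5Rp^2$, $N_s \geq 0.9Rp$, and $N'_m \leq 2.2Rp^2$ simultaneously, it suffices to show that at least one of these three inequalities fails with high probability. The natural target is the message counts: when $i \leq \lg{n}$, the number of channels $2^j \leq 2^{i-1} \leq n/2$ is small relative to the number of nodes, so many nodes broadcast on the same channel and collisions dominate. I would argue this forces $N_m$ to be too small (or, equivalently, forces $N'_m$ too large relative to the genuine-message count), breaking the helper condition.

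\textbf{The key steps}, in order, are as follows. First, I would fix an $(i,j)$-phase with $i \leq \lg{n}$ and an \texttt{informed} node $u$, and condition on $u$ listening in a given slot of step two. Because all other nodes are active and each broadcasts (the message $m$ or a beacon $\pm$) on a uniformly random channel in $[1,2^j]$ with probability $p$, the expected number of broadcasters on $u$'s chosen channel is $(n-1)\cdot p / 2^j$. Since $2^j \leq n/2$ and $p = 2^{-\alpha(i-j)}/2$, I would compute that this expected load is bounded below by a constant that grows as $i-j$ shrinks; the worst case for the argument is $j = i-1$, giving load $\Theta(n p / 2^{i-1}) = \Theta(2^{-\alpha}) \cdot (n/2^{i-1})$, which is $\Omega(1)$ since $2^{i-1}\le n/2$. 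The point is that whenever the load is $\Omega(1)$, the probability that $u$ hears a clean message (exactly one broadcaster) is bounded away from the probability of a collision, so $\Pr(\text{hear } m) \leq \Pr(N'_m\text{ increments})$ with a constant multiplicative gap between the two. Second, I would set up the indicator random variables for $u$ incrementing $N_m$ versus $N'_m$ across the $R$ slots of step two, note they are independent across slots (channel and coin choices are independent, and Eve is oblivious), and compute $\mathbb{E}[N_m]$ and $\mathbb{E}[N'_m]$. Third, I would apply a Chernoff bound to show that with probability $1 - 1/n^{\Omega(i^3)}$ (since $R = \Theta(i^3 \cdot 2^{2\alpha(i-j)})$ gives exponentially strong concentration), the ratio $N_m / N'_m$ stays strictly below the threshold ratio $1.5/2.2$ demanded by the helper conditions, so the check fails.

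\textbf{The main obstacle} I anticipate is pinning down exactly which of the three helper conditions to violate across the full range $0 \leq j \leq i-1$ with $i \leq \lg{n}$, since the collision load $(n-1)p/2^j$ varies substantially over this range. For small $j$ the load is enormous and $N_m$ is essentially zero (easy), but for $j$ close to $i-1$ the load is only $\Theta(1)$ and the three thresholds are genuinely comparable, so a careful constant computation is needed to confirm that the gap between heard-messages and heard-beacons-or-messages is large enough to break the $N_m \geq 1.5Rp^2$ versus $N'_m \leq 2.2Rp^2$ pair. I expect the cleanest route is to show directly that $\mathbb{E}[N'_m] - \mathbb{E}[N_m]$, which counts slots where $u$ hears a collision or a pure beacon, is a constant fraction of $Rp^2$ bounded away from zero whenever $2^j \leq n/2$, so that the two thresholds cannot both hold; the collision probability $\Pr(I_j \geq 2)$ computation from the proof of Lemma~\ref{lemma-multicastcore-competitive-halt} provides the template. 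Finally I would take a union bound over all $O(\lg^2 n)$ phases in the first $\lg{n}$ epochs and over all $n$ nodes, which the $1/n^{\Omega(i^3)}$ per-phase failure probability easily absorbs.
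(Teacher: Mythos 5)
There are two genuine gaps here. First, your concentration claim fails in early epochs. You assert a per-phase failure probability of $1/n^{\Omega(i^3)}$, but a Chernoff bound on $N_m$, whose expectation is $\Theta(Rp^2)=\Theta(i^3)$, only yields deviation probability $e^{-\Omega(i^3)}$; for small constant $i$ this is merely a constant and gives nothing. The paper is forced to split the first $\lg{n}$ epochs into two ranges for exactly this reason: for $i\leq\lg(n/(d\lg{n}))$ it shows the \emph{per-slot} probability that a node hears $m$ cleanly is at most $n\cdot\exp(-n/(4\cdot 2^{i}))=n^{-\Omega(1)}$, and then union-bounds over all $O(n\lg^5{n})$ step-two slots in those epochs to conclude $N_m=0$ exactly (no concentration needed); only for $\lg(n/(d\lg{n}))\leq i\leq\lg{n}$, where $Rp^2=\Omega(\lg^3{n})$ makes Chernoff strong enough, does it bound $\mathbb{E}[N_m]$ and concentrate. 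Your proposal has no mechanism that works when $i$ is small.

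Second, your ratio argument between $N_m$ and $N'_m$ rests on a misreading of the counters. You write that $N'_m-N_m$ ``counts slots where $u$ hears a collision or a pure beacon,'' but a collision produces noise and increments $N_n$, not $N'_m$; the difference $N'_m-N_m$ counts only slots with a single \emph{uninformed} broadcaster sending $\pm$. If all or almost all nodes are already informed at the start of step two---which is entirely possible in these epochs---then $N'_m=N_m$ and there is no multiplicative gap to exploit, so the $1.5/2.2$ threshold pair cannot be separated this way. The paper never compares $N_m$ to $N'_m$ in this lemma; it defeats the helper check by bounding $\mathbb{E}[N_m]$ alone, using that the per-slot probability of a clean reception is at most $p^2\cdot(n/2^j)\cdot\exp\left(-\Theta\left((n/2^j)^{1-\alpha}\right)\right)$ and that the function $x\mapsto x\exp(-0.49x^{1-\alpha})$ is at most $1.1$ for all $x\geq 2$ and $0<\alpha<1/4$, giving $\mathbb{E}[N_m]\leq 1.1Rp^2<1.5Rp^2$. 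Your instinct that collisions suppress $N_m$ points in the right direction, but the quantitative statement you need is this absolute bound on $\mathbb{E}[N_m]$, not a comparison against $N'_m$.
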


We then prove, nodes cannot progress to \texttt{helper} status in $(i,j)$-phases in which $j\geq\lg{n}$. Intuitively, this is because in such phases too many channels are being used, thus nodes cannot meet each other sufficiently often, implying $N_m$ cannot be large enough.

\begin{lemma}\label{lemma-multicastadv-ij-range-2}
Fix an $(i,j)$-phase in which $i>\lg{n}$ and $j\geq\lg{n}$. Fix a node $u$. With probability at least $1-n^{-\Theta(i^2)}$, node $u$ will not become \texttt{helper} by the end of this phase.
\end{lemma}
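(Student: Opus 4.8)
The plan is to defeat the helper condition through its message counter alone. A node advances to \texttt{helper} status only via the second end-of-step-two check, which demands (among other things) $N_m \geq 1.5Rp^2$. Hence it suffices to prove that $N_m < 1.5Rp^2$ with probability at least $1 - n^{-\Theta(i^2)}$; I need not touch the conditions on $N_s$ or $N'_m$, and it is irrelevant whether $u$ is \texttt{informed} at the start of step two or only becomes so via the first check. This isolates exactly the intuition behind the lemma: once $j \geq \lg n$ there are $2^j \geq n$ channels competing for at most $n$ nodes, so a listening node rarely catches a clean copy of $m$ and $N_m$ stays small.

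First I would bound the per-slot probability that $u$ increments $N_m$ in step two. Such an event requires $u$ to listen (probability $p$) and at least one of the other $n-1$ nodes to broadcast on $u$'s chosen channel; since each such node lands on that channel and broadcasts with probability $p/2^j$, a union bound caps the latter at $(n-1)p/2^j$. Thus the per-slot probability is at most $(n-1)p^2/2^j$, which is strictly below $p^2$ because $2^j \geq n$. Summing over the $R$ slots gives $\mathbb{E}[N_m] \leq (n-1)Rp^2/2^j < Rp^2$. Crucially, Eve's jamming can only convert would-be messages into noise, so it only lowers $N_m$, and the bound holds for every oblivious strategy.

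Next I would apply concentration. The indicators ``$u$ hears $m$ in slot $t$'' are independent across the $R$ slots, since within a phase every node draws its channel and action afresh each slot and Eve, being oblivious, has a jamming pattern fixed in advance and independent of the nodes' bits. Writing $\mu = \mathbb{E}[N_m]$ and $a = 1.5Rp^2$, I would use the multiplicative tail in the form $\Pr(N_m \geq a) \leq \exp\!\big(-(a-\mu)^2/(a+\mu)\big)$. Because $\mu < Rp^2$ we have $a - \mu > \tfrac12 Rp^2$ and $a + \mu < \tfrac52 Rp^2$, so the exponent is at least $\tfrac{1}{10}Rp^2$ uniformly. Finally, plugging in $R = \Theta(2^{2\alpha(i-j)}i^3)$ and $p = 2^{-\alpha(i-j)}/2$ makes the $2^{\pm 2\alpha(i-j)}$ factors cancel, leaving $Rp^2 = \Theta(i^3)$; and since $i > \lg n$ forces $i^3 > i^2\lg n$, we get $\exp(-\Omega(i^3)) \leq n^{-\Omega(i^2)}$, as required.

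I expect the only subtle point to be choosing a concentration bound that works uniformly over all $j \geq \lg n$. When $j = \lg n$ the mean $\mu$ is within a constant factor of the threshold $Rp^2$, so one needs a fixed-relative-deviation bound; when $j$ is large $\mu$ becomes tiny and the threshold sits far above the mean, which a naive ``$0 \leq \delta \leq 1$'' Chernoff estimate does not cover. The quadratic-over-sum form above sidesteps this case split, since its exponent is controlled by $a - \mu \geq \tfrac12 Rp^2$ and $a + \mu \leq \tfrac52 Rp^2$ no matter how small $\mu$ is. Everything else reduces to the routine substitution establishing $Rp^2 = \Theta(i^3)$.
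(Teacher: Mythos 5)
Your proposal is correct and follows essentially the same route as the paper: bound the per-slot probability of hearing $m$ by $p^2$ (the paper uses the exact single-broadcaster expression $p\cdot t\cdot(p/2^j)(1-p/2^j)^{n'-2}$ together with $2^j\geq n$, you use a union bound over broadcasters, which gives the same $\mathbb{E}[N_m]\leq Rp^2$), then apply a Chernoff bound with $Rp^2=\Theta(i^3)$ and $i>\lg n$. Your explicit use of the all-$\delta$ form $\exp(-(a-\mu)^2/(a+\mu))$ is a small but welcome refinement over the paper's unqualified "standard Chernoff bound," since $\mathbb{E}[N_m]$ can indeed be far below the threshold when $j\gg\lg n$.
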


The last claim in this part is, when $i>\lg{n}$ and all nodes are active, nodes cannot progress to \texttt{helper} status in phases in which $j<\lg{n}-1$, as in such scenario $N'_m\leq 2.2Rp^2$ and $N_s\geq 0.9Rp$ cannot be satisfied simultaneously.

\begin{lemma}\label{lemma-multicastadv-ij-range-3}
Fix an $(i,j)$-phase in which $i>\lg{n}$, $j< \lg n-1$, and all nodes are active. Fix a node $u$. With probability at least $1-n^{-\Theta(i^2)}$, node $u$ will not become \texttt{helper} by the end of this phase.
\end{lemma}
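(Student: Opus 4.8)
The plan is to show that the two helper thresholds $N_s\geq 0.9Rp$ and $N'_m\leq 2.2Rp^2$ cannot both be met during step two of such a phase, so it suffices to bound $\Pr(N_s\geq 0.9Rp \wedge N'_m\leq 2.2Rp^2)$; the remaining requirements for becoming a \texttt{helper} (being \texttt{informed} and $N_m\geq 1.5Rp^2$) can simply be dropped, since this only makes our estimate an upper bound on the helper probability. The argument concerns only step two, where, because all nodes are active, each of the $n-1$ nodes other than $u$ independently broadcasts on $u$'s chosen channel with probability $p/2^j$. Hence the number $X$ of competing broadcasters on $u$'s channel is distributed as $\mathrm{Bin}(n-1,p/2^j)$, and a slot in which $u$ listens on a non-jammed channel yields silence iff $X=0$ and a single message/beacon iff $X=1$. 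The crucial observation is that $j<\lg n-1$ forces $2^j\leq n/4$, so $\Pr(X=1)/\Pr(X=0)=(n-1)(p/2^j)/(1-p/2^j)\geq 4p(1-1/n)$, roughly twice the value $\approx 2p$ obtained in the ``good'' phase $j=\lg n-1$ where $2^j=n/2$. This factor-two gap is exactly what the threshold $2.2Rp^2$, sitting between the good-phase count $\approx 2Rp^2$ and the present count of roughly $4Rp^2$, is designed to detect.

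To exploit this I would first reveal, for each of the $R$ slots of step two, whether $u$ listens and on which channel; since Eve is oblivious her jamming pattern is fixed in advance, so these reveals determine the set of \emph{clean} slots in which $u$ listens on a non-jammed channel. Conditioned on this set, the outcomes (silence / single / noise) are independent across clean slots with fixed probabilities $P_0=\Pr(X=0)$ and $P_1=\Pr(X=1)$. I would then split on the magnitude of $p$. If $p$ exceeds a small constant $p_0$ (say $p_0=0.1$), then $P_0\leq(1-4p/n)^{n-1}\leq e^{-4p(1-1/n)}<0.9$, so $\mathbb{E}[N_s]\leq Rp\cdot P_0$ lies a constant factor below $0.9Rp$, and a Chernoff bound over the independent per-slot indicators gives $N_s<0.9Rp$ w.h.p., ruling out the first condition. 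If instead $p\leq p_0$, then $\theta:=P_1/(P_0+P_1)\geq \tfrac{4p(1-1/n)}{1+4p(1-1/n)}>2.44p$, and, after further revealing which clean slots are silence-or-single, each such slot is independently a single-message slot with probability $\theta$. On the event $N_s\geq 0.9Rp$ there are at least $0.9Rp$ silence-or-single slots, whence $\mathbb{E}[N'_m]\geq 0.9Rp\cdot 2.44p>2.2Rp^2$ and a Chernoff bound forces $N'_m>2.2Rp^2$ w.h.p., ruling out the second condition.

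Finally, I would collect the error probabilities. Both Chernoff bounds fail with probability at most $\exp(-\Omega(\min\{Rp,Rp^2\}))$, and since $R=\Theta(2^{2\alpha(i-j)}i^3)$ and $p=2^{-\alpha(i-j)}/2$ we have $Rp^2=\Theta(i^3)$ and $Rp=\Omega(i^3)$; using $i>\lg n$ this is $\exp(-\Omega(i^3))=\exp(-\Omega(i^2\lg n))=n^{-\Omega(i^2)}$, matching the claimed bound. I expect the main obstacle to be the constant bookkeeping in the regime $p\to 1/2$ (where $i-j$ can be as small as $3$ and the Poisson-style approximations $e^{-\lambda}\approx 1$ break down), which is precisely why a case split on $p$ is needed rather than a single unified estimate; a secondary subtlety is arranging the staged conditioning so that the clean-slot outcomes, and in particular the silence-or-single indicators, are genuinely independent despite both Eve's jamming and $u$'s own randomness.
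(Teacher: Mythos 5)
Your proposal is correct in substance, and it reaches the conclusion by a genuinely different decomposition than the paper. Both arguments reduce the lemma to showing that $N_s\geq 0.9Rp$ and $N'_m\leq 2.2Rp^2$ cannot co-occur, and both hinge on $j<\lg n-1$ forcing $n/2^j\geq 4$; the difference is where the case split lives. The paper introduces $Q=\sum_k q_k$ (the total unjammed fraction over step two, which is deterministic since Eve is oblivious) and splits on the single event $\mathcal{E}:(1-p/2^j)^{n-1}\geq 0.8R/Q$; under $\mathcal{E}$ it lower-bounds $\mathbb{E}[N'_m]\geq 2.56Rp^2$ using $(n/2^j)\geq 4$ and $Q\leq R$, and under $\overline{\mathcal{E}}$ it gets $\mathbb{E}[N_s]\leq 0.8Rp$ — so heavy jamming and large $p$ are absorbed into one condition and everything runs through unconditional expectations. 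You instead split on the magnitude of $p$ and, in the small-$p$ case, condition on the realized set of silence-or-single slots and couple $N'_m$ to the observed $N_s$ via the ratio $\theta=P_1/(P_0+P_1)$; this makes the ``factor-of-two gap relative to the good phase $2^j=n/2$'' intuition explicit and automatically disposes of heavy jamming (few clean slots forces $N_s$ small), at the price of a second case for large $p$ and tighter constant bookkeeping. On that last point there is a small slip you should fix: $0.9Rp\cdot 2.44p=2.196Rp^2$, which is \emph{not} greater than $2.2Rp^2$, so as stated the Chernoff bound in your second case points the wrong way. The margin does exist — for $p\leq 0.1$ and $n$ large one has $\theta\geq 4p(1-1/n)/(1+0.4)\geq 2.7p$, giving conditional expectation at least $2.43Rp^2$ — so this is repairable constant arithmetic rather than a structural flaw, but it must be repaired for the proof to close. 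Your error accounting ($e^{-\Theta(Rp^2)}=e^{-\Theta(i^3)}\leq n^{-\Theta(i^2)}$ using $i>\lg n$) matches the paper's.
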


\smallskip\noindent\textbf{Correctness and competitiveness guarantees.} In this part, we show \MultiCastAdv enforces two nice properties: (a) when some node becomes \texttt{helper}, all nodes must have learned the message $m$; and (b) when some node decides to terminate, all other nodes must have at least progressed to \texttt{helper} status.

We begin with the first property. Consider a node $u$. If $u$ becomes \texttt{helper} in phase $\hat{j}_u$ of epoch $\hat{i}_u$, then it must have heard message $m$ sufficiently often during step two of that phase. As a result, the expectation of $N_m$ must be sufficiently large in phase $(\hat{i}_u,\hat{j}_u )$. On the other hand, due to Lemma \ref{lemma-multicastadv-ij-range-1}, we can restrict our attention to epochs where $i>\lg{n}$, this in turn allows us to use concentration inequalities to show each node's observed $N_m$ will be close to $\mathbb{E}[N_m]$, w.h.p. Therefore, we know all nodes must have learned $m$ by the end of phase $(\hat{i}_u,\hat{j}_u )$, w.h.p. More precisely:

\begin{lemma}\label{lemma-multicastadv-helper-imply-inform}
Fix an $(i,j)$-phase in which $i>\lg{n}$, by the end of step two, the following two events happen simultaneously with probability at most $n^{-\Theta(i^2)}$: (a) some node has $N_m\geq 1.5Rp^2$; and (b) some node still does not know the message $m$.
\end{lemma}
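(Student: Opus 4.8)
The plan is to prove this contrapositive-style bound by a union bound over \emph{pairs} of nodes, reducing each pairwise event to two applications of a Chernoff bound. Write event (a) as $\bigcup_u \{N_m^{(u)} \ge 1.5Rp^2\}$ and event (b) as $\bigcup_w \{w \text{ is uninformed at the end of the phase}\}$, where $N_m^{(u)}$ denotes the counter of node $u$. By a union bound over the at most $n^2$ ordered pairs $(u,w)$, it suffices to show that for each fixed $u \ne w$,
\[
\Pr\!\left[N_m^{(u)} \ge 1.5Rp^2 \ \wedge\ N_m^{(w)} = 0 \ \wedge\ w \text{ uninformed entering step two}\right] \le e^{-\Theta(i^3)},
\]
since a node uninformed at the end of the phase must have entered step two uninformed and have $N_m^{(w)} = 0$ (otherwise the first end-of-phase check would inform it). The governing scale is $Rp^2 = \Theta(i^3)$: recalling $R = \Theta(2^{2\alpha(i-j)} i^3)$ and $p = 2^{-\alpha(i-j)}/2$, the factors of $2^{2\alpha(i-j)}$ cancel, so both the helper threshold $1.5Rp^2$ and the typical value of any node's counter are $\Theta(i^3)$, with a hidden constant uniform over all $j$.

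Next I would condition on the environment. Let $\mathcal{F}$ be generated by the entire history up to the start of step two (which fixes the set of nodes that know $m$, hence whether $w$ is uninformed), Eve's oblivious jamming during step two, and the step-two channel and broadcast/listen choices of all nodes \emph{except} $u$ and $w$. For each step-two slot $s$, let $M_s$ be the number of channels carrying a clean $m$-broadcast relative to these $n-2$ nodes, i.e.\ channels on which exactly one of them broadcasts, that node knows $m$, and Eve does not jam; $M_s$ is $\mathcal{F}$-measurable. On the event that $w$ is uninformed, $w$ only ever broadcasts the beacon $\pm$, so reinserting $w$ can only destroy clean $m$-broadcasts. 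This yields two monotone surrogates: $N_m^{(u)} \le \hat N_m^{(u)} := \sum_s \mathbf{1}[u \text{ listens on an } M_s\text{-channel}]$, and $N_m^{(w)} \ge \check N_m^{(w)} := \sum_s \mathbf{1}[w \text{ listens on an } M_s\text{-channel and } u \text{ does not broadcast there}]$. Given $\mathcal{F}$, both surrogates are sums of indicators independent across slots, with $\mathbb{E}[\hat N_m^{(u)} \mid \mathcal{F}] = \mu$ and $\mathbb{E}[\check N_m^{(w)} \mid \mathcal{F}] \ge (1-p)\mu$, where $\mu := \sum_s p\,M_s/2^j$.

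I would then split on the value of $\mu$. If $\mu \le Rp^2$, an upper-tail Chernoff bound gives $\Pr[\hat N_m^{(u)} \ge 1.5Rp^2 \mid \mathcal{F}] \le e^{-\Theta(Rp^2)} = e^{-\Theta(i^3)}$, and since $N_m^{(u)} \le \hat N_m^{(u)}$ this controls the joint event. If instead $\mu > Rp^2$, then $\mathbb{E}[\check N_m^{(w)} \mid \mathcal{F}] \ge (1-p)\mu \ge Rp^2/2$ (using $p \le 1/2$), so $\Pr[\check N_m^{(w)} = 0 \mid \mathcal{F}] \le e^{-Rp^2/2} = e^{-\Theta(i^3)}$, and since $N_m^{(w)} \ge \check N_m^{(w)}$ this again controls the joint event. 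Either way the conditional probability of the pairwise event is $e^{-\Theta(i^3)}$; taking expectation over $\mathcal{F}$ and summing over the $n^2$ pairs gives $n^2 e^{-\Theta(i^3)}$. Finally, the hypothesis $i > \lg n$ forces $\ln n = O(i)$, whence $n^2 e^{-\Theta(i^3)} \le e^{-\Theta(i^3)} \le n^{-\Theta(i^2)}$, as claimed.

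The main obstacle is the correlation between what $u$ and $w$ observe: both counters are driven by the \emph{same} set of broadcasting nodes and the same jamming pattern, and moreover each of $u$ and $w$ can itself perturb the channel occupancy seen by the other. The crucial manoeuvre is the asymmetric choice of surrogates $\hat N_m^{(u)}$ and $\check N_m^{(w)}$ relative to the \emph{common} environment $\mathcal{F}$ of the other $n-2$ nodes: bounding $N_m^{(u)}$ from above and $N_m^{(w)}$ from below against the same $\mathcal{F}$-measurable quantities $M_s$ decouples the two nodes, while folding the condition ``$u$ does not broadcast here'' into $\check N_m^{(w)}$ absorbs $u$'s interference at the cost of only a harmless $(1-p)$ factor rather than a problematic additive loss. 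Verifying the monotonicity claims (that an uninformed $w$, broadcasting only $\pm$, can never \emph{create} a clean $m$-broadcast) and the slot-wise conditional independence of the surrogates is the part requiring the most care.
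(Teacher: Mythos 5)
Your proof is correct and follows essentially the same route as the paper's: both hinge on the dichotomy that the expected number of clean $m$-receptions in step two is either below the threshold $Rp^2=\Theta(i^3)$ (in which case an upper-tail Chernoff bound shows no node can reach $1.5Rp^2$) or above it (in which case no listener can end with $N_m=0$), followed by a union bound and the observation that $i>\lg{n}$ converts $e^{-\Theta(i^3)}$ into $n^{-\Theta(i^2)}$. The paper phrases the dichotomy via the unjammed-channel mass $Q$ and the counts $t,n'$ at the start of step two rather than your realized clean-broadcast counts $M_s$, and union-bounds over single nodes in each branch separately; your pairwise conditioning with monotone surrogates is a more careful treatment of the correlation between the two nodes, but the underlying argument is the same.
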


Our next key technical lemma states, when nodes begin to terminate, all nodes must have at least progressed to \texttt{helper} status. As discussed earlier, this property ensures the termination of some nodes will not affect the ability for remaining nodes to halt. To prove this lemma, we focus on the \emph{first} node that decides to terminate. Call this node $u$. Recall our algorithm specifies that if $u$ becomes \texttt{helper} in phase $(\hat{i}_u,\hat{j}_u)$, then node $u$ will only consider termination in $(i,j)$-phases in which $i\geq \hat{i}_u+\Theta(1)$ and $j=\hat{j}_u$. Now, assume $u$ decides to terminate in some phase $(\tilde{i},\hat{j}_u)$, then in step two of the earlier phase $(\hat{i}_u,\hat{j}_u)$ in which $u$ became \texttt{helper}, it must have observed message $m$ slots and silent slots sufficiently often. Notice that as the protocol proceeds, in step two of $(i,j)$-phases in which $j=\hat{j}_u$, if Eve does no jamming, fraction of message $m$ slots and silent slots both increase. Thus in step two of phase $(\tilde{i},\hat{j}_u)$, if jamming from Eve is not strong, all nodes should have heard many message $m$ slots and silent slots. But since $u$ decides to terminate in phase $(\tilde{i},\hat{j}_u)$, it must have heard few noisy slots, implying jamming from Eve is indeed weak. Lastly, notice that our previous analysis ensures $\hat{j}_u$ must equal to $\lg{n}-1$, implying $N'_m$ cannot be too large in phase $(\tilde{i},\hat{j}_u)$. Combine all these observations and we can conclude, all nodes will become \texttt{helper} in phase $(\tilde{i},\hat{j}_u)$.

\begin{lemma}\label{lemma-multicastadv-halt-imply-helper}
Fix an $(i,j)$-phase in which all nodes are active, fix a node $u$ that became \texttt{helper} in phase $(\hat{i},\hat{j})$. Assume $i-2/\alpha\geq \hat{i}>\lg{n}$. By the end of step two, the following two events happen simultaneously with probability at most $n^{-\Theta(\hat{i}^2)}$: (a) node $u$ decides to halt; and (b) some node has not progressed to \texttt{helper} status.
\end{lemma}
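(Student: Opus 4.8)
The plan is to reduce everything to concentration arguments on step two of the current $(i,\hat{j})$-phase, exploiting that Eve is oblivious so that all four counters are sums of independent indicators over the nodes' private coins. First I would invoke Lemmas \ref{lemma-multicastadv-ij-range-1}, \ref{lemma-multicastadv-ij-range-2}, and \ref{lemma-multicastadv-ij-range-3} to pin down $\hat{j}=\lg n-1$, so that both phase $(\hat{i},\hat{j})$ and phase $(i,\hat{j})$ use exactly $2^{\hat{j}}=n/2$ channels; with $n$ active nodes this means every channel carries two nodes in expectation, which is precisely the regime in which the \texttt{helper} thresholds are meaningful. Next I would apply Lemma \ref{lemma-multicastadv-helper-imply-inform} to phase $(\hat{i},\hat{j})$: since $u$ became \texttt{helper} there it had $N_m\geq 1.5Rp^2$, so w.h.p.\ every node knows $m$ by the end of that phase, and since \texttt{informed} is permanent, no node is \texttt{uninformed} in phase $(i,\hat{j})$. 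Consequently no beacon $\pm$ is ever transmitted in step two of $(i,\hat{j})$, and $N'_m=N_m$ holds identically for every node---this disposes of the $N'_m\leq 2.2Rp^2$ check, reducing the \texttt{helper} test to $N_m$ and $N_s$.

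The two substantive steps extract, from $u$'s own behaviour, the two facts the other nodes need: a small broadcasting probability and weak jamming. For the first, I would argue \emph{backwards} from $u$'s \texttt{helper} status. Because a listener sees silence only when no other node broadcasts on its channel (jamming can only reduce this), the per-slot silence probability in any phase is at most $e^{-2p}$ deterministically; hence if $p(\hat{i},\hat{j})$ were not small, $\mathbb{E}[N_s]$ would sit well below $0.9Rp$ and a Chernoff bound would make the event $N_s\geq 0.9Rp$ exponentially unlikely. Conditioning on $u$ actually satisfying this threshold therefore forces $e^{-2p(\hat{i},\hat{j})}\geq 0.9-o(1)$, i.e.\ $p(\hat{i},\hat{j})$ is below the critical value $\approx 0.053$. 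The gap hypothesis $i\geq \hat{i}+2/\alpha$ then gives $p(i,\hat{j})=2^{-\alpha(i-\hat{i})}p(\hat{i},\hat{j})\leq p(\hat{i},\hat{j})/4$, so $p(i,\hat{j})$ is small enough that $e^{-2p(i,\hat{j})}$ clears $0.9$ with a constant margin. For the second fact I would argue from $u$'s decision to halt, which requires $N_n\leq Rp/3000$ in step two of $(i,\hat{j})$. Writing $\bar{g}$ for the time-average fraction of channels Eve jams during this step, the noise $u$ hears has expectation $\approx Rp\,\bar{g}+O(Rp^3)$; a lower-tail Chernoff bound shows that if $\bar{g}$ exceeded a tiny constant then $u$ would see more than $Rp/3000$ noisy slots w.h.p. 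Thus $u$'s halt certifies $\bar{g}=O(1/1000)$.

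With $p(i,\hat{j})$ small and $\bar{g}$ small in hand, the final step is a direct computation for an arbitrary active node $v$. Using $n/2$ channels and all nodes broadcasting with probability $p$, I would evaluate $\mathbb{E}[N_m^{(v)}]\approx 2Rp^2e^{-2p}(1-\bar{g})$ and $\mathbb{E}[N_s^{(v)}]\approx Rp\,e^{-2p}(1-\bar{g})$, and check that the former lands strictly inside $(1.5Rp^2,2.2Rp^2)$ while the latter exceeds $0.9Rp$, both with constant slack once $p$ and $\bar{g}$ are small. Since these counters are sums of independent indicators with means $\Theta(Rp^2)=\Theta(i^3)$ and $\Theta(Rp)$ respectively, Chernoff bounds give that $v$ meets all three thresholds---and is already \texttt{informed}---except with probability $\exp(-\Theta(i^3))\leq n^{-\Theta(\hat{i}^2)}$ (using $i>\hat{i}>\lg n$). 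A union bound over the $n$ nodes, together with the failure probabilities of the earlier invocations of Lemmas \ref{lemma-multicastadv-ij-range-1}--\ref{lemma-multicastadv-helper-imply-inform} and of the two extraction steps, yields the claimed bound $n^{-\Theta(\hat{i}^2)}$.

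I expect the backward inference in the second paragraph to be the main obstacle. The difficulty is that $p(i,\hat{j})$ is \emph{not} small merely because $i>\lg n$---if $\hat{i}$ were barely above $\lg n$, the $2/\alpha$ gap alone would only push $p(i,\hat{j})$ down to roughly $1/8$, far too large for the $0.9Rp$ silence threshold. The smallness has to be recovered from the fact that $u$ succeeded in becoming a \texttt{helper} at all, which already constrained $p(\hat{i},\hat{j})$; making this conditioning rigorous---separating the typical case where $u$'s silence count reflects a genuinely high silence probability from the negligible-probability case where it does not---is the delicate part, and is where the oblivious-adversary assumption (fixed jamming, independent node coins) is essential.
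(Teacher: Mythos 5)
Your proposal is correct and follows essentially the same route as the paper's proof: the paper likewise splits on whether Eve's jamming in step two is strong (in which case $u$ cannot halt) or weak, pins down $\hat{j}=\lg n-1$ via Lemmas \ref{lemma-multicastadv-ij-range-1}--\ref{lemma-multicastadv-ij-range-3}, and---exactly as in your ``backward inference''---uses two auxiliary claims to argue that $u$'s having met the $N_m$ and $N_s$ thresholds in phase $(\hat{i},\hat{j})$ certifies that $((n-1)/2^{\hat{j}})(1-p/2^{\hat{j}})^{n-2}$ and $(1-p/2^{\hat{j}})^{n-1}$ are large, which the gap $p(i,\hat{j})\leq p(\hat{i},\hat{j})/4$ then transfers to phase $(i,\hat{j})$ before the final Chernoff-plus-union-bound step. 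Your substitutions of the time-averaged jamming fraction $\bar{g}$ for the paper's explicit two-threshold jamming event, and of the observation that $N'_m=N_m$ once all nodes are informed for the paper's direct bound $\mathbb{E}[N'_m]\leq 2Rp^2$, are only cosmetic differences.
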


\smallskip\noindent\textbf{Fast termination.} As the final preparation before proving the main theorem, we show that once the Eve stops disrupting protocol execution (i.e., jamming from Eve is not strong), remaining active nodes will quickly learn message $m$ and then terminate.

We first introduce some notations. For any $(i,j)$-phase, we use $\mathcal{E}^{> x}_{Step1}(> y)$ (respectively, $\mathcal{E}^{> x}_{Step2}(> y)$) to denote the event that during step one (respectively, step two) of phase $(i,j)$, Eve jams more than $y$ fraction of all $2^j$ channels for more than $x$ fraction of all slots within that step. Essentially, the stronger Eve jams, the bigger $x$ and $y$ are. It is easy to see the negation of $\mathcal{E}^{> x}_{Step1}(> y)$ is $\mathcal{E}^{\geq 1-x}_{Step1}(\leq y)$, and the negation of $\mathcal{E}^{> x}_{Step2}(> y)$ is $\mathcal{E}^{\geq 1-x}_{Step2}(\leq y)$.

We now classify epochs into \emph{blocking} and \emph{non-blocking}:

\begin{definition}\label{def-epoch-blocking}
Epoch $i$ is \emph{blocking} if at least one of the following two conditions hold: (a) $\mathcal{E}^{> x_1}_{Step1}(> y_1)$ in phase $\lg{n}-1$; and (b) $\mathcal{E}^{> x_2}_{Step2}(> y_2)$ in phase $\lg{n}-1$. Here, $x_1=y_1=1/10$, $x_2=y_2=1/10^4$. On the other hand, epoch $i$ is \emph{non-blocking} if both of the following conditions hold: (a) $\mathcal{E}^{\geq 1-x_1}_{Step1}(\leq y_1)$ in phase $\lg{n}-1$; and (b) $\mathcal{E}^{\geq 1-x_2}_{Step2}(\leq y_2)$ in phase $\lg{n}-1$.
\end{definition}

We are ready to prove the fast termination properties enforced by \MultiCastAdv, and we begin by showing that if there are \texttt{uninformed} nodes at the beginning of a non-blocking epoch, then all of them will learn message $m$ during this epoch, as a successful epidemic broadcast will happen during phase $\lg{n}-1$.

\begin{lemma}\label{lemma-multicastadv-fast-term-inform}
Consider phase $\lg{n}-1$ of an epoch $i>\lg{n}$, assume all nodes are active and are either \texttt{uninformed} or \texttt{informed} at the beginning of this phase. If $\mathcal{E}^{\geq 1-x_1}_{Step1}(\leq y_1)$ happens, then by the end of this phase, all nodes are \texttt{informed} with probability at least $1-n^{-\Theta(i)}$.
\end{lemma}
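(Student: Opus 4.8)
The plan is to follow the segment-based analysis of the sparse epidemic broadcast established in Lemma~\ref{lemma-multicast-epidemic-bcst}, specialized to the single good phase $\lg n-1$. First I would observe that in this phase $j=\lg n-1$, so exactly $2^j=n/2$ channels are used---precisely the channel count for which the doubling/mop-up argument behind Lemmas~\ref{lemma-multicastcore-epidemic-bcst} and~\ref{lemma-multicast-epidemic-bcst} was designed. Since a node's status only changes from \texttt{uninformed} to \texttt{informed} upon hearing $m$, and step two can only add such transitions, it suffices to show every node is \texttt{informed} by the end of step one. Writing $k=i-j>1$, $p=p(i,j)=2^{-\alpha k}/2$, and $R=R(i,j)=\Theta(2^{2\alpha k}i^3)$, the key identity is $Rp^2=\Theta(i^3)$. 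The hypothesis $\mathcal{E}^{\geq 1-x_1}_{Step1}(\leq y_1)$ with $x_1=y_1=1/10$ guarantees a set of at least $(9/10)R$ \emph{good} slots in step one, in each of which at most $(1/10)(n/2)$ channels are jammed, hence at least $0.45n$ channels are unjammed; I would discard the rest and work only with these good slots.

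Next I would partition the good slots into $K=\lg n+O(1)$ \emph{segments}, each of length $\ell=\Theta(R/\lg n)$, so that $\ell p^2=\Theta(i^3/\lg n)=\Omega(i^2)$ (using $i>\lg n$). The argument then splits into a doubling phase and a mop-up phase. For the doubling phase, I would show that whenever a segment begins with $t\leq n/2$ \texttt{informed} nodes, their number at least doubles by the end of the segment, with failure probability $e^{-\Omega(i^2)}$; since at most $\lg(n/2)$ doublings suffice to reach $t\geq n/2$, this consumes at most $\lg n$ segments. For the mop-up phase, once $t\geq n/2$ I would show every remaining \texttt{uninformed} node is informed within $O(1)$ further segments: for a fixed \texttt{uninformed} $w$ and a good slot, the probability that $w$ listens on an unjammed channel carrying exactly one broadcaster is $\Theta(p^2)$ (the exactly-one-broadcaster factor is $\Theta(p)$ when $t\geq n/2$, as $2pt/n$ stays bounded away from $0$ and $\infty$), so over a segment $w$ remains \texttt{uninformed} with probability $e^{-\Omega(\ell p^2)}=e^{-\Omega(i^2)}$, and a union bound over the at most $n$ such nodes finishes this phase.

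The main obstacle is the concentration needed for the doubling step, since the \texttt{informed} set evolves \emph{within} a segment and nodes' outcomes are coupled through collisions, so the number of newly informed nodes is not a sum of independent indicators. I would resolve this with a pessimistic coupling: freeze the broadcaster set to the start-of-segment \texttt{informed} set $S$ (with $|S|=t$) and suppress the broadcasts of nodes informed mid-segment. Removing broadcasters only reduces collisions, so any node that hears $m$ from $S$ in this modified process also hears it in the real process, and it suffices to lower-bound the count in the modified process. There, by the argument behind Claim~\ref{claim-multicastcore-epidemic-bcst-1}, each good slot yields $\Omega(pt)$ \emph{open} channels---unjammed channels carrying exactly one $S$-broadcaster---with at least constant probability, so a Chernoff bound over the $\ell$ slots shows a constant fraction of them are \emph{rich} except with probability $e^{-\Omega(\ell)}$. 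Conditioning on the broadcasters' choices then renders the listening decisions of the distinct \texttt{uninformed} nodes independent, so the number of them that hit an open pair is a sum of independent indicators with conditional mean $\Omega(t\cdot\ell p^2)\geq C t$ for a large constant $C$; a final Chernoff bound yields at least $t$ newly informed nodes, hence doubling, except with probability $e^{-\Omega(i^2)}$.

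Finally I would take a union bound over the $K=\Theta(\lg n)$ segments and the at most $n$ nodes involved, giving total failure probability $O(n\lg n)\cdot e^{-\Omega(i^2)}$. Since $i>\lg n$ forces $i^2=\Omega(i\lg n)$, we have $e^{-\Omega(i^2)}\leq n^{-\Omega(i)}$, which absorbs the $O(n\lg n)$ factor and yields the claimed bound $1-n^{-\Theta(i)}$.
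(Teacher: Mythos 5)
Your proposal is correct in outline and shares the paper's skeleton --- restrict to the $\geq(9/10)R$ good slots guaranteed by $\mathcal{E}^{\geq 1-x_1}_{Step1}(\leq y_1)$, run a segment-based doubling argument up to $n/2$ \texttt{informed} nodes, then a mop-up stage for the stragglers, with $\ell p^2=\Omega(i^2)$ per segment and $i>\lg n$ absorbing the union bounds into $n^{-\Theta(i)}$ --- but you prove the doubling step by a genuinely different route. The paper (following its proof of Lemma~\ref{lemma-multicast-epidemic-bcst}) argues \emph{per broadcaster}: for each node $u$ that is \texttt{informed} at the start of a segment, the probability that $u$ is, in some slot, the unique unjammed \texttt{informed} broadcaster on its channel with at least one \texttt{uninformed} listener is at least $p^2/(20e^{2p})$ per slot uniformly over $t\leq n/2$, so $u$ fails to recruit anyone with probability $e^{-\Omega(\ell p^2)}$; since distinct recruiters account for distinct recruits, a union bound over the $t$ initial broadcasters gives doubling with no coupling needed. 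You instead count \emph{listeners}, freezing the broadcaster set and lower-bounding the number of \texttt{uninformed} nodes that land on an open channel. This buys you an explicit handling of the dependence you rightly worry about, at the cost of extra machinery; the paper's per-broadcaster union bound sidesteps that dependence entirely and is the lighter argument. One step of yours is stated too strongly: for the sparse probabilities here ($p=2^{-\alpha(i-j)}/2$ can be polynomially small), a single good slot does \emph{not} have $\Omega(pt)$ open channels with constant probability (e.g., $t=1$ gives an open channel only with probability $\Theta(p)$), so ``a constant fraction of slots are rich'' is not available. This is repairable --- the aggregate number of open channel--slot pairs over a segment is $\Omega(\ell pt)$ except with probability $e^{-\Omega(\ell p^2)}$ by Hoeffding, which is all your conditional-mean computation $\Omega(t\ell p^2)$ actually needs --- but as written that intermediate claim is false, and it is exactly the kind of issue the paper's per-broadcaster formulation avoids.
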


Next, we show that if all nodes are \texttt{informed} or \texttt{helper}, then after a non-blocking epoch, all nodes must have progressed to \texttt{helper} status or have terminated.

\begin{lemma}\label{lemma-multicastadv-fast-term-helper}
Consider phase $\lg{n}-1$ of an epoch $i\geq\lg{n}+6/\alpha$, assume all nodes are active and are either \texttt{informed} or \texttt{helper} at the beginning of step two of this phase. If $\mathcal{E}^{\geq 1-x_2}_{Step2}(\leq y_2)$ happens, then by the end of this phase, all nodes must be in \texttt{helper} or \texttt{halt} status, with probability at least $1-n^{-\Theta(i^2)}$.
\end{lemma}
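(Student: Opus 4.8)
The plan is to reduce the claim to a single nontrivial transition and then establish it via concentration. First I would observe that once a node is a \texttt{helper} it stays a \texttt{helper} unless it halts, so every node that begins step two as a \texttt{helper} trivially ends the phase in \texttt{helper} or \texttt{halt} status. Hence it suffices to show that every node that is merely \texttt{informed} at the start of step two satisfies the three promotion conditions $N_m\geq 1.5Rp^2$, $N_s\geq 0.9Rp$, and $N'_m\leq 2.2Rp^2$ by the end of step two. Since $j=\lg n-1$, the phase uses $2^j=n/2$ channels, so the expected number of nodes on any fixed channel is exactly two; and since all nodes are \texttt{informed} or \texttt{helper}, every broadcaster sends $m$ (no $\pm$ beacon is ever transmitted), whence $N'_m=N_m$ identically and the third condition collapses to $N_m\leq 2.2Rp^2$.

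Next I would fix an \texttt{informed} node $u$ and compute the per-slot expectations of its counters. In a listening slot the expected number of other broadcasters on $u$'s channel is $2p$, so in an unjammed slot the probability of hearing $m$ is about $2p\,e^{-2p}$ and the probability of silence about $e^{-2p}$. The key point is that the hypothesis $i\geq\lg n+6/\alpha$ forces $\alpha(i-j)\geq 6$, hence $p=2^{-\alpha(i-j)}/2\leq 2^{-7}$, so $e^{-2p}$ is within a small fraction of one. Incorporating the jamming hypothesis $\mathcal{E}^{\geq 1-x_2}_{Step2}(\leq y_2)$ with $x_2=y_2=10^{-4}$ (jamming only suppresses $N_m$ and $N_s$, and perturbs at most a $10^{-4}$ fraction of slots or channels), I would show $\mathbb{E}[N_m]$ lies strictly inside $(1.5Rp^2,\,2Rp^2)$ and $\mathbb{E}[N_s]\geq 0.98\,Rp$. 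Thus all three target inequalities hold comfortably in expectation, each with a constant-factor cushion against the thresholds $1.5$, $2.2$, and $0.9$.

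I would then invoke concentration. Because Eve is oblivious and each node's channel choice and broadcast/listen coin are drawn independently across slots, each counter is a sum of independent Bernoulli indicators, so a Chernoff bound applies. The relevant deviations are governed by $Rp^2=\Theta(i^3)$ and $Rp=\Theta(2^{\alpha(i-j)}i^3)\geq\Theta(i^3)$, both of which grow with $i$; hence each of the three inequalities fails with probability at most $\exp(-\Theta(i^3))\leq n^{-\Theta(i^2)}$, using $i\geq\lg n$. A union bound over the three conditions and all $n$ nodes preserves the bound $n^{-\Theta(i^2)}$, which yields the lemma.

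The main obstacle will be the delicate constant-chasing in the second step: the thresholds $1.5Rp^2$, $2.2Rp^2$, $0.9Rp$ together with the non-blocking parameters $x_2=y_2=10^{-4}$ must be arranged so that $\mathbb{E}[N_m]$ lands strictly between $1.5Rp^2$ and $2.2Rp^2$ and $\mathbb{E}[N_s]$ strictly exceeds $0.9Rp$ \emph{simultaneously}, with enough slack left for concentration. This is precisely why the epoch bound $i\geq\lg n+6/\alpha$ is imposed---it drives $p$ small enough that $e^{-2p}$ is close to one, which is what pins $\mathbb{E}[N_m]\approx 2Rp^2$ inside the required window; for phases with $i-j$ only a small constant the factor $e^{-2p}$ would pull $\mathbb{E}[N_m]$ below $1.5Rp^2$ and the promotion would fail. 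Once the expectations are correctly situated, the concentration step is routine.
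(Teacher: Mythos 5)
Your proposal is correct and follows essentially the same route as the paper's proof: reduce to promoting the \texttt{informed} nodes, bound $\mathbb{E}[N_m]$ and $\mathbb{E}[N_s]$ under the worst jamming consistent with $\mathcal{E}^{\geq 1-x_2}_{Step2}(\leq y_2)$ using the fact that $i\geq\lg{n}+6/\alpha$ makes $p$ small enough that $e^{-2p}$ is near one, then apply Chernoff bounds (with $Rp^2=\Theta(i^3)$ and $i\geq\lg n$) and a union bound over nodes. Your observation that $N'_m=N_m$ identically (since no node is \texttt{uninformed} in step two, so no $\pm$ beacon is sent) is a clean shortcut; the paper instead bounds $N'_m$ separately via the no-jamming worst case $\mathbb{E}[N'_m]\leq 2Rp^2$, but both yield the same conclusion.
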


The last lemma demonstrates, once a \texttt{helper} starts considering termination, after $O(1)$ epochs, the node will successfully halt in a non-blocking epoch. More precisely, we claim:

\begin{lemma}\label{lemma-multicastadv-fast-term-halt}
Consider a node $u$ that obtained \texttt{helper} status in phase $(\hat{i},\hat{j})$ in which $\hat{i}>\lg{n}$ and all nodes are active. Consider phase $j=\hat{j}$ of an epoch $i\geq \hat{i}+11/\alpha$ in which $u$ is active, if $\mathcal{E}^{\geq 1-x_2}_{Step2}(\leq y_2)$ happens, then by the end of this phase, $u$ will be in \texttt{halt} status, with probability at least $1-n^{-\Theta(\hat{i}^2)}$.
\end{lemma}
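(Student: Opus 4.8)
The plan is to reduce the claim to a single concentration bound on the noisy-slot counter $N_n$ that node $u$ maintains during step two of phase $(i,\hat{j})$, and then verify that the termination threshold $N_n\le Rp/3000$ is met with the stated probability. First I would pin down which phase we are really dealing with. Since $u$ obtained \texttt{helper} status in phase $(\hat{i},\hat{j})$ with $\hat{i}>\lg{n}$ and all nodes active, Lemmas \ref{lemma-multicastadv-ij-range-1}, \ref{lemma-multicastadv-ij-range-2} and \ref{lemma-multicastadv-ij-range-3} together force $\hat{j}=\lg{n}-1$; hence the phase under consideration uses exactly $2^{\hat{j}}=n/2$ channels. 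Because $i\ge\hat{i}+11/\alpha\ge\hat{i}+2/\alpha$, node $u$ is indeed scheduled by the algorithm to test the termination condition in phase $\hat{j}$ of epoch $i$, so it remains only to show $N_n\le Rp/3000$ during step two, where $N_n$ counts slots in which $u$ listens and hears noise. Note that $u$ need not worry about how many other nodes are still active: fewer active (hence broadcasting) nodes can only reduce the chance of a collision, so I may upper-bound the noise $u$ hears by pretending all $n$ nodes are active throughout.

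Next I would bound $\mathbb{E}[N_n]$ by splitting the noise $u$ can hear into a jamming contribution and a collision contribution. For jamming, the event $\mathcal{E}^{\geq 1-x_2}_{Step2}(\leq y_2)$ with $x_2=y_2=10^{-4}$ guarantees that in all but a $10^{-4}$ fraction of slots at most a $10^{-4}$ fraction of the $n/2$ channels are jammed; since $u$ listens on a uniformly random channel with probability $p$, this contributes at most $Rp(x_2+y_2)=2Rp/10^4=Rp/5000$ to $\mathbb{E}[N_n]$. For collisions, each other active node broadcasts on $u$'s channel with probability $2p/n$, so the expected number of competing broadcasters is at most $2p$ and $\Pr(\ge 2)\le 2p^2$, contributing at most $2Rp^3$. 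The crucial point is that $i\ge\hat{i}+11/\alpha$ together with $\hat{i}>\lg{n}$ forces $i-\hat{j}>11/\alpha$, hence $p=2^{-\alpha(i-\hat{j})}/2<2^{-12}$, making the collision term negligible relative to $Rp$. Consequently $\mathbb{E}[N_n]<Rp/4000$, safely below the threshold $Rp/3000$.

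Finally I would apply a Chernoff bound. Because Eve is oblivious, her jamming schedule is fixed independently of the nodes' coins, and within step two every node re-randomizes its channel and its broadcast/listen choice each slot; conditioned on any fixed jamming schedule realizing $\mathcal{E}^{\geq 1-x_2}_{Step2}(\leq y_2)$, the per-slot indicators that $u$ hears noise are therefore independent. A multiplicative Chernoff bound then gives $\Pr(N_n\ge Rp/3000)\le\exp(-\Omega(Rp))$. Since $Rp=\Theta(2^{\alpha(i-\hat{j})}i^3)=\Omega(i^3)$ and $i\ge\hat{i}>\lg{n}$, we have $Rp=\Omega(\hat{i}^2\lg{n})$, so the failure probability is $\exp(-\Omega(\hat{i}^2\lg{n}))=n^{-\Theta(\hat{i}^2)}$, as claimed.

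I expect the main obstacle to be the bookkeeping in the expectation bound, namely ensuring that the jamming and collision contributions together stay strictly below $Rp/3000$ with enough slack for the Chernoff step. The collision estimate is precisely what forces the stronger $11/\alpha$ (rather than merely $2/\alpha$) epoch gap in the hypothesis: the gap guarantees $\alpha(i-\hat{j})>11$, hence $p<2^{-12}$, which is exactly what renders the $O(Rp^3)$ collision term negligible. Recognizing that the hypothesis is tailored to drive $p$ down to this small-constant regime is the key conceptual step; the monotonicity observation (halted nodes only help), the independence argument, and the final Chernoff estimate are then routine.
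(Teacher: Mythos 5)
Your proposal is correct, but it takes a genuinely different route for the one non-trivial step: bounding the collision contribution to $N_n$. The paper never pins down $\hat{j}$ in this proof; instead it invokes Claim \ref{claim-multicastadv-helper-phase-relation-2} to deduce that $N_s\geq 0.9Rp$ in the \texttt{helper} phase forces $(1-p(\hat{i},\hat{j})/2^{\hat{j}})^{n-1}\geq 0.89$, and then uses $p(i,\hat{j})\leq p(\hat{i},\hat{j})/2048$ to transport this into a silence probability of at least $0.89^{1/1024}$ in the later phase, so the per-slot collision-noise probability is at most $1-0.89^{1/1024}(1-y_2)$, which combined with the jamming term squeaks under $1/3000$. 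You instead use Lemmas \ref{lemma-multicastadv-ij-range-1}, \ref{lemma-multicastadv-ij-range-2}, and \ref{lemma-multicastadv-ij-range-3} to conclude $\hat{j}=\lg{n}-1$ (so $n/2$ channels), and then apply a direct union-over-pairs bound $\Pr(\geq 2\text{ broadcasters})\leq 2p^2$ together with $p<2^{-12}$ (forced by the $11/\alpha$ epoch gap) to make the collision term $2Rp^3$ negligible outright; your accounting of the jamming term as $Rp(x_2+y_2)$, the independence argument for the Chernoff step, and the translation $Rp=\Omega(i^3)=\Omega(\hat{i}^2\lg{n})$ are all sound, and the extra $n^{-\Theta(\hat{i}^2)}$ failure probability from pinning $\hat{j}$ folds into the stated bound. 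Your version has cleaner arithmetic and much more slack below the $Rp/3000$ threshold, at the cost of leaning on the identification of $\hat{j}$; the paper's version is agnostic to $\hat{j}$, which is precisely why it carries over verbatim to \MultiCastAdvC (where \texttt{helper} status is instead obtained at $\hat{j}=\lg{C}$ and your $2p/n$ per-channel calculation would need to be redone).
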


\smallskip\noindent\textbf{Putting things together.} We are now ready to state the main theorem and sketch its proof. (Once again, the complete proof is provided in Appendix \ref{sec-appendix-proof}.)

\begin{theorem}\label{thm-multicastadv}
The \MultiCastAdv algorithm guarantees the following properties with high probability: (a) all nodes receive the message; (b) all nodes terminate within $O(T/(n^{1-2\alpha})\cdot\lg^3{T}+n^{2\alpha}\cdot\lg^3{n})=\tilde{O}(T/(n^{1-2\alpha})+n^{2\alpha})$ time slots; and (c) the cost of each node is $O(\sqrt{T/(n^{1-2\alpha})}\cdot\lg^3{T}+n^{2\alpha}\cdot\lg^3{n})=\tilde{O}(\sqrt{T/(n^{1-2\alpha})}+n^{2\alpha})$.
\end{theorem}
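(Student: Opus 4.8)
The plan is to pin the whole execution to its \emph{last blocking epoch} and then invoke the fast-termination machinery. First I would set $i^*$ to be the maximum of the largest index of a blocking epoch (Definition \ref{def-epoch-blocking}) and $\lg n + \Theta(1/\alpha)$, the latter floor capturing the minimum number of epochs needed to terminate even without jamming. The key accounting step is a lower bound on Eve's cost in any blocking epoch $i$. Since the cheaper of the two blocking conditions is the step-two condition with $x_2=y_2=1/10^4$, making epoch $i$ blocking forces Eve to jam at least a $1/10^4$ fraction of the $n/2$ channels of phase $\lg n-1$ for at least a $1/10^4$ fraction of its $R(i,\lg n-1)$ slots, costing $\Omega\!\left(n\cdot R(i,\lg n-1)\right)=\Omega\!\left(n^{1-2\alpha}\,2^{2\alpha i}\,i^3\right)$ energy, where I used $R(i,\lg n-1)=\Theta(2^{2\alpha(i-\lg n)}\,i^3)$. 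Because Eve's total budget is $T$, applying this to epoch $i^*$ yields $2^{2\alpha i^*}(i^*)^3=O(T/n^{1-2\alpha})$, and in particular $i^*=O(\lg T)$.

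Next I would show all nodes finish within $O(1/\alpha)$ epochs after $i^*$. The range lemmas (\ref{lemma-multicastadv-ij-range-1}, \ref{lemma-multicastadv-ij-range-2}, \ref{lemma-multicastadv-ij-range-3}) confine every status change to phase $\lg n-1$ of epochs $i>\lg n$, so it suffices to track that single phase. By definition every epoch after $i^*$ is non-blocking, so I chain the fast-termination lemmas: Lemma \ref{lemma-multicastadv-fast-term-inform} forces all \texttt{uninformed} nodes to become \texttt{informed}, Lemma \ref{lemma-multicastadv-fast-term-helper} then pushes all nodes to \texttt{helper} (or \texttt{halt}) status, and Lemma \ref{lemma-multicastadv-fast-term-halt} finishes them off within $11/\alpha$ further epochs. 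Throughout this chain the precondition ``all nodes active'' is maintained because Lemma \ref{lemma-multicastadv-halt-imply-helper} guarantees that the first node to halt does so only after every node has already reached \texttt{helper} status, so no node halts prematurely and breaks the others' progress. Correctness (part (a)) comes for free: a node halts only after becoming a \texttt{helper}, and Lemma \ref{lemma-multicastadv-helper-imply-inform} says the first \texttt{helper} cannot appear until every node already knows $m$.

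With the last relevant epoch identified as $i^*+O(1/\alpha)$, the two complexity bounds reduce to geometric sums. Summing phase lengths, epoch $i$ spans $\sum_{j=0}^{i-1}2R(i,j)=\Theta(2^{2\alpha i} i^3)$ slots, so the total running time is $\Theta(2^{2\alpha i^*}(i^*)^3)$; substituting $2^{2\alpha i^*}=O(T/n^{1-2\alpha})$ and $(i^*)^3=O(\lg^3 T)$ gives $O\!\left(T/n^{1-2\alpha}\cdot\lg^3 T\right)$, plus the $\Theta(n^{2\alpha}\lg^3 n)$ floor coming from the minimum epoch $\lg n+\Theta(1/\alpha)$. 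For the energy, a node pays $\Theta(p(i,j))$ per slot, so epoch $i$ costs $\sum_{j}\Theta(p(i,j)\,R(i,j))=\sum_j\Theta(2^{\alpha(i-j)}i^3)=\Theta(2^{\alpha i}i^3)$, and the total is $\Theta(2^{\alpha i^*}(i^*)^3)$. The square-root saving appears precisely because the cost exponent ($\alpha i^*$) is half the time/jamming exponent ($2\alpha i^*$): from $2^{2\alpha i^*}=O(T/n^{1-2\alpha})$ we get $2^{\alpha i^*}=O(\sqrt{T/n^{1-2\alpha}})$, hence each node's cost is $O\!\left(\sqrt{T/n^{1-2\alpha}}\cdot\lg^3 T + n^{2\alpha}\lg^3 n\right)$.

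I expect the main obstacle to be the probabilistic bookkeeping rather than any single estimate. The lemmas hold only with high probability and only under the standing assumption that all nodes are active, so the argument must be threaded as an induction over epochs in which the good events of every phase up to $i^*+O(1/\alpha)$ hold simultaneously. The per-phase failure probabilities are of the form $n^{-\Theta(i^2)}$ (this is exactly why $R(i,j)$ carries the $i^3$ factor, boosting concentration as epochs grow); summing $\sum_{i}i\cdot n^{-\Theta(i^2)}$ over all epochs and union-bounding over the $n$ nodes keeps the total failure probability at $n^{-\Theta(1)}$, but care is required because the execution length is \emph{a priori} unbounded and grows with $T$. The most delicate point is arguing that the ``all active'' invariant is never violated before termination, which is where Lemma \ref{lemma-multicastadv-halt-imply-helper} must be applied at the first halting node to close the loop.
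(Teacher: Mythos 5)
Your proposal follows essentially the same route as the paper's proof: it pins the execution to the last blocking epoch, lower-bounds Eve's cost there by $\Omega(n^{1-2\alpha}\cdot 2^{2\alpha l}\cdot l^3)$, chains Lemmas \ref{lemma-multicastadv-fast-term-inform}, \ref{lemma-multicastadv-fast-term-helper}, and \ref{lemma-multicastadv-fast-term-halt} to finish within $O(1/\alpha)$ further epochs, and uses Lemmas \ref{lemma-multicastadv-helper-imply-inform} and \ref{lemma-multicastadv-halt-imply-helper} together with the range lemmas for correctness and the all-active invariant (your $i^*=\max\{l,\lg n+\Theta(1/\alpha)\}$ simply merges the paper's two cases $l\geq\lg n$ and $l<\lg n$). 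The one imprecision is the intermediate claim that the total energy is $\Theta(2^{\alpha i^*}(i^*)^3)$: per-epoch concentration of a node's cost fails with high probability in early epochs (where $R(i,j)\,p(i,j)$ can be $o(\lg n)$), so the paper instead bounds the cost of the first $\lg n$ epochs deterministically by their total length, which is exactly the $\Theta(n^{2\alpha}\lg^3 n)$ additive term you already carry in your final bound.
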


\begin{proof}[Proof sketch]
To begin with, we prove the following claims: (a) due to Lemma \ref{lemma-multicastadv-ij-range-1} and Lemma \ref{lemma-multicastadv-helper-imply-inform}, when the first \texttt{helper} appears, all nodes must have learned the message $m$, w.h.p.; (b) due to Lemma \ref{lemma-multicastadv-ij-range-1} and Lemma \ref{lemma-multicastadv-halt-imply-helper}, when the first \texttt{halt} node appears, all nodes must have at least progressed to \texttt{helper} status, w.h.p.; and (c) due to Lemma \ref{lemma-multicastadv-ij-range-1}, Lemma \ref{lemma-multicastadv-ij-range-2}, and Lemma \ref{lemma-multicastadv-ij-range-3}, for each node $u$, if it obtained \texttt{helper} status in phase $(\hat{i}_u,\hat{j}_u)$, then it must be the case that $\hat{i}_u>\lg{n}$ and $\hat{j}_u=\lg{n}-1$, w.h.p.

Let $l$ denote the last epoch that is blocking while some node is still active at the beginning of it. Due to the above analysis, by the end of epoch $l$, the status of all nodes must belong to exactly one of the following four cases: (1) all nodes are active and either \texttt{uninformed} or \texttt{informed} (and there exists at least one \texttt{uninformed} node); (2) all nodes are active and either \texttt{informed} or \texttt{helper} (and there exists at least one \texttt{informed} node); (3) every node is either \texttt{helper} or has terminated (and there exists at least one \texttt{helper} node); or (4) all nodes have terminated.

Following analysis consider two scenarios: $l\geq\lg{n}$ and $l<\lg{n}$.

\textit{The $l\geq\lg{n}$ scenario.} In this situation, due to to Lemma \ref{lemma-multicastadv-fast-term-inform}, Lemma \ref{lemma-multicastadv-fast-term-helper}, and Lemma \ref{lemma-multicastadv-fast-term-halt}, we show no matter which case (among the aforementioned four cases) the system is in by the end of epoch $l$, all nodes will halt by the end of epoch $l+18/\alpha$, w.h.p.

Thus, to bound the runtime and cost of honest nodes, we need to bound $l$. Since epoch $l$ is blocking, we can show Eve's cost during epoch $l$ is at least $\Theta(1)\cdot x_2y_2\cdot 2^{2\alpha l}\cdot l^3\cdot 2^{(1-2\alpha)(\lg{n}-1)}$, implying $T\geq (x_2y_2/2)\cdot 2^{2\alpha l}\cdot l^3\cdot n^{(1-2\alpha)}\geq (x_2y_2/2)\cdot 2^{2\alpha l}\cdot n^{(1-2\alpha)}$. As a result, $l\leq (\lg(T/((x_2y_2/2)\cdot n^{(1-2\alpha)})))/(2\alpha)$.

Therefore, each node will halt within $\sum_{i=1}^{l+18/\alpha}\sum_{j=0}^{i-1}\Theta(1)\cdot 2^{2\alpha i}\cdot 2^{-2\alpha j}\cdot i^3=O(T/(n^{(1-2\alpha)})\cdot\lg^3{T})$ slots. On the other hand, we show the total energy consumption of a node is at most the sum of $\sum_{i=1}^{\lg{n}}\sum_{j=0}^{i-1}\Theta(1)\cdot 2^{2\alpha i}\cdot 2^{-2\alpha j}\cdot i^3$ and $\sum_{i=\lg{n}+1}^{l+18/\alpha}\sum_{j=0}^{i-1}\Theta(1)\cdot 2^{\alpha i}\cdot 2^{-\alpha j}\cdot i^3$, implying the total cost of each node is $O(\sqrt{T/(n^{(1-2\alpha)})}\cdot\lg^3{T}+n^{2\alpha}\cdot\lg^3{n})$, with high probability.

\textit{The $l<\lg{n}$ scenario.} This situation is easier to analyze. By an analysis similar to the $l\geq\lg{n}$ scenario, we show all nodes will terminate by the end of epoch $\lg{n}+18/\alpha$, with high probability. Hence, the total runtime and cost for each node in this situation is at most $\sum_{i=1}^{\lg{n}+18/\alpha}\sum_{j=0}^{i-1}\Theta(1)\cdot 2^{2\alpha i}\cdot 2^{-2\alpha j}\cdot i^3=O(n^{2\alpha}\cdot\lg^3{n})$.

We conclude the proof by noting that the $l<\lg{n}$ scenario also includes the situation in which Eve is not present.
\end{proof}

\section{Handling Limited Channel Availability}

Previous algorithms might need $\Omega(n)$ channels, but wireless spectrum is a scarce resource. We now discuss how to implement our algorithms when limited number of channels are available.

Consider a multi-channel algorithm $\mathcal{A}$. For each slot $s$ during the execution of $\mathcal{A}$, for each node $u$, let $\mathcal{C}(s,u)$ denote the set of channels that $\mathcal{A}$ may instruct $u$ to use in slot $s$. We say $\mathcal{A}$ is \emph{channel-uniform} if for every slot $s$, $\mathcal{C}(s,u)$ are identical for all nodes that are still active, and we use $\mathcal{C}(s)$ to denote $\mathcal{C}(s,u)$. Clearly, the three algorithms introduced so far are all channel-uniform: for \MultiCastCore and \MultiCast, $\mathcal{C}(s)$ is always $[1,n/2]$; and for each slot in phase $(i,j)$ of \MultiCastAdv, $\mathcal{C}(s)$ is $[1,2^j]$.

There exists a simple mechanism to simulate any channel-uniform algorithm $\mathcal{A}$ in a radio network $\mathcal{N}_C$ in which only $C$ channels are available. Specifically, given $\mathcal{A}$, we build $\mathcal{A}_C$ that works in $\mathcal{N}_C$ in the following way. $\mathcal{A}_C$ contains multiple \emph{rounds}, each of which is used to simulate one slot of $\mathcal{A}$. In particular, for every slot $s$ in $\mathcal{A}$, the corresponding round in $\mathcal{A}_C$ contains $\lceil |\mathcal{C}(s)|/C\rceil$ slots. For every node that is active in slot $s$ in $\mathcal{A}$, if it decides to access the $k$\textsuperscript{th} channel in $\mathcal{C}(s)$ in $\mathcal{A}$, then it will use channel $((k-1)\mod C)+1$ in slot $\lfloor (k-1)/C\rfloor+1$ in the corresponding round in $\mathcal{A}_C$.

If there is no adversary or environmental interference, clearly $\mathcal{A}_C$ can perfectly simulate $\mathcal{A}$, with some overhead on running time. In fact, in the context of resource competitive broadcast, applying this simple simulation mechanism to \MultiCastCore and \MultiCast also preserves resource competitiveness.

Concretely, we now describe how to adjust \MultiCast to work in a radio network in which only $C\leq n/2$ channels are available. Call this variant \MultiCastC. For the ease of presentation, assume $n/2$ is some multiple of $C$. (Otherwise, round down $C$.) Similar to \MultiCast, \MultiCastC contains multiple iterations, where the $i$\textsuperscript{th} iteration contains $R_i=\Theta(i\cdot 4^i\cdot \lg^2{n})$ rounds. (Again, initially we set $i=6$.) Recall each round simulates one slot of \MultiCast, thus each round in \MultiCastC contains $n/(2C)$ slots, implying the actual length of the $i$\textsuperscript{th} iteration in \MultiCastC is $\Theta((n/C)\cdot i\cdot 4^i\cdot \lg^2{n})$ slots. The pseudocode of \MultiCastC is provided in Figure \ref{fig-alg-multicastc} in Appendix \ref{sec-appendix-figure}.

The analysis of \MultiCastC is essentially identical to that of the \MultiCast's. Specifically, by substituting ``slots'' with ``rounds'' in the statements of Lemma \ref{lemma-multicast-epidemic-bcst}, \ref{lemma-multicast-halt-imply-inform}, and \ref{lemma-multicast-competitive-halt}, the proofs still hold. Moreover, going through the proof of Theorem \ref{thm-multicast} with updated iteration length immediately leads to the following corollary:

\begin{corollary}\label{corollary-multicastc}
When $1\leq C\leq n/2$ channels are available, the \MultiCastC algorithm guarantees the following properties with high probability: (a) all nodes receive the message and terminate within $O(T/C+(n/C)\cdot\lg^2{n})$ slots; and (b) the cost of each node is $O(\sqrt{T/n}\cdot\sqrt{\lg{T}}\cdot\lg{n}+\lg^2{n})$.
\end{corollary}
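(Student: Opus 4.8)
The plan is to establish Corollary~\ref{corollary-multicastc} by transferring the entire analysis of Theorem~\ref{thm-multicast} through the round-to-slot simulation, rather than re-deriving anything from scratch. The key observation is that each \emph{round} of \MultiCastC faithfully simulates one \emph{slot} of \MultiCast: within a round, every active node visits each of its $n/(2C)$ designated channels exactly once (via the index map $k\mapsto(((k-1)\bmod C)+1,\ \lfloor(k-1)/C\rfloor+1)$), so the joint distribution of channel choices, listening/broadcasting decisions, and observed feedback aggregated over a round is identical to that of the corresponding \MultiCast slot. Consequently, I would first argue that Lemmas~\ref{lemma-multicast-epidemic-bcst}, \ref{lemma-multicast-halt-imply-inform}, and \ref{lemma-multicast-competitive-halt} hold verbatim for \MultiCastC after the textual substitution of ``slots'' by ``rounds,'' since their proofs only reference per-slot channel-status distributions and independence across slots, both of which are preserved at the granularity of rounds.

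Next I would recompute the two complexity quantities. For the \emph{energy cost} (part (b)), the crucial point is that each node's expended energy is unchanged by the simulation: in a round, a node performs at most one \texttt{listen} or \texttt{broadcast} action in total (it acts only on the single channel it would have chosen in the simulated \MultiCast slot), so the cost per round equals the cost per simulated slot. Hence the per-node energy bound of Theorem~\ref{thm-multicast}, namely $O(\sqrt{T/n}\cdot\sqrt{\lg{T}}\cdot\lg{n}+\lg^2{n})$, carries over unchanged. I would make this per-round cost invariance explicit, as it is the reason resource competitiveness is preserved despite the time blowup. For the \emph{time} (part (a)), each simulated slot now costs $n/(2C)$ actual slots, so iteration $i$ occupies $\Theta((n/C)\cdot i\cdot4^i\cdot\lg^2{n})$ slots. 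Re-running the termination argument from the proof of Theorem~\ref{thm-multicast}: letting $l$ be the last iteration in which Eve jams heavily, Eve's cost forces $T\gtrsim n\cdot l\cdot4^l\cdot\lg^2{n}$ (the energy accounting is in rounds, hence unchanged), so $4^l=O(T/(n\lg^2 n))$; multiplying the round-count bound $\Theta(l\cdot4^l\cdot\lg^2 n)$ through the iterations up to $l+1$ by the $(n/C)$ slot-per-round factor yields total time $O((n/C)\cdot l\cdot 4^l\cdot\lg^2{n})=O(T/C)$, to which one adds the cost of running through the early iterations (when $T$ is small), contributing the additive $O((n/C)\cdot\lg^2{n})$ term.

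The step I expect to require the most care is verifying that the adversary's power is not inadvertently amplified by the simulation. In \MultiCast the ``blocking'' threshold is phrased in terms of the fraction of the $n/2$ channels Eve jams per slot; in \MultiCastC Eve sees $C$ physical channels per actual slot, and a single simulated slot is spread across $n/(2C)$ actual slots. I would argue that jamming a $\beta$-fraction of the $C$ physical channels in every actual slot of a round is equivalent, from the nodes' perspective, to jamming a $\beta$-fraction of the $n/2$ logical channels in the simulated slot, so Eve's per-unit-energy disruptive capability is exactly preserved and the energy-accounting constants (the $0.02nR$-style thresholds) translate directly into the round-based analysis. Once this equivalence is stated cleanly, the remaining work is the purely arithmetic substitution described above, and the corollary follows by invoking the proof of Theorem~\ref{thm-multicast} with iteration length scaled by $n/(2C)$ and energy cost left intact.
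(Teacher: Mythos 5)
Your proposal is correct and takes essentially the same route as the paper, which likewise proves the corollary by noting that each round of \MultiCastC faithfully simulates one slot of \MultiCast, substituting ``slots'' with ``rounds'' in Lemmas~\ref{lemma-multicast-epidemic-bcst}, \ref{lemma-multicast-halt-imply-inform}, and \ref{lemma-multicast-competitive-halt}, and re-running the proof of Theorem~\ref{thm-multicast} with the iteration length scaled by $n/(2C)$ while per-node energy per round stays equal to per-slot energy. Your explicit check that Eve's per-unit-energy jamming power on logical channels is exactly preserved by the index map is a point the paper leaves implicit, and it is argued correctly.
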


Applying this simulation mechanism to \MultiCastAdv, however, may result in poor time consumption (though resource competitiveness is preserved). Roughly speaking, this is because simulating phase $j$ when $j\gg\lg{n}$ costs too much time, yet Eve only needs to focus on jamming phases in which $j=\lg{n}-1$.

Interestingly, \MultiCastAdv is robust enough so that with a simple ``cut-off'' mechanism, we can devise a variant called \MultiCastAdvC that preserves competitiveness and has desirable running time. Specifically, each epoch $i$ of \MultiCastAdvC is just like epoch $i$ of \MultiCastAdv, except that it does not contain phases in which $j>\lg{C}$, if such phases exist in epoch $i$. We also make a small adjustment to nodes' behavior in phases where $j=\lg{C}$: in each such phase, for each node, if it is currently \texttt{informed}, then it will become \texttt{helper} if $N_m\geq 1.5Rp^2$ and $N_s\geq 0.9Rp$ (i.e., we remove the condition $N'_m\leq 2.2Rp^2$). See Figure \ref{fig-alg-multicastadvc} in Appendix \ref{sec-appendix-figure} for the pseudocode of \MultiCastAdvC.

We now discuss the performance of \MultiCastAdvC, and we consider two complement cases: $C>n/2$ and $C\leq n/2$.

When $C>n/2$, our analysis of \MultiCastAdv is still correct in the \MultiCastAdvC setting. This is because, in \MultiCastAdv, the ``good'' phases we really care about are the ones in which $j=\lg{n}-1$, but when $C>n/2$, these phases also exist in \MultiCastAdvC. Therefore, if $C>n/2$, \MultiCastAdvC provides the same guarantee as \MultiCastAdv: Theorem \ref{thm-multicastadv} still holds.

When $C\leq n/2$, the situation becomes a bit more complicated. To begin with, again nodes can only obtain \texttt{helper} status in epochs $i>\lg{n}$. Moreover, \MultiCastAdvC retains the nice properties that: (a) when some \texttt{helper} appears, all nodes have already learned the message; and (b) when some node \texttt{halt}, all nodes have at least obtained \texttt{helper} status. However, now nodes will not become \texttt{helper} in phases in which $j=\lg{n}-1$. Instead, this status change will happen when $j=\lg{C}$. Lastly, the ``fast termination'' properties are also affected slightly: nodes need some more time to reach \texttt{halt} status once Eve ceases jamming, but the impact to performance is limited. Due to space constraint, see Appendix \ref{sec-appendix-multicastadvc} for a more detailed analysis of \MultiCastAdvC when $C\leq n/2$, here we only state the guarantees enforced by it.

\begin{theorem}\label{thm-multicastadvc}
When $1\leq C\leq n/2$ channels are available, the \MultiCastAdvC algorithm guarantees the following properties with high probability: (a) all nodes receive the message; (b) all nodes terminate within $\tilde{O}(T/(C^{1-2\alpha})+n^{2+2\alpha}/C^{2-2\alpha})$ time slots; and (c) the cost of each node is $\tilde{O}(\sqrt{T/(C^{1-2\alpha})}+n^{2+2\alpha}/C^{2-2\alpha})$.
\end{theorem}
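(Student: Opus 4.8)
The plan is to reduce Theorem~\ref{thm-multicastadvc} (the $C\le n/2$ case; the complementary $C>n/2$ case is already covered by Theorem~\ref{thm-multicastadv}) to a re-run of the \MultiCastAdv analysis in which the ``good'' phase $\lg n-1$ is replaced by the cut-off phase $\lg C$. First I would transplant the three ``good $(i,j)$'' lemmas (Lemmas~\ref{lemma-multicastadv-ij-range-1}--\ref{lemma-multicastadv-ij-range-3}): the argument that no node becomes \texttt{helper} while $i\le\lg n$ is unchanged, and the argument that \texttt{helper} status can only be reached at $j=\lg C$ now follows because the cut-off deletes all phases with $j>\lg C$, while for $j<\lg C$ the original guard $N'_m\le 2.2Rp^2$ (still in force there) forbids it. The two structural invariants ``first \texttt{helper} $\Rightarrow$ all \texttt{informed}'' and ``first \texttt{halt} $\Rightarrow$ all \texttt{helper}'' (analogues of Lemmas~\ref{lemma-multicastadv-helper-imply-inform} and~\ref{lemma-multicastadv-halt-imply-helper}) then carry over with $j=\lg C$, and the first of these immediately yields part~(a).

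The technical heart is recomputing the step-two statistics at $j=\lg C$, where $n$ nodes now share only $C\le n/2$ channels. Writing $x=np/C$ for the expected per-channel occupancy and using $p=p(i,\lg C)=2^{-\alpha(i-\lg C)}/2$, I would show $\mathbb{E}[N_s]=\Theta(Rp\,e^{-x})$ and $\mathbb{E}[N_m]=\Theta\!\big(Rp^2\,(n/C)\,e^{-x}\big)$. Hence the silence test $N_s\ge 0.9Rp$ forces $x=O(1)$, i.e.\ $p=O(C/n)$, and once $x=O(1)$ the message test $N_m\ge 1.5Rp^2$ holds automatically since $n/C\ge 2$; this is precisely why dropping the $N'_m$ guard at $j=\lg C$ is both necessary and safe. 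Translating $p=O(C/n)$ back through the definition of $p$ pins down the first epoch at which nodes can become \texttt{helper}, $i^\star=\lg C+\Theta(\lg(n/C)/\alpha)$, and one checks $i^\star>\lg n$ using $\alpha<1/4$. I would also verify, by a birthday-paradox argument like that behind Lemma~\ref{lemma-multicast-epidemic-bcst}, that step one of phase $(i^\star,\lg C)$, of length $R=\Theta((n/C)^2 i^3)$, is long enough to carry the informed set from the source to all $n$ nodes, so that dissemination and the \texttt{helper} transition occur together.

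With these thresholds in hand the epoch count follows the proof sketch of Theorem~\ref{thm-multicastadv}. I classify epochs as blocking/non-blocking via Definition~\ref{def-epoch-blocking} with the good phase set to $\lg C$, let $l$ be the last blocking epoch that still has an active node, and lower-bound Eve's cost in it by $\Theta\!\big(C\cdot R(l,\lg C)\big)=\Theta\!\big(C^{1-2\alpha}\,2^{2\alpha l}\,l^3\big)$, so that $2^{2\alpha l}=\tilde O(T/C^{1-2\alpha})$. The fast-termination analogues (Lemmas~\ref{lemma-multicastadv-fast-term-inform}--\ref{lemma-multicastadv-fast-term-halt}, phrased for phase $\lg C$) then show every node halts by epoch $\max\{l,i^\star\}+O(1/\alpha)$. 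Summing the per-epoch work $\sum_{j\le\lg C}R(i,j)=\Theta(2^{2\alpha i}i^3)$ over $i$ up to this bound gives runtime $\tilde O(2^{2\alpha l}+2^{2\alpha i^\star})$; since $2^{2\alpha i^\star}=\Theta(n^2/C^{2-2\alpha})$ I bound it crudely by $n^{2+2\alpha}/C^{2-2\alpha}$ to get part~(b). For the per-node cost I replace $R(i,j)$ by $R(i,j)p(i,j)=\Theta(2^{\alpha(i-j)}i^3)$, summing to $\tilde O(2^{\alpha l}+2^{\alpha i^\star})=\tilde O(\sqrt{T/C^{1-2\alpha}}+n/C^{1-\alpha})$, and bound the second term by the same $n^{2+2\alpha}/C^{2-2\alpha}$ to match part~(c).

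The main obstacle I expect is the density mismatch at the cut-off phase: because $2^{j+1}=2C$ underestimates $n$, the per-channel occupancy $x=np/C$ in step two of phase $\lg C$ is a factor $\Theta(n/C)$ larger than in the perfectly estimated phase of \MultiCastAdv, so the clean separations among $N_s$, $N_m$, and $N'_m$ that powered Lemmas~\ref{lemma-multicastadv-ij-range-3} and~\ref{lemma-multicastadv-halt-imply-helper} no longer hold verbatim. Re-proving, without the $N'_m$ guard at $j=\lg C$, that (i) the silence test alone still isolates $j=\lg C$ among the surviving phases and (ii) \texttt{halt} still forces universal \texttt{helper} status is where the concentration inequalities (Chernoff over the $\Theta(Rp)$ listening slots) must be redone with the new occupancy $x$, and where the slack in the stated time and cost bounds originates.
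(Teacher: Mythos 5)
Your proposal follows essentially the same route as the paper's own proof: transplanting the good-phase lemmas with $\lg n-1$ replaced by the cut-off phase $\lg C$, justifying the removal of the $N'_m$ guard there via the occupancy/silence-test argument, lower-bounding Eve's cost in the last blocking epoch by $\Theta(C^{1-2\alpha}2^{2\alpha l}l^3)$, and deriving the $n^{2+2\alpha}/C^{2-2\alpha}$ term from the threshold epoch $\lg C+\Theta(\lg(n/C)/\alpha)$ after which the fast-termination lemmas apply. The quantitative steps all check out against the paper's Corollaries in Appendix~\ref{sec-appendix-multicastadvc}, so the plan is correct and not materially different.
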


Notice, when $C\leq n/2$, the $O(T/C)$ and $\tilde{O}(T/(C^{1-2\alpha}))$ term in the runtime of \MultiCastC and \MultiCastAdvC are optimal or near optimal, as Eve can jam all $C$ channels for at least $T/C$ slots. On the other hand, the $\tilde{O}(\sqrt{T/n})$ cost in \MultiCastC matches the currently best known 1-to-$n$ broadcast algorithm~\cite{gilbert14}, and the $\tilde{O}(\sqrt{T/(C^{1-2\alpha})}$ cost in \MultiCastAdvC remains to be very competitive. In short, our algorithms prove that having multiple channels indeed allows faster message dissemination, without sacrificing resource competitiveness. Even better, the more channels we have, the faster we can be!

Lastly, we note that it is not easy to covert a single-channel algorithm into a $C$-channels algorithm, with a $\Theta(C)$ speedup in time complexity. Specifically, the simple idea of ``grouping'' $C$ slots of the single-channel algorithm into one slot of the $C$-channels algorithm could fail: a node can broadcast or listen in multiple slots within a time window of $C$ slots in the single-channel setting, but usually it cannot access multiple channels simultaneously in one slot in a $C$-channels network.

\section{Future Work}\label{sec-future-work}

There are several interesting directions worth further exploration, and the most natural one is allowing Eve to be adaptive. We suspect \MultiCast and \MultiCastAdv can handle such more powerful adversary with few (or even no) modifications, while preserving time complexity and resource competitiveness. The main challenge, however, is to develop proper techniques for algorithm analysis. Another intriguing question is lower bounds. When $C=O(n)$, our algorithms (almost) meet the trivial $\Omega(T/C)$ time complexity lower bound, but what if $C=\omega(n)$? Besides, will having multiple channels or allowing Eve to be oblivious change the current lower bounds (see, e.g., \cite{gilbert14}) on energy consumption?


\clearpage
\bibliographystyle{ACM-Reference-Format}
\bibliography{./paper}

\clearpage
\appendix
\section*{Appendix}

\section{Omitted Proofs}\label{sec-appendix-proof}

\begin{proof}[\underline{Proof of Claim \ref{claim-multicastcore-epidemic-bcst-1}}]
Let $p=1/64$ denote the broadcast/listen probability of a node, and $c=n/2$ denote the number of channels. Fix a slot in $\mathcal{R}_1$ in which $t\leq n/2$, let $F$ be a random variable denoting the number of good channels. Let $F_i$ be an indicator random variable taking value one iff the $i$\textsuperscript{th} channel is good, let $U_i$ be an indicator random variable taking value one iff the $i$\textsuperscript{th} channel is not jammed by Eve, and let $V_i$ be an indicator random variable taking value one iff exactly one informed node broadcasts on channel $i$. Here, $1\leq i\leq c$. Since nodes take action independently, and since Eve does not know the random bits generated by nodes within current slot, we know $\Pr(F_i=1)=\Pr(U_i=1)\cdot\Pr(V_i=1)=\Pr(U_i=1)\cdot\left(t\cdot(p/c)\cdot(1-p/c)^{t-1}\right)\geq \Pr(U_i=1)\cdot(pt/c)\cdot\exp\left(-2\cdot(p/c)\cdot(t-1)\right)\geq \Pr(U_i=1)\cdot(pt/c)\cdot e^{-2p}$. By linearity of expectation, we know $\mathbb{E}[F]=\sum_{i=1}^{c}\mathbb{E}[F_i]=\sum_{i=1}^{c}\Pr(F_i=1)\geq (pt/c)\cdot e^{-2p}\cdot\sum_{i=1}^{c}\Pr(U_i=1)=(pt/c)\cdot e^{-2p}\cdot U$. Here, $U=\sum_{i=1}^{c}U_i$ denotes the number of channels not jammed by Eve in this slot. But we already know $U\geq c/10$, thus $\mathbb{E}[F]\geq (pt/10)\cdot e^{-2p}>t/1000$.

Our next step is to show the concentration of $F$ via the method of bounded differences (see, e.g., Chapter 5 of \cite{dubhashi09} for more details). Let $X_j$ be a random variable denoting the channel chosen by the $j$\textsuperscript{th} informed node, where $1\leq j\leq t$. Clearly, $F$ is a function of $(X_1,\cdots,X_t)$. I.e., $F=f(X_1,\cdots,X_t)$. Notice that $\{X_1,\cdots,X_t\}$ is a set of independent random variables. Moreover, the function $f$ satisfies the Lipschitz condition with constant two for each coordinate, as changing the channel assignment of any node changes the number of good channels by at most two. Thus, by the method of bounded differences (specifically, Corollary 5.2 of \cite{dubhashi09}), $\Pr(F<\mathbb{E}[F]-\mathbb{E}[F]/2)\leq\exp\left(-2\cdot(\mathbb{E}[F]/2)^2/(4t)\right)=\exp\left(-p^2t/(800e^{4p})\right)\leq\exp\left(-p^2/(800e^{4p})\right)<1-1/10^7$. I.e., with probability at least $1/10^7$, there are at least $t/2000$ good channels.
\end{proof}

\begin{proof}[\underline{Proof of Lemma \ref{lemma-multicast-epidemic-bcst}}]
If an iteration satisfies conditions (a) and (b), then in at least $2bi\cdot 4^i\cdot\lg^2{n}$ slots, at most $9n/20$ channels are jammed by Eve, where $b$ is some sufficiently large constant. Let $\mathcal{R}_1$ denote the collection of the first half of these $2bi\cdot 4^i\cdot\lg^2{n}$ slots, and let $\mathcal{R}_2$ denote the second half. We further divide $\mathcal{R}_1$ into $\lg{n}$ segments, each containing $bi\cdot 4^i\cdot\lg{n}$ slots. Throughout this proof, let $p=1/2^i$ denote the broadcasting/listening probability of a node, and $c=n/2$ denote the number of channels.

We first focus on slots in $\mathcal{R}_1$, and show that at least $n/2$ nodes will be informed by the end of $\mathcal{R}_1$. To see this, consider some segment $\mathcal{S}$ in $\mathcal{R}_1$, consider some node $u$ that is informed at the beginning of $\mathcal{S}$. Consider a slot in $\mathcal{S}$, let $t\in[1,n/2]$ denote the number of informed nodes at the beginning of this slot. Now, if $u$ wants to inform a previously uninformed node in this slot, the following conditions must hold: (a) $u$ decides to broadcast; (b) all other $t-1$ informed nodes do not broadcast on the channel chosen by $u$; (c) at least one uninformed node listens on the channel chosen by $u$; and (d) the channel chosen by $u$ is not jammed by Eve. Therefore, the probability that $u$ informs a previously uninformed node in this slot is at least $p\cdot(1-p/c)^{t-1}\cdot\left(1-(1-p/c)^{n-t}\right)\cdot(1/10)\geq p\cdot(1-p/c)^{n/2}\cdot(1-(1-p/c)^{n/2})\cdot(1/10)\geq p\cdot e^{-np/c}\cdot(1-e^{-np/2c})\cdot(1/10)\geq p\cdot e^{-np/c}\cdot\left(1-(1-np/4c)\right)\cdot(1/10)=(np^2/40c)\cdot e^{-np/c}=p^2/(20e^{2p})$. As a result, by the end of $\mathcal{S}$, the probability that no previously uninformed node is informed by $u$ during $\mathcal{S}$ is at most $\left(1-p^2/(20e^{2p})\right)^{bi\cdot 4^i\cdot\lg{n}}\leq e^{-(b/20e)\cdot i\cdot\lg{n}}$. Take a union bound over all the $O(n)$ nodes that are informed at the beginning of $\mathcal{S}$, we know the number of informed nodes will at least double by the end of $\mathcal{S}$, with probability at least $1-n\cdot e^{-(b/20e)\cdot i\cdot\lg{n}}$. Since there are $\lg{n}$ segments in $\mathcal{R}_1$, take another union bound over these segments and we know the number of informed nodes will at least reach $n/2$ by the end of $\mathcal{R}_1$, with probability at least $1-\lg{n}\cdot n\cdot e^{-(b/20e)\cdot i\cdot\lg{n}}$. For sufficiently large $b$, this probability is at least $1-1/n^{6i}$.

Next, we turn our attention to the slots in $\mathcal{R}_2$. Assume since the beginning of $\mathcal{R}_2$, at least $n/2$ nodes are informed. Consider a slot in $\mathcal{R}_2$ and a node $u$ that is still uninformed at the beginning of this slot. Let $t\in[n/2,n)$ denote the number of informed nodes at the beginning of this slot. Notice, for $u$ to become informed in this slot, the following conditions must hold: (a) $u$ decides to listen; (b) one informed node $v$ broadcasts on the channel chosen by $u$; (c) all informed nodes beside $v$ do not broadcast on the channel chosen by $u$; and (d) the channel chosen by $u$ is not jammed by Eve. Therefore, the probability that $u$ will be informed in this slot is at least $p\cdot t\cdot (p/c)\cdot\left(1-p/c\right)^{t-1}\cdot(1/10)\geq p\cdot(n/2)\cdot(p/c)\cdot\left(1-p/c\right)^{n}\cdot(1/10)\geq p\cdot(n/2)\cdot(p/c)\cdot e^{-2np/c}\cdot(1/10)=(p^2/10)\cdot e^{-4p}$. As a result, by the end of $\mathcal{R}_2$, the probability that $u$ is still uninformed is at most $\left(1-(p^2/10)\cdot e^{-4p}\right)^{bi\cdot 4^i\cdot\lg^2{n}}\leq e^{-(b/10e^2)\cdot i\cdot \lg^2{n}}$. Take a union bound over the $O(n)$ nodes that are uninformed at the beginning of $\mathcal{R}_2$, we know all nodes will be informed by the end of $\mathcal{R}_2$, with probability at least $1-n\cdot e^{-(b/10e^2)\cdot i\cdot \lg^2{n}}$. For sufficiently large $b$, this probability is at least $1-1/n^{6i}$.
\end{proof}

\begin{proof}[\underline{Proof of Lemma \ref{lemma-multicast-halt-imply-inform}}]
Let $\mathcal{E}_1$ be event (a), and $\mathcal{E}_2$ be event (b). Let $\mathcal{E}$ be the event that for at least ten percent of all slots within current iteration, Eve jams at most ninety percent of all $n/2$ channels. We know $\Pr(\mathcal{E}_1\wedge\mathcal{E}_2)=\Pr(\mathcal{E}_1\wedge\mathcal{E}_2\wedge\mathcal{E})+\Pr(\mathcal{E}_1\wedge\mathcal{E}_2\wedge\overline{\mathcal{E}})$.

Due to the proof of Lemma \ref{lemma-multicast-epidemic-bcst}, we know $\Pr(\mathcal{E}_1\wedge\mathcal{E}_2\wedge\mathcal{E})\leq\Pr(\mathcal{E}_2\wedge\mathcal{E})\leq\Pr(\mathcal{E}_2~|~\mathcal{E})\leq 2/n^{6i}$.

On the other hand, we bound $\Pr(\mathcal{E}_1\wedge\mathcal{E}_2\wedge\overline{\mathcal{E}})$ by $\Pr(\mathcal{E}_1~|~\overline{\mathcal{E}})$. If $\overline{\mathcal{E}}$ occurs, then Eve jams at least ninety percent of all $n/2$ channels for at least ninety percent of all slots within iteration $i$. Let $\mathcal{R}_1$ denote the set of slots in which Eve jams at least ninety percent of all $n/2$ channels. To make the number of noisy slots observed by $u$ as small as possible, without loss of generality, assume Eve jams exactly ninety percent of all $n/2$ channels for every slot in $\mathcal{R}_1$. Let $X_j$ be an indicator random variable taking value one iff $u$ listens on a channel jammed by Eve in the $j$\textsuperscript{th} slot in $\mathcal{R}_1$, where $1\leq j\leq|\mathcal{R}_1|$. Let $X=\sum_{j=1}^{|\mathcal{R}_1|}X_j$. We know $\mathbb{E}[X_j]=\Pr(X_j=1)=(9/10)\cdot(1/2^i)$, thus $\mathbb{E}[X]=|\mathcal{R}_1|\cdot\mathbb{E}[X_i]\geq (9R_i/10)\cdot\mathbb{E}[X_i]\geq 0.81R_i/2^i$. Notice that $X_1,X_2,\cdots,X_{|\mathcal{R}_1|}$ is a set of independent random variables, as nodes' channel choices and listening probabilities are not affected by Eve within an iteration, and we have assumed Eve jams exactly ninety percent of all $n/2$ channels for every slot in $\mathcal{R}_1$. Therefore, by a Chernoff bound, we know $\Pr(X< 0.5R_i/2^i)$ is exponential small in $R_i/2^i$. Specifically, $\Pr(\mathcal{E}_1~|~\overline{\mathcal{E}})\leq 1/n^{6i}$.
\end{proof}

\begin{proof}[\underline{Proof of Lemma \ref{lemma-multicast-competitive-halt}}]
Let $\mathcal{R}_1$ be the set of slots in which Eve jams more than twenty percent of all $n/2$ channels, and let $\mathcal{R}_2$ be the remaining slots. Let $R_1=|\mathcal{R}_1|$ and $R_2=|\mathcal{R}_2|$, we know $R_1\leq 0.2R_i$, $R_2\geq 0.8R_i$. To make the number of noisy slots observed by $u$ as large as possible, without loss of generality, we assume: (1) Eve jams all channels for every slot in $\mathcal{R}_1$ and Eve jams exactly twenty percent of all channels for every slot in $\mathcal{R}_2$; (2) all nodes are active and informed since the beginning of this iteration; and (3) $R_1=0.2R_i$ and $R_2=0.8R_i$.

Let $X_j$ be an indicator random variable taking value one iff $u$ decides to listen in the $j$\textsuperscript{th} slot in $\mathcal{R}_1$, where $1\leq j\leq R_1$. We know $\mathbb{E}[X_j]=\Pr(X_j=1)=1/2^i$.

Let $Y_j$ be an indicator random variable taking value one iff $u$ hears noise in the $j$\textsuperscript{th} slot in $\mathcal{R}_2$, where $1\leq j\leq R_2$. Further define $J_j$ be an indicator random variable taking value one iff the channel chosen by $u$ in the $j$\textsuperscript{th} slot in $\mathcal{R}_2$ is jammed by Eve, and define $I_j$ be a random variable denoting the number of broadcasting informed nodes on the channel chosen by $u$ in the $j$\textsuperscript{th} slot in $\mathcal{R}_2$. We know $\Pr(Y_j=1)=(1/2^i)\cdot\left(\Pr(J_j=1)+\Pr(I_j\geq 2)\cdot\Pr(J_j=0)\right)=(1/2^i)\cdot\left(0.2+0.8\cdot\Pr(I_j\geq 2)\right)$. To upper bound $\Pr(I_j\geq 2)=1-\Pr(I_j=0)-\Pr(I_j=1)$, we lower bound $\Pr(I_j=0)$. In particular, $\Pr(I_j=0)=\left(1-(1/2^i)\cdot(2/n)\right)^{n-1}\geq \exp\left(-2\cdot(n-1)\cdot(1/2^i)\cdot(2/n)\right)\geq \exp\left(-2\cdot n\cdot(1/64)\cdot(2/n)\right)\geq e^{-1/16}$. Hence, $\Pr(I_j\geq 2)\leq 1-\Pr(I_j=0)\leq 1-e^{-1/16}<0.061$, implying $\Pr(Y_j=1)<(1/2^i)\cdot0.25$.

Let $Z=\sum_{j=1}^{R_1}X_j + \sum_{j=1}^{R_2}Y_j$, we know $\mathbb{E}[Z]< R_1/2^i+(R_2/2^i)\cdot 0.25=0.4R_i/2^i$. Notice that $\{X_1,X_2,\cdots,X_{R_1},Y_1,Y_2,\cdots,Y_{R_2}\}$ is a set of independent random variables. Hence, by a Chernoff bound, we know $\Pr(Z\geq 0.5R_i/2^i)$ is exponentially small in $R_i/2^i$. Since $Z$ is an upper bound on the number of noisy slots $u$ will observe, the lemma follows.
\end{proof}

\begin{proof}[\underline{Proof of Theorem \ref{thm-multicast}}]
Let $l$ be the last iteration in which Eve jams more than twenty percent of all the $n/2$ channels for more than twenty percent of all slots, we know the cost of Eve in iteration $l$ is at least $(0.2\cdot n/2)\cdot(0.2\cdot al\cdot 4^l\cdot \lg^2{n})=0.02anl\cdot 4^l\cdot \lg^2{n}$. This also implies $T\geq 0.02anl\cdot 4^l\cdot \lg^2{n}$.

We first analyze the energy consumption of each node.

According to our protocol, for a fixed node $u$, for a fixed iteration $i$, node $u$'s expected cost is at most $2R_i\cdot p_i=2ai\cdot 2^i\cdot\lg^2{n}$. Apply a Chernoff bound, we know node $u$'s cost will be at most $3ai\cdot 2^i\cdot\lg^2{n}$, with probability at least $1-e^{-\Theta(i\cdot 2^i\cdot\lg^2{n})}$. Therefore, from the start of execution to the end of iteration $l$, with probability at least $1-\sum_{i=6}^{l}{e^{-\Theta(i\cdot 2^i\cdot\lg^2{n})}}= 1-\sum_{i=6}^{l}{(e^{-\Theta(\lg^2{n})})^{i\cdot 2^i}}\geq 1-\sum_{i=6}^{l}{(e^{-\Theta(\lg^2{n})})^{i}}\geq 1-1/n^{\Theta(\lg{n})}$, the total cost of node $u$ will be at most $\sum_{i=6}^{l}{3ai\cdot 2^i\cdot\lg^2{n}}=3a\lg^2{n}\cdot\sum_{i=6}^{l}{i\cdot 2^i}=3a\lg^2{n}\cdot((l-1)\cdot 2^{l+1}-2^8)\leq 3a\lg^2{n}\cdot l\cdot 2^{l+1}=6al\cdot 2^l\cdot \lg^2{n}$. On the other hand, due to Lemma \ref{lemma-multicast-competitive-halt}, we know by the end of iteration $l+1$, with probability at least $1-1/n^{\Theta(l\cdot 2^l\cdot\lg{n})}\geq 1-1/n^{\Theta(\lg{n})}$, node $u$ will terminate (if it has not terminated yet). Furthermore, in iteration $l+1$, with probability at least $1-1/n^{\Theta((l+1)\cdot 2^{l+1}\cdot\lg{n})}\geq 1-1/n^{\Theta(\lg{n})}$, the cost of node $u$ will be at most $3a(l+1)\cdot 2^{l+1}\cdot\lg^2{n}=6a(l+1)\cdot 2^{l}\cdot\lg^2{n}\leq 7al\cdot 2^{l}\cdot\lg^2{n}$. By now, we know $u$'s total cost during entire execution is at most $13al\cdot 2^{l}\cdot\lg^2{n}$, with probability at least $1-1/n^{\Theta(\lg{n})}$.

Let $E=13al\cdot 2^{l}\cdot\lg^2{n}$ be the total cost of node $u$, we know $E^2=169\cdot a^2\cdot l^2\cdot 4^l\cdot \lg^4{n}=(169al\lg^2{n})\cdot al\cdot 4^l\cdot\lg^2{n}\leq (169al\lg^2{n})\cdot T/(0.02n)=O(l\cdot (T/n)\cdot\lg^2{n})$, where the inequality follows by $T\geq 0.02anl\cdot 4^l\cdot \lg^2{n}$. Moreover, since $T\geq 0.02anl\cdot 4^l\cdot \lg^2{n}$, we also know $T\geq 0.02a\cdot 4^l$, obtaining $l=O(\lg{T})$. Thus, $E^2=O(\lg{T}\cdot(T/n)\cdot\lg^2{n})$, implying $E=O(\sqrt{T/n}\cdot\sqrt{\lg{T}}\cdot\lg{n})$. Take a union bound over all $n$ nodes, we know w.h.p., the total cost of each node is $O(\sqrt{T/n}\cdot\sqrt{\lg{T}}\cdot\lg{n})$.

Next, we analyze how long each node will remain active.

Similar to the above analysis, from the start of execution to the end of iteration $l$, the total time consumption is $\sum_{i=6}^{l}{ai\cdot 4^i\cdot \lg^2{n}}=a\lg^2{n}\cdot\sum_{i=6}^{l}i\cdot 4^i=a\lg^2{n}\cdot((l-1/3)\cdot4^{l+1}-(14/3)\cdot 4^6)/3\leq a\lg^2{n}\cdot(l\cdot4^{l+1})/3\leq 2a\lg^2{n}\cdot l\cdot 4^l=O(T/n)$. On the other hand, the length of iteration $l+1$ is $a(l+1)\cdot 4^{l+1}\cdot\lg^2{n}\leq 8al\cdot 4^l\cdot\lg^2{n}=O(T/n)$. Due to Lemma \ref{lemma-multicast-competitive-halt}, we know all nodes will terminate by the end of iteration $l+1$, w.h.p. Therefore, all nodes will terminate within $O(T/n)$ slots, w.h.p.

Lastly, we show that each node must have been informed when it terminates:

\vspace{-3ex}
\begin{align*}
& \Pr(\textrm{some node halts while uninformed})\\
= & \sum_{u}\Pr{{\textrm{some node halts while uninformed}}\choose{\textrm{and the first node that halts is }u}}\\
\leq & n\cdot\Pr{{\textrm{some node halts while uninformed}}\choose{\textrm{and the first node that halts is }u}}\\
\leq & n\cdot\left(\sum_{i=6}^{\infty}\Pr{{\textrm{some node halts while uninformed and the}}\choose{\textrm{first node that halts is }u\textrm{ and it halts at iteration }i}}\right)\\
\leq & n\cdot\sum_{i=6}^{\infty}1/n^{5i}=1/n^{\Omega(1)}\\
\end{align*}
\vspace{-6ex}

\noindent where the last inequality is due to Lemma \ref{lemma-multicast-halt-imply-inform}.

Before concluding the proof, we note that it is easy to verify, when Eve is not present (i.e., $T=0$) or $T=o(n)$, w.h.p.\ all nodes will be able to terminate by the end of the first iteration. In such scenario, the time and energy cost for each node is $O(\lg^2{n})$.
\end{proof}

\begin{proof}[\underline{Proof of Lemma \ref{lemma-multicastadv-ij-range-1}}]
First, consider a fixed $(i,j)$-phase in which $i\leq \lg(n/(d\lg{n}))$, where $d$ is some sufficiently large constant. Assume all nodes are active during this phase. To make $N_m$ as large as possible, without loss of generality, assume Eve does no jamming in this phase. Hence, in a slot in step two, the probability that a fixed node $u$ hears message $m$ is $p\cdot t\cdot (p/2^j)\cdot(1-p/2^j)^{n-2}$, where $t$ denotes the number of nodes that already know message $m$ at the beginning of step two. Since $t\leq n$, we know this probability is at most $pn\cdot(p/2^j)\cdot(1-p/2^j)^{n/2}\leq n\cdot (1-p/2^j)^{n/2}\leq n\cdot\exp(-pn/2^{j+1})=n\cdot\exp(-n/(4\cdot 2^{\alpha i+(1-\alpha)j}))\leq n\cdot\exp(-n/(4\cdot 2^{\alpha i+(1-\alpha)(i-1)}))= n\cdot\exp(-n/(4\cdot 2^{i-(1-\alpha)}))\leq n\cdot\exp\left(-n/\left(4\cdot 2^{i}\right)\right)\leq n\cdot\exp(-(d/4)\cdot\lg{n})=1/n^{\Omega(1)}$. Notice by the end of epoch $\lg(n/(d\lg{n}))$, the total number of slots in step two is bounded by $[\lg(n/(d\lg{n}))]\cdot [\lg(n/(d\lg{n}))]\cdot [b\cdot 2^{2\alpha\cdot \lg(n/(d\lg{n}))}\cdot  (\lg(n/(d\lg{n})))^3]=O(n\cdot \lg^5{n})$. Take a union bound over these $O(n\cdot \lg^5{n})$ slots, and another union bound over the $n$ nodes, we know during step two of all $(i,j)$-phases in which $i\leq  \lg(n/(d\lg{n}))$, no node will ever hear $m$, w.h.p. Thus no node will become \texttt{helper} during the first $\lg(n/(d\lg{n}))$ epochs, w.h.p.

Next, we focus on epochs $\lg(n/(d\lg{n})) \leq i\leq \lg{n}$. Consider a fixed $(i,j)$-phase in these epochs. Assume all nodes are active during this phase. To make $N_m$ as large as possible, without loss of generality, assume Eve does no jamming in this phase. Hence, in a slot in step two, the probability that a fixed node $u$ hears message $m$ is $p\cdot t\cdot (p/2^j)\cdot(1-p/2^j)^{n-2}$, where $t$ denotes the number of nodes that know $m$ at the beginning of step two. Since $t\leq n$, we know this probability is at most $pn\cdot(p/2^j)\cdot(1-p/2^j)^{0.98n}\leq p^2\cdot(n/2^j)\cdot\exp\left(-p\cdot(n/2^j)\cdot 0.98\right)\leq p^2\cdot(n/2^j)\cdot\exp\left(-0.49\cdot(n/2^j)^{1-\alpha}\right)$. Thus, $\mathbb{E}[N_m]\leq Rp^2\cdot(n/2^j)\cdot\exp\left(-0.49\cdot(n/2^j)^{1-\alpha}\right)$. Define function $f(x,\alpha)=x\cdot\exp\left(-0.49\cdot x^{1-\alpha}\right)$, it is not hard to verify: for all $x\geq 2$ and $0<\alpha<1/4$, $f(x,\alpha)\leq 1.1$. Thus, $\mathbb{E}[N_m]\leq 1.1 Rp^2$. Since $Rp^2=\Omega(\lg^3{n})$ for each $(i,j)$-phase when $\lg(n/(d\lg{n})) \leq i\leq \lg{n}$, and since whether $u$ will hear $m$ are independent among slots in step two (recall we have assumed Eve does no jamming), apply a standard Chernoff bound and we know, $N_m<1.5Rp^2$, w.h.p. Take a union bound over all $n$ nodes and another union bound over all $O(\lg^2{n})$ phases during epochs $\lg(n/(d\lg{n})) \leq i\leq \lg{n}$, we know no node will become \texttt{helper} during these epochs, w.h.p.
\end{proof}

\begin{proof}[\underline{Proof of Lemma \ref{lemma-multicastadv-ij-range-2}}]
Let $n'$ and $t$ denote the number of active nodes and the number of active nodes that know message $m$ at the beginning of step two of this phase, respectively. To make $N_m$ as large as possible, without loss of generality, assume Eve does no jamming in this phase. Hence, in a slot in step two, the probability that node $u$ hears message $m$ is at most $p\cdot t\cdot (p/2^j)\cdot(1-p/2^j)^{n'-2}$. Since $t\leq n$ and $2^j\cdot(1-p(i,j)/2^j)^{-(n'-2)}>n$, we know $t\leq 2^j\cdot\left(1-p/2^j\right)^{-(n'-2)}$. Thus, in a slot in step two, the probability that node $u$ hears message $m$ is at most $p^2$. As a result, $\mathbb{E}[N_m]\leq Rp^2$. Since $Rp^2=\Omega(i^3)$ for each $(i,j)$-phase, and since whether $u$ will hear $m$ are independent among slots in step two (recall we have assumed Eve does no jamming), apply a standard Chernoff bound and we know, the probability that $N_m\geq 1.5Rp^2$ is at most $\exp(-\Omega(i^3))$. This probability is at most $n^{-\Theta(i^2)}$ when $i>\lg{n}$, thus the lemma is proved.
\end{proof}

\begin{proof}[\underline{Proof of Lemma \ref{lemma-multicastadv-ij-range-3}}]
Let $q_k$ be a random variable denoting the fraction of channels that are not jammed by Eve in the $k$\textsuperscript{th} slot in step two of phase $(i,j)$, where $1 \leq k \leq R$. Let $Q=\sum_{k=1}^{R}q_k$. We will prove if $j<\lg n-1$, then $N'_m\leq 2.2Rp^2$ and $N_s\geq 0.9Rp$ cannot happen simultaneously, which means $u$ cannot become \texttt{helper} in phase $(i,j)$, regardless of $Q$'s value. Now assume $j< \lg n-1$.

By linearity of expectation, we have $\mathbb{E}[N_s]=Q\cdot p \cdot (1-p/2^j)^{n-1}$ and $\mathbb{E}[N'_m]=Q \cdot p \cdot n \cdot (p/2^j) \cdot (1-p/2^j)^{n-2}$. Let $\mathcal{E}_1$ be the event that $N'_m\leq 2.2Rp^2$, $\mathcal{E}_2$ be the event that $N_s\geq 0.9Rp$, and $\mathcal{E}$ be the event that $(1-p/2^j)^{n-1}\geq 0.8R/Q$. We know $\Pr(\mathcal{E}_1\wedge\mathcal{E}_2)=\Pr(\mathcal{E}_1\wedge\mathcal{E}_2\wedge\mathcal{E})+\Pr(\mathcal{E}_1\wedge\mathcal{E}_2\wedge\overline{\mathcal{E}})$.

If $\mathcal{E}$ occurs, then $\mathbb{E}[N'_m]=(1-p/2^j)^{n-2} \cdot (n/2^j) \cdot Qp^2\geq (1-p/2^j)^{2(n-1)}\cdot 4\cdot Q/R\cdot Qp^2\geq 4\cdot 0.8 \cdot 0.8 \cdot Rp^2=2.56Rp^2$. By a Chernoff bound, $\Pr(\mathcal{E}_1\wedge\mathcal{E}_2\wedge\mathcal{E})\leq\Pr(\mathcal{E}_1\wedge\mathcal{E})\leq\Pr(\mathcal{E}_1~|~\mathcal{E})= \Pr(N'_m\leq 2.2Rp^2~|~\mathcal{E})\leq e^{-\Theta(Rp^2)}\leq n^{-\Theta(i^2)}$. 

On the other hand, we bound $\Pr(\mathcal{E}_1\wedge\mathcal{E}_2\wedge\overline{\mathcal{E}})$ by $\Pr(\mathcal{E}_2~|~\overline{\mathcal{E}})$. If $\overline{\mathcal{E}}$ occurs, then $\mathbb{E}[N_s]\leq 0.8Rp$. By a Chernoff bound, $\Pr(\mathcal{E}_2~|~\overline{\mathcal{E}})= \Pr(N_s\geq 0.9Rp~|~\overline{\mathcal{E}}) \leq e^{-\Theta(Rp)}\leq n^{-\Theta(2^{\alpha i}\cdot i^2)}$.
\end{proof}

\begin{proof}[\underline{Proof of Lemma \ref{lemma-multicastadv-helper-imply-inform}}]
Let $\mathcal{E}_1$ be the event that some node has $N_m\geq 1.5Rp^2$ by the end of step two, and $\mathcal{E}_2$ be the event that some node still does not know the message $m$ by the end of step two. We consider two scenarios depending on whether the jamming from Eve is strong or not. More specifically, define $q_k$ be a random variable denoting the fraction of channels that are not jammed by Eve in the $k$\textsuperscript{th} slot in step two of phase $(i,j)$, where $1 \leq k \leq R$. Let $Q=\sum_{k=1}^{R}q_k$. Let $n'$ and $t$ denote the number of active nodes and the number of active nodes that know message $m$ at the beginning of step two of this phase, respectively. Define $\mathcal{E}_3$ be the event that $Q\leq R\cdot  2^j(1-p/2^j)^{-(n'-2)}/t$, which indicates the jamming from Eve is strong. We know $\Pr(\mathcal{E}_1\mathcal{E}_2)=\Pr(\mathcal{E}_1\mathcal{E}_2\mathcal{E}_3)+\Pr(\mathcal{E}_1\mathcal{E}_2\overline{\mathcal{E}_3})$.

We first focus on $\Pr(\mathcal{E}_1\mathcal{E}_2\mathcal{E}_3)$. Since $\Pr(\mathcal{E}_1\mathcal{E}_2\mathcal{E}_3)\leq \Pr(\mathcal{E}_1\mathcal{E}_3)\leq \Pr(\mathcal{E}_1~|~\mathcal{E}_3)$, we bound $\Pr(\mathcal{E}_1~|~\mathcal{E}_3)$. Fix a node $u$. Let $t$ denote the number of nodes that already know $m$ at the beginning of step two. Thus, $\mathbb{E}[N_m]=\sum_{k=1}^{R} q_k \cdot p t (p/2^j)(1-p/2^j)^{n'-2}=Q \cdot pt \cdot (p/2^j)(1-p/2^j)^{n'-2}\leq Rp^2=\Theta(i^3)$. Since $i>\lg{n}$, by a Chernoff bound, the probability that $N_m$ reaches $1.5Rp^2$ is at most $n^{-\Theta(i^2)}$. Take a union bound over all $n$ nodes, we know $\Pr(\mathcal{E}_1~|~\mathcal{E}_3)\leq n^{-\Theta(i^2)}$.

Then, we consider $\Pr(\mathcal{E}_1\mathcal{E}_2\overline{\mathcal{E}_3})$. Since  $\Pr(\mathcal{E}_1\mathcal{E}_2\overline{\mathcal{E}_3})\leq \Pr(\mathcal{E}_2\overline{\mathcal{E}_3})\leq \Pr(\mathcal{E}_2~|~\overline{\mathcal{E}_3})$, we bound $\Pr(\mathcal{E}_2~|~\overline{\mathcal{E}_3})$. Fix a node $u$. When $Q>R\cdot  2^j(1-p/2^j)^{-(n'-2)}/t$, we have $\mathbb{E}[N_m]>Rp^2=\Theta(i^3)$. But $i>\lg{n}$, thus by a Chernoff bound, the probability that $N_m$ is zero is at most $n^{-\Theta(i^2)}$. Take a union bound over all $n$ nodes, we know $\Pr(\mathcal{E}_2~|~\overline{\mathcal{E}_3})\leq n^{-\Theta(i^2)}$. Thus the lemma is proved.
\end{proof}

\begin{proof}[\underline{Proof of Lemma \ref{lemma-multicastadv-halt-imply-helper}}]
Let $\mathcal{E}_1$ be the event that $u$ decides to halt by the end of step two, and $\mathcal{E}_2$ be the event that some node still has not obtained \texttt{helper} status by the end of step two. Let $\mathcal{E}_3$ be the event that ``during step two, for at least 0.02 fraction of slots, Eve jams at least 0.02 fraction of all channels''. We know $\Pr(\mathcal{E}_1\mathcal{E}_2)=\Pr(\mathcal{E}_1\mathcal{E}_2\mathcal{E}_3)+\Pr(\mathcal{E}_1\mathcal{E}_2\overline{\mathcal{E}_3})$.

Bounding $\Pr(\mathcal{E}_1\mathcal{E}_2\mathcal{E}_3)$ is easy. Since $\Pr(\mathcal{E}_1\mathcal{E}_2\mathcal{E}_3)\leq \Pr(\mathcal{E}_1\mathcal{E}_3)\leq \Pr(\mathcal{E}_1~|~\mathcal{E}_3)$, we bound $\Pr(\mathcal{E}_1~|~\mathcal{E}_3)$. Specifically, we lower bound the number of noisy slots observed by $u$ during step two. To this end, without loss of generality, assume during step two, Eve jams 0.02 fraction of all channels in 0.02 fraction of all slots, and does no jamming in other slots. We focus on the slots that Eve does jam, call this set of slots $\mathcal{R}_1$. In each slot in $\mathcal{R}_1$, the probability that $u$ listens on a channel jammed by Eve is 0.02. Thus, in expectation, among slots in $\mathcal{R}_1$, $u$ will choose to listen on a channel jammed by Eve for $|\mathcal{R}_1|\cdot p\cdot 0.02=0.02^2Rp=Rp/2500$ times. Since among these slots whether $u$ will choose to listen on a channel jammed by Eve are independent, and since $Rp=\Omega(i^3)$ and $i>\lg{n}$, apply a Chernoff bound and we know, with probability at most $n^{-\Theta(i^2)}$, node $u$ will hear less than $Rp/3000$ noisy slots during step two.

Bounding $\Pr(\mathcal{E}_1\mathcal{E}_2\overline{\mathcal{E}_3})$ is more challenging. Since $\Pr(\mathcal{E}_1\mathcal{E}_2\overline{\mathcal{E}_3})\leq \Pr(\mathcal{E}_2~|~\mathcal{E}_1\overline{\mathcal{E}_3})$, we bound $\Pr(\mathcal{E}_2~|~\mathcal{E}_1\overline{\mathcal{E}_3})$.

Assume $u$ decides to halt by the end of phase $(\tilde{i},\tilde{j})$. Since $u$ obtained \texttt{helper} status in phase $(\hat{i},\hat{j})$, we know $\tilde{i}\geq \hat{i}+2/\alpha$ and $\tilde{j}=\hat{j}$.

We need to inspect phase $(\hat{i},\hat{j})$ more carefully, and we begin with two claims.

\begin{claim}\label{claim-multicastadv-helper-phase-relation-1}
Fix an $(i,j)$-phase where $i>\lg{n}$ and all nodes are active, fix a node $u$. The following two events happen simultaneously with probability at most $n^{-\Theta(i^2)}$: (a) $((n-1)/2^j)\cdot(1-p/2^j)^{n-2}<1.45$; and (b) $N_m\geq 1.5Rp^2$.
\end{claim}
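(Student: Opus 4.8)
The plan is to exploit that event (a) is a \emph{deterministic} condition: the quantity $((n-1)/2^j)\cdot(1-p/2^j)^{n-2}$ depends only on $n$, $j$, and $p=p(i,j)$, none of which are random. Hence if (a) fails the joint probability is zero, and it suffices to show that \emph{when} (a) holds we have $\Pr(N_m\geq 1.5Rp^2)\leq n^{-\Theta(i^2)}$. The argument then follows the same template as the proof of Lemma~\ref{lemma-multicastadv-ij-range-2}: bound $\mathbb{E}[N_m]$ strictly below the threshold $1.5Rp^2$ by a constant factor, then apply a Chernoff bound using $Rp^2=\Theta(i^3)$ together with $i>\lg{n}$.

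First I would compute the per-slot probability that $u$ hears $m$ in step two. Writing $t'$ for the number of \texttt{informed} nodes among the $n-1$ nodes other than $u$, node $u$ hears $m$ in a given slot precisely when it listens (probability $p$), its channel is not jammed, exactly one of the $t'$ informed nodes broadcasts on its channel, and none of the remaining $n-2$ nodes broadcasts there (any additional broadcaster, informed or a beacon-sending \texttt{uninformed} node, causes a collision). This probability equals $p\cdot q_k\cdot t'\cdot(p/2^j)\cdot(1-p/2^j)^{n-2}$, where $q_k\leq 1$ is the fraction of unjammed channels in that slot. Since this expression is linear and increasing in $t'$ and is maximized when no channel is jammed, the worst case for making $N_m$ large is $t'=n-1$ (all other nodes \texttt{informed}) with Eve doing no jamming, giving a per-slot probability of at most $p^2\cdot((n-1)/2^j)\cdot(1-p/2^j)^{n-2}$.

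Under event (a) this per-slot bound is strictly below $1.45p^2$, so summing over the $R$ slots of step two gives $\mathbb{E}[N_m]<1.45Rp^2$. Because Eve is oblivious and (in the worst case) jams nothing, the per-slot indicators are independent, so a Chernoff bound governs the deviation to $1.5Rp^2$. The point of the constant gap between $1.45$ and $1.5$ is that, whatever the true value of $\mathbb{E}[N_m]\leq 1.45Rp^2$ happens to be, the event $\{N_m\geq 1.5Rp^2\}$ is a deviation with relative error bounded away from zero, which the multiplicative Chernoff bound converts into an exponent of order $Rp^2$. Recalling $R=\Theta(2^{2\alpha(i-j)}\cdot i^3)$ and $p=2^{-\alpha(i-j)}/2$ yields $Rp^2=\Theta(i^3)$, so $\Pr(N_m\geq 1.5Rp^2)\leq e^{-\Theta(i^3)}$. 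Finally, $i>\lg{n}$ gives $i^3\geq i^2\lg{n}$, hence $e^{-\Theta(i^3)}\leq e^{-\Theta(i^2\lg{n})}=n^{-\Theta(i^2)}$, the claimed bound.

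I expect the only genuinely delicate point to be justifying the worst-case reduction---that maximizing $t'$ and removing all jamming simultaneously maximizes $\mathbb{E}[N_m]$. This is routine here because the per-slot probability is linear in $t'$ and the factor $(1-p/2^j)^{n-2}$ carries a fixed exponent once we insist that no other node broadcasts, so it does not interact with $t'$; and monotonicity in the jamming fraction is immediate. Everything after that is the same Chernoff-and-$i>\lg{n}$ bookkeeping already used to prove Lemma~\ref{lemma-multicastadv-ij-range-2}.
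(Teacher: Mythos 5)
Your proposal is correct and matches the paper's argument in all essentials: bound the per-slot probability of hearing $m$ by its no-jamming, all-informed value $p^2\cdot((n-1)/2^j)(1-p/2^j)^{n-2}$, use event (a) to get $\mathbb{E}[N_m]<1.45Rp^2$, and apply a Chernoff bound with $Rp^2=\Theta(i^3)$ and $i>\lg n$. The only cosmetic difference is that the paper formalizes the ``worst case is no jamming'' step as an explicit coupling to a jamming-free execution $\Pi'$ with the same randomness, whereas you argue it by monotonicity of the per-slot probability; both are valid.
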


\begin{proof}
Imagine besides the real execution $\Pi$ there is another execution $\Pi'$ in which nodes use same source of randomness, but Eve does no jamming in step two of phase $(i,j)$. In phase $(i,j)$ in $\Pi'$, $\mathbb{E}_{\Pi'}[N_m]=R\cdot p\cdot t\cdot(p/2^j)\cdot(1-p/2^j)^{n-2}\leq Rp^2\cdot((n-1)/2^j)\cdot(1-p/2^j)^{n-2}$, where $t$ denotes the number of nodes that know $m$ (excluding $u$ if $u$ also knows $m$). If $((n-1)/2^j)\cdot(1-p/2^j)^{n-2}<1.45$, then $\mathbb{E}_{\Pi'}[N_m]<1.45Rp^2$. Since Eve does no jamming in phase $(i,j)$ in $\Pi'$, apply a Chernoff bound and we know, $\Pr_{\Pi'}(N_m\geq 1.5Rp^2)\leq e^{-\Theta(Rp^2)}\leq n^{-\Theta(i^2)}$. But we know $\Pr_{\Pi}(N_m\geq 1.5Rp^2)\leq \Pr_{\Pi'}(N_m\geq 1.5Rp^2)$, hence $\Pr_{\Pi}(N_m\geq 1.5Rp^2)\leq n^{-\Theta(i^2)}$.
\end{proof}

\begin{claim}\label{claim-multicastadv-helper-phase-relation-2}
Fix an $(i,j)$-phase where $i>\lg{n}$ and all nodes are active, fix a node $u$. The following two events happen simultaneously with probability at most $n^{-\Theta(i^2)}$: (a) $(1-p/2^j)^{n-1}<0.89$; and (b) $N_s\geq 0.9Rp$.
\end{claim}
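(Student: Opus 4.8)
The plan is to mirror the coupling-and-Chernoff argument used in the proof of Claim~\ref{claim-multicastadv-helper-phase-relation-1}, now tracking the count of silent slots $N_s$ instead of the message count $N_m$. First I would introduce an auxiliary execution $\Pi'$ that uses exactly the same source of node randomness as the real execution $\Pi$, but in which Eve performs no jamming during step two of the fixed $(i,j)$-phase. The key monotonicity observation is that jamming can only destroy silence: any slot that node $u$ records as silent in $\Pi$ is also silent in $\Pi'$, since the only effect of jamming is to turn some of $u$'s would-be-silent slots noisy. Hence $N_s^{\Pi}\leq N_s^{\Pi'}$ pointwise under the coupling, which gives $\Pr_{\Pi}(N_s\geq 0.9Rp)\leq\Pr_{\Pi'}(N_s\geq 0.9Rp)$, so it suffices to bound the latter.

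Next I would compute $\mathbb{E}_{\Pi'}[N_s]$. In a single step-two slot of $\Pi'$, node $u$ records silence exactly when it chooses to listen (probability $p$) and no other active node broadcasts on the channel $u$ selects. Since all $n$ nodes are active and each of the other $n-1$ nodes broadcasts on $u$'s channel with probability $p/2^j$ independently, the silence probability in one slot is $p\cdot(1-p/2^j)^{n-1}$, and therefore $\mathbb{E}_{\Pi'}[N_s]=Rp\cdot(1-p/2^j)^{n-1}$. Condition (a), namely $(1-p/2^j)^{n-1}<0.89$, then immediately yields $\mathbb{E}_{\Pi'}[N_s]<0.89Rp$.

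Finally I would apply a Chernoff bound in $\Pi'$. Because Eve does no jamming in $\Pi'$, whether $u$ hears silence is independent across the $R$ slots of step two, so $N_s$ is a sum of independent indicators with mean strictly below $0.89Rp$; a standard Chernoff bound then bounds $\Pr_{\Pi'}(N_s\geq 0.9Rp)$ by $e^{-\Theta(Rp)}$. Recalling $R=\Theta(2^{2\alpha(i-j)}i^3)$ and $p=2^{-\alpha(i-j)}/2$, we have $Rp=\Theta(2^{\alpha(i-j)}i^3)=\Omega(i^3)$; and since $i>\lg{n}$ gives $i^3>i^2\lg{n}$, this is at most $e^{-\Theta(i^2\lg{n})}=n^{-\Theta(i^2)}$, as claimed. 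I do not anticipate a serious obstacle, as the argument is essentially the dual of Claim~\ref{claim-multicastadv-helper-phase-relation-1}; the one point requiring care is the coupling/monotonicity step, where I must verify that for the fixed node $u$ reducing Eve's jamming can never eliminate a silent slot, so that $\{N_s\geq 0.9Rp\}$ is genuinely upward-closed when passing from $\Pi$ to $\Pi'$.
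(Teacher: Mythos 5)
Your proposal is correct and follows essentially the same route as the paper's proof: couple the real execution to a jamming-free execution $\Pi'$ with the same node randomness, note $\mathbb{E}_{\Pi'}[N_s]=Rp\cdot(1-p/2^j)^{n-1}<0.89Rp$ under condition (a), apply a Chernoff bound, and use the monotone relation $\Pr_{\Pi}(N_s\geq 0.9Rp)\leq\Pr_{\Pi'}(N_s\geq 0.9Rp)$. Your explicit justification of the coupling step (jamming can only convert would-be-silent slots into noisy ones) is a detail the paper merely asserts, and your accounting of $Rp=\Omega(i^3)$ matches the paper's implicit calculation.
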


\begin{proof}
Imagine besides the real execution $\Pi$ there is another execution $\Pi'$ in which nodes use same source of randomness, but Eve does no jamming in step two of phase $(i,j)$. In phase $(i,j)$ in $\Pi'$, $\mathbb{E}_{\Pi'}[N_s]=Rp\cdot(1-p/2^j)^{n-1}$. If $(1-p/2^j)^{n-1}<0.89$, then $\mathbb{E}_{\Pi'}[N_s]<0.89Rp$. Since Eve does no jamming in phase $(i,j)$ in $\Pi'$, apply a Chernoff bound and we know, $\Pr_{\Pi'}(N_s\geq 0.9Rp)\leq e^{-\Theta(Rp)}\leq n^{-\Theta(i^2)}$. But we know $\Pr_{\Pi}(N_s\geq 0.9Rp)\leq \Pr_{\Pi'}(N_s\geq 0.9Rp)$, hence $\Pr_{\Pi}(N_s\geq 0.9Rp)\leq n^{-\Theta(i^2)}$.
\end{proof}

The above claims suggest, in phase $(\hat{i},\hat{j})$, with probability at least $1-n^{-\Theta(\hat{i}^2)}$, we have $((n-1)/2^{\hat{j}})\cdot(1-p(\hat{i},\hat{j})/2^{\hat{j}})^{n-2}\geq 1.45$ and $(1-p(\hat{i},\hat{j})/2^{\hat{j}})^{n-1}\geq 0.89$. Since $\tilde{i}\geq \hat{i}+2/\alpha$ and $\tilde{j}=\hat{j}$, we also know $p(\tilde{i},\tilde{j})\leq p(\hat{i},\hat{j})/4$. Moreover, due to Lemma \ref{lemma-multicastadv-ij-range-2} and Lemma \ref{lemma-multicastadv-ij-range-3}, it must be the case that $\hat{j}=\lg{n}-1$, with probability at least $1-1/n^{\Theta(\hat{i}^2)}$. Assume indeed $((n-1)/2^{\hat{j}})\cdot(1-p(\hat{i},\hat{j})/2^{\hat{j}})^{n-2}\geq 1.45$,  $(1-p(\hat{i},\hat{j})/2^{\hat{j}})^{n-1}\geq 0.89$, $p(\tilde{i},\tilde{j})\leq p(\hat{i},\hat{j})/4$, and $\hat{j}=\lg{n}-1$.

By now, we can conclude $((n-1)/2^{\tilde{j}})\cdot(1-p(\tilde{i},\tilde{j})/2^{\tilde{j}})^{n-2}=((n-1)/2^{\hat{j}})\cdot(1-p(\tilde{i},\tilde{j})/2^{\hat{j}})^{n-2}\geq ((n-1)/2^{\hat{j}})\cdot\exp(-2(n-2)\cdot p(\tilde{i},\tilde{j})/2^{\hat{j}})\geq ((n-1)/2^{\hat{j}})\cdot\exp(-(n-2)\cdot p(\hat{i},\hat{j})/2^{\hat{j}+1})\geq ((n-1)/2^{\hat{j}})\cdot(1-p(\hat{i},\hat{j})/2^{\hat{j}})^{(n-2)/2}\geq ((n-1)/2^{\hat{j}})^{1/2}\cdot (1.45)^{1/2}\geq (2(n-1)/n)^{1/2}\cdot (1.45)^{1/2}\geq (2\cdot 7/8)^{1/2}\cdot (1.45)^{1/2}>1.59$, with probability at least $1-n^{-\Theta(\hat{i}^2)}$. Similarly, we can also conclude, $(1-p(\tilde{i},\tilde{j})/2^{\tilde{j}})^{n-1}\geq (1-p(\hat{i},\hat{j})/2^{\hat{j}})^{(n-1)/2}\geq (0.89)^{1/2}>0.94$, with probability at least $1-n^{-\Theta(\hat{i}^2)}$. Assume in phase $(\tilde{i},\tilde{j})$, indeed $((n-1)/2^{\tilde{j}})\cdot(1-p(\tilde{i},\tilde{j})/2^{\tilde{j}})^{n-2}\geq 1.59$ and $(1-p(\tilde{i},\tilde{j})/2^{\tilde{j}})^{n-1}\geq 0.94$.

Besides, by Lemma \ref{lemma-multicastadv-helper-imply-inform}, by the end of phase $(\hat{i},\hat{j})$, all nodes already know $m$, with probability at least $1-n^{-\Theta(\hat{i}^2)}$. Assume indeed all nodes know $m$ in phase $(\tilde{i},\tilde{j})$.

Now, fix a node $v$ that has not obtained \texttt{helper} status at the beginning of phase $(\tilde{i},\tilde{j})$. Let $N^v_m$ and $N^v_s$ denote the number of slots that $v$ hears message $m$ and silence in step two, respectively. Let $N^v_{m'}$ denote the number of slots that $v$ hears message $m$ or message $\pm$ in step two.

Since $\overline{\mathcal{E}_3}$ occurs, and since we intend to make $N^v_m$ and $N^v_s$ as small as possible, without loss of generality, assume Eve jams 0.02 fraction of all channels for 0.98 fraction of all slots during step two of phase $(\tilde{i},\tilde{j})$, and she jams all channels in other slots of step two. Let $\mathcal{R}_1$ denote the set of slots in which Eve only jams 0.02 fraction of all channels. We know $\mathbb{E}[N^v_m]=|\mathcal{R}_1|\cdot p(\tilde{i},\tilde{j})\cdot 0.98\cdot (n-1)\cdot(p(\tilde{i},\tilde{j})/2^{\tilde{j}})\cdot(1-p(\tilde{i},\tilde{j})/2^{\tilde{j}})^{n-2}=0.98^2\cdot R(\tilde{i},\tilde{j})\cdot(p(\tilde{i},\tilde{j}))^2\cdot ((n-1)/2^{\tilde{j}})\cdot(1-p(\tilde{i},\tilde{j})/2^{\tilde{j}})^{n-2}\geq 0.98^2\cdot R(\tilde{i},\tilde{j})\cdot(p(\tilde{i},\tilde{j}))^2\cdot 1.59\geq 1.52\cdot R(\tilde{i},\tilde{j})\cdot(p(\tilde{i},\tilde{j}))^2$. Similarly, we know $\mathbb{E}[N^v_s]=|\mathcal{R}_1|\cdot p(\tilde{i},\tilde{j})\cdot 0.98\cdot (1-p(\tilde{i},\tilde{j})/2^{\tilde{j}})^{n-1}=0.98^2\cdot R(\tilde{i},\tilde{j})\cdot p(\tilde{i},\tilde{j})\cdot (1-p(\tilde{i},\tilde{j})/2^{\tilde{j}})^{n-1}\geq 0.98^2\cdot R(\tilde{i},\tilde{j})\cdot p(\tilde{i},\tilde{j})\cdot 0.94\geq 0.902\cdot R(\tilde{i},\tilde{j})\cdot p(\tilde{i},\tilde{j})$. Since whether $v$ hear $m$ (or silence) are independent among slots in $\mathcal{R}_1$, by a Chernoff bound, we know the probability that $N^v_m<1.5\cdot R(\tilde{i},\tilde{j})\cdot(p(\tilde{i},\tilde{j}))^2$ or $N^v_s<0.9\cdot R(\tilde{i},\tilde{j})\cdot p(\tilde{i},\tilde{j})$ is exponentially small in $R(\tilde{i},\tilde{j})\cdot(p(\tilde{i},\tilde{j}))^2$.

On the other hand, to make $N^v_{m'}$ as large as possible, without loss of generality, assume Eve does no jamming during step two of phase $(\tilde{i},\tilde{j})$. Recall $\hat{j}=\lg{n}-1$, thus $\mathbb{E}[N^v_{m'}]\leq R(\tilde{i},\tilde{j})\cdot p(\tilde{i},\tilde{j})\cdot n\cdot (p(\tilde{i},\tilde{j})/2^{\tilde{j}}) \cdot (1-p(\tilde{i},\tilde{j})/2^{\tilde{j}})^{n-1}\leq n/2^{\hat{j}}\cdot R(\tilde{i},\tilde{j})\cdot (p(\tilde{i},\tilde{j}))^2=2R(\tilde{i},\tilde{j})\cdot (p(\tilde{i},\tilde{j}))^2$. Since whether $v$ hear $m$ (or $\pm$) are independent among slots in step two of phase $(\tilde{i},\tilde{j})$, by a Chernoff bound, we know the probability that $N^v_{m'}>2.2\cdot R(\tilde{i},\tilde{j})\cdot(p(\tilde{i},\tilde{j}))^2$ is exponentially small in $R(\tilde{i},\tilde{j})\cdot(p(\tilde{i},\tilde{j}))^2$.

At this point, take a union bound over all $O(n)$ nodes that have not obtained \texttt{helper} status at the beginning of phase $(\tilde{i},\tilde{j})$, we can finally conclude $\Pr(\mathcal{E}_2~|~\mathcal{E}_1\overline{\mathcal{E}_3})$ is at most $n^{-\Theta(\hat{i}^2)}$.
\end{proof}

\begin{proof}[\underline{Proof of Lemma \ref{lemma-multicastadv-fast-term-inform}}]
We use the same strategy as in the proof of Lemma \ref{lemma-multicast-epidemic-bcst}. Specifically, since event $\mathcal{E}^{\geq 0.1}_{Step1}(\leq 0.9)$ occurs, we know during step one of phase $j=\lg{n}-1$, there must exist $2d\cdot 2^{2\alpha i-2\alpha j}\cdot i^3$ slots in which Eve jams at most 0.9 fraction of all used channels. Here, $d$ is some sufficiently large constant. Let $\mathcal{R}_1$ denote the set of the first half of these $2d\cdot 2^{2\alpha i-2\alpha j}\cdot i^3$ slots, and let $\mathcal{R}_2$ denote the second half. We further divide $\mathcal{R}_1$ into $i>\lg{n}$ segments, each containing $d\cdot 2^{2\alpha i-2\alpha j}\cdot i^2$ slots. Throughout this proof, let $c=n/2$ denote the number of used channels.

We first focus on the slots in $\mathcal{R}_1$, and show that at least $n/2$ nodes will be \texttt{informed} by the end of $\mathcal{R}_1$. To see this, consider some segment $\mathcal{S}$ in $\mathcal{R}_1$, consider some node $u$ that is \texttt{informed} at the beginning of $\mathcal{S}$. Consider a slot in $\mathcal{S}$, let $t\in[1,n/2]$ denote the number of \texttt{informed} nodes at the beginning of this slot. The probability that $u$ informs a previously \texttt{uninformed} node in this slot is at least $p\cdot(1-p/c)^{t-1}\cdot\left(1-(1-p/c)^{n-t}\right)\cdot(1/10)\geq p\cdot(1-p/c)^{n/2}\cdot(1-(1-p/c)^{n/2})\cdot(1/10)\geq p\cdot e^{-np/c}\cdot(1-e^{-np/2c})\cdot(1/10)\geq p\cdot e^{-np/c}\cdot\left(1-(1-np/4c)\right)\cdot(1/10)=(np^2/40c)\cdot e^{-np/c}=p^2/(20e^{2p})$. As a result, by the end of $\mathcal{S}$, the probability that no previously \texttt{uninformed} node is informed by $u$ during $\mathcal{S}$ is at most $(1-p^2/(20e^{2p}))^{d\cdot 2^{2\alpha i-2\alpha j}\cdot i^2}\leq e^{-(d/80e^2)\cdot i\cdot\lg{n}}$. Take a union bound over all the $O(n)$ nodes that are \texttt{informed} at the beginning of $\mathcal{S}$, we know the number of \texttt{informed} nodes will at least double by the end of $\mathcal{S}$, with probability at least $1-n\cdot e^{-(d/80e^2)\cdot i\cdot\lg{n}}$. Since there are $i>\lg{n}$ segments in $\mathcal{R}_1$, take another union bound over these segments and we know the number of \texttt{informed} nodes will reach $n/2$ by the end of $\mathcal{R}_1$, with probability at least $1-i\cdot n\cdot e^{-(d/80e^2)\cdot i\cdot\lg{n}}$. For sufficiently large $d$, this probability is at least $1-n^{-\Theta(i)}$.

We now turn our attention to the slots in $\mathcal{R}_2$. Due to the above analysis, assume at the beginning of $\mathcal{R}_2$, at least $n/2$ nodes are \texttt{informed}. Consider a slot in $\mathcal{R}_2$ and a node $u$ that is still \texttt{uninformed} at the beginning of this slot. Let $t\in[n/2,n)$ denote the number of \texttt{informed} nodes at the beginning of this slot. The probability that $u$ will hear message $m$ in this slot is at least $p\cdot t\cdot (p/c)\cdot\left(1-p/c\right)^{t-1}\cdot(1/10)\geq p\cdot(n/2)\cdot(p/c)\cdot\left(1-p/c\right)^{n}\cdot(1/10)\geq p\cdot(n/2)\cdot(p/c)\cdot e^{-2np/c}\cdot(1/10)=(p^2/10)\cdot e^{-4p}$. As a result, by the end of $\mathcal{R}_2$, the probability that $u$ is still \texttt{uninformed} is at most $(1-(p^2/10)\cdot e^{-4p})^{d\cdot 2^{2\alpha i-2\alpha j}\cdot i^3}\leq e^{-(d/40e^4)\cdot i^2\cdot \lg{n}}$. Take a union bound over the at most $n/2$ nodes that are \texttt{uninformed} at the beginning of $\mathcal{R}_2$, we know all nodes will be \texttt{informed} by the end of $\mathcal{R}_2$, with probability at least $1-n\cdot e^{-(d/40e^4)\cdot i^2\cdot \lg{n}}$. For sufficiently large $d$, this probability is at least $1-n^{-\Theta(i^2)}$.
\end{proof}

\begin{proof}[\underline{Proof of Lemma \ref{lemma-multicastadv-fast-term-helper}}]
To prove the lemma, we only need to show by the end of phase $j=\lg{n}-1$, all nodes have \emph{at least} progressed to \texttt{helper} status, with sufficiently high probability.

Consider a node $u$ that is only \texttt{informed} at the beginning of step two of phase $\lg{n}-1$. Since $\mathcal{E}^{\geq 0.9999}_{Step2}(\leq 0.0001)$ occurs, $\mathcal{E}^{\geq 0.99}_{Step2}(\leq 0.01)$ must occur. Now, in order to obtain the smallest possible $N_m$ and $N_s$ for node $u$, without loss of generality, assume during step two, Eve jams 0.01 fraction of all channels in 0.99 fraction of slots, and she jams all channels in remaining slots. Let $\mathcal{R}_1$ denote the set of slots in which Eve does not jam all channels, we know $\mathbb{E}[N_m]=|\mathcal{R}_1|\cdot p\cdot 0.99\cdot (n-1)\cdot(p/2^j)\cdot(1-p/2^j)^{n-2}\geq 0.99R\cdot p\cdot 0.99\cdot 0.99n\cdot(p/2^j)\cdot e^{-2np/2^j}=0.99^3\cdot 2Rp^2\cdot e^{-4p}$, and $\mathbb{E}[N_s]=|\mathcal{R}_1|\cdot p\cdot 0.99\cdot (1-p/2^j)^{n-1}\geq 0.99R\cdot p\cdot 0.99\cdot e^{-2np/2^j}=0.99^2Rp\cdot e^{-4p}$. Since $i\geq \lg{n}+6/\alpha$, we know $p=2^{-\alpha(i-j)}/2\leq 2^{-\alpha(6/\alpha+1)}/2\leq 1/64$. Thus, $\mathbb{E}[N_m]\geq 1.8Rp^2$ and $\mathbb{E}[N_s]\geq 0.92Rp$. Since whether $u$ will hear message $m$ (or silence) are independent among slots in $\mathcal{R}_1$, by a Chernoff bound, we know the probability that $N_m$ is less than $1.5Rp^2$ or $N_s$ is less than $0.9Rp$ is at most $e^{-\Theta(Rp^2)}\leq n^{-\Theta(i^2)}$.

On the other hand, to obtain the largest possible $N'_m$ for node $u$, without loss of generality, assume during step two, Eve does no jamming. Thus, $\mathbb{E}[N'_m]=R\cdot p\cdot (n-1)\cdot(p/2^j)\cdot(1-p/2^j)^{n-2}\leq Rp^2\cdot(n-1)/2^j\leq 2Rp^2$. Since whether $u$ will hear $m$ (or $\pm$) are independent among slots, a Chernoff bound implies the probability that $N'_m$ is more than $2.2Rp^2$ is at most $e^{-\Theta(Rp^2)}\leq n^{-\Theta(i^2)}$.

Take a union bound over all the $O(n)$ \texttt{informed} nodes at the beginning of step two of phase $\lg{n}-1$, we know all nodes will at least progress to \texttt{helper} status by the end of phase $j=\lg{n}-1$, with probability at least $1-n^{-\Theta(i^2)}$.
\end{proof}

\begin{proof}[\underline{Proof of Lemma \ref{lemma-multicastadv-fast-term-halt}}]
By Claim \ref{claim-multicastadv-helper-phase-relation-2} in the proof of Lemma \ref{lemma-multicastadv-halt-imply-helper}, we know $(1-p(\hat{i},\hat{j})/2^{\hat{j}})^{n-1}\geq 0.89$, with probability at least $1-n^{-\Theta(\hat{i}^2)}$. Assume this is indeed the case. On the other hand, since $i\geq\hat{i}+11/\alpha$, we know in phase $j=\hat{j}$ in epoch $i$, $p(i,j)=2^{-\alpha(i-j)}/2=2^{-\alpha(i-\hat{j})}/2\leq 2^{-\alpha(\hat{i}-\hat{j}+11/\alpha)}/2=p(\hat{i},\hat{j})/2048$. Thus, $(1-p(i,j)/2^j)^{n-1}\geq \exp(-2(n-1)\cdot p(i,j)/2^j)\geq \exp(-2(n-1)\cdot p(\hat{i},\hat{j})/(2048\cdot 2^{\hat{j}}))\geq ((1-p(\hat{i},\hat{j})/2^{\hat{j}})^{n-1})^{1/1024}\geq 0.89^{1/1024}$.

Recall we assume during step two of phase $j=\hat{j}$ event $\mathcal{E}^{\geq 1-x_2}_{Step2}(\leq y_2)$ occurs, where $x_2=y_2=1/10^4$. Thus, to obtain the largest possible $N_n$ for node $u$, without loss of generality, assume during step two, Eve jams $y_2$ fraction of all channels in $1-x_2$ fraction of slots, and she jams all channels in remaining slots. Let $\mathcal{R}_1$ denote the set of slots in which Eve does not jam all channels, and $\mathcal{R}_2$ denote other slots. It is easy to verify, the number of noisy slots node $u$ observed during $\mathcal{R}_2$ is at most $1.01\cdot|\mathcal{R}_2|\cdot p(i,j)=1.01x_2\cdot R(i,j)\cdot p(i,j)$, with probability at least $1-n^{-\Theta(\hat{i}^2)}$. On the other hand, the expected number of silent slots observed by $u$ during $\mathcal{R}_1$ is at least $|\mathcal{R}_1|\cdot p(i,j)\cdot (1-y_2)\cdot (1-p(i,j)/2^j)^{n-1}=(1-x_2)(1-y_2)\cdot R(i,j)\cdot p(i,j)\cdot (1-p(i,j)/2^j)^{n-1}\geq 0.89^{1/1024}\cdot (1-x_2)(1-y_2)\cdot R(i,j)\cdot p(i,j)$. Thus, the expected number of noisy slots observed by $u$ during $\mathcal{R}_1$ is at most $(1-x_2)(1-0.89^{1/1024}\cdot (1-y_2))\cdot R(i,j)\cdot p(i,j)$. By a Chernoff bound, $u$ will hear at most $1.01\cdot(1-x_2)(1-0.89^{1/1024}\cdot (1-y_2))\cdot R(i,j)\cdot p(i,j)$ noisy slots during $\mathcal{R}_1$, with probability at least $1-n^{-\Theta(\hat{i}^2)}$. Since $1.01x_2+1.01\cdot(1-x_2)(1-0.89^{1/1024}\cdot (1-y_2))<1/3000$, the total number of noisy slots observed by $u$ during step two of phase $(i,j)$ will be less than $R(i,j)\cdot p(i,j)/3000$.
\end{proof}

\begin{proof}[\underline{Proof of Theorem \ref{thm-multicastadv}}]
To begin with, we show that when the first \texttt{helper} appears, all nodes must have already learned the message $m$, with high probability. More specifically:

\vspace{-2ex}
\begin{align*}
& \Pr{{\textrm{some node is \texttt{uninformed} when}}\choose{\textrm{some node becomes \texttt{helper}}}}\\
= & \sum_{u}\Pr{{\textrm{some node is \texttt{uninformed} when}}\choose{u\textrm{ becomes the first \texttt{helper}}}}\\
\leq & n\cdot\Pr{{\textrm{some node is \texttt{uninformed} when}}\choose{u\textrm{ becomes the first \texttt{helper}}}}\\
\leq & n\cdot\sum_{i=\lg{n}+1}^{\infty}\sum_{j=0}^{i-1}\Pr{{\textrm{some node is \texttt{uninformed} when}}\choose{u\textrm{ becomes the first \texttt{helper} in phase }(i,j)}}\\
& +n\cdot\Pr(u\textrm{ becomes the first \texttt{helper} in some epoch }\leq\lg{n})\\
\leq & \left(n\cdot\sum_{i=\lg{n}+1}^{\infty}\sum_{j=0}^{i-1}n^{-\Theta(i^2)}\right)+n^{-\Omega(1)}=n^{-\Omega(1)}
\end{align*}

\noindent and the last inequality is due to Lemma \ref{lemma-multicastadv-ij-range-1} and Lemma \ref{lemma-multicastadv-helper-imply-inform}.

We then show, when the first \texttt{halt} node appears, all nodes must have progressed to \texttt{helper} status, with high probability. See Figure \ref{fig-proof-multicastadv-1} for details, and the inequality in the second to last line in the figure is due to Lemma \ref{lemma-multicastadv-ij-range-1} and Lemma \ref{lemma-multicastadv-halt-imply-helper}.

\begin{figure*}[!t]
\begin{align*}
& \Pr(\textrm{some node is not \texttt{helper} when some node halts})\\
= & \sum_{v}\Pr(\textrm{some node is not \texttt{helper} when }v\textrm{ becomes the first \texttt{halt} node})\\
\leq & n\cdot\Pr(\textrm{some node is not \texttt{helper} when }v\textrm{ becomes the first \texttt{halt} node})\\
\leq & n\cdot\sum_{i=\lg{n}+1}^{\infty}\sum_{j=0}^{i-1}\Pr\left({\textrm{some node is not \texttt{helper} when}\choose{v\textrm{ becomes the first \texttt{halt} node}}}\vert{{v\textrm{ becomes \texttt{helper}}}\choose{\textrm{in phase }(i,j)}}\right)\cdot\Pr{{v\textrm{ becomes \texttt{helper}}}\choose{\textrm{in phase }(i,j)}}\\
& +n\cdot\Pr(v\textrm{ becomes \texttt{helper} in some epoch }\leq\lg{n})\\
\leq & \left(n\cdot\sum_{i=\lg{n}+1}^{\infty}\sum_{j=0}^{i-1}n^{-\Theta(i^2)}\cdot\Pr{{v\textrm{ becomes \texttt{helper}}}\choose{\textrm{in phase }(i,j)}}\right) + n^{-\Omega(1)}\\
\leq & \left(n^{-\Theta(\lg^2{n})}\cdot\sum_{i=\lg{n}+1}^{\infty}\sum_{j=0}^{i-1}\Pr{{v\textrm{ becomes \texttt{helper}}}\choose{\textrm{in phase }(i,j)}}\right) + n^{-\Omega(1)}\leq n^{-\Theta(\lg^2{n})}\cdot 1+n^{-\Omega(1)}=n^{-\Omega(1)}
\end{align*}
\vspace{-5ex}
\caption{}\label{fig-proof-multicastadv-1}
\end{figure*}

Our next claim is, for each node $u$, if it obtained \texttt{helper} status in phase $(\hat{i}_u,\hat{j}_u)$, then it must be the case that $\hat{i}_u>\lg{n}$ and $\hat{j}_u=\lg{n}-1$, with high probability.

\vspace{-2ex}
\begin{align*}
& \Pr(\textrm{some node becomes \texttt{helper} with }\hat{i}_u\leq\lg{n}\textrm{ or }\hat{j}_u\neq\lg{n}-1)\\
= & \sum_{u}\Pr(u\textrm{ becomes \texttt{helper} with }\hat{i}_u\leq\lg{n}\textrm{ or }\hat{j}_u\neq\lg{n}-1)\\
\leq & n\cdot\Pr(u\textrm{ becomes \texttt{helper} with }\hat{i}_u\leq\lg{n}\textrm{ or }\hat{j}_u\neq\lg{n}-1
)\\
= & n\cdot\Pr(u\textrm{ becomes \texttt{helper} with }\hat{i}_u>\lg{n}\textrm{ and }\hat{j}_u\neq\lg{n}-1)\\
& + n\cdot\Pr(u\textrm{ becomes \texttt{helper} with }\hat{i}_u\leq\lg{n})\\
= & n\cdot\Pr{{u\textrm{ becomes \texttt{helper} with }\hat{i}_u>\lg{n}\textrm{ and }\hat{j}_u\neq\lg{n}-1}\choose{\textrm{when all nodes are active}}}\\
& + n\cdot\Pr{{u\textrm{ becomes \texttt{helper} with }\hat{i}_u>\lg{n}\textrm{ and }\hat{j}_u\neq\lg{n}-1}\choose{\textrm{after some node halts}}}\\
& + n\cdot\Pr(u\textrm{ becomes \texttt{helper} with }\hat{i}_u\leq\lg{n})\\
\leq & n\cdot n^{-\Omega(1)} + n\cdot n^{-\Omega(1)} + n\cdot n^{-\Omega(1)} = n^{-\Omega(1)}
\end{align*}

\noindent and the last inequality is due to Lemma \ref{lemma-multicastadv-ij-range-1}, \ref{lemma-multicastadv-ij-range-2}, \ref{lemma-multicastadv-ij-range-3}, and Figure \ref{fig-proof-multicastadv-1}.

Based on above analysis, in the following proof, assume all nodes already know $m$ when the first \texttt{helper} appears, and all nodes have obtained \texttt{helper} status when the first \texttt{halt} node appears, and $\hat{i}_u>\lg{n}$ and $\hat{j}_u=\lg{n}-1$ for every node $u$.

Let $l$ denote the last epoch that is blocking while some node is still active at the beginning of epoch $l$, due to the above analysis, by the end of epoch $l$, we know the status of all nodes must belong to exactly one of the following four cases: (1) all nodes are active and either \texttt{uninformed} or \texttt{informed} (and there exists at least one \texttt{uninformed} node); (2) all nodes are active and either \texttt{informed} or \texttt{helper} (and there exists at least one \texttt{informed} node); (3) every node is either \texttt{helper} or has terminated (and there exists at least one \texttt{helper} node); or (4) all nodes have terminated.

Following analysis consider two scenarios: $l\geq\lg{n}$ and $l<\lg{n}$.

\textbf{The $l\geq\lg{n}$ scenario.} In this situation, if we are in case (3) by the end of epoch $l$: due to Lemma \ref{lemma-multicastadv-fast-term-halt}, w.h.p.\ all nodes will halt by the end of epoch $l+11/\alpha$. If we are in case (2) by the end of epoch $l$: due to Lemma \ref{lemma-multicastadv-fast-term-helper}, w.h.p.\ we must be in case (3) by the end of epoch $l+6/\alpha$; and due to Lemma \ref{lemma-multicastadv-fast-term-halt}, w.h.p.\ all nodes will halt by the end of epoch $l+17/\alpha$. Lastly, if we are in case (1) by the end of epoch $l$: due to Lemma \ref{lemma-multicastadv-fast-term-inform}, w.h.p.\ we must be in case (2) by the end of epoch $l+1$; due to Lemma \ref{lemma-multicastadv-fast-term-helper}, w.h.p.\ we must be in case (3) by the end of epoch $l+6/\alpha+1$; and due to Lemma \ref{lemma-multicastadv-fast-term-halt}, w.h.p.\ all nodes will halt by the end of epoch $l+17/\alpha+1$. By now, we know when $l\geq\lg{n}$, all nodes must have terminated by the end of epoch $l+17/\alpha+1\leq l+18/\alpha$, w.h.p.

To bound the runtime and cost of honest nodes, we need to bound $l$. Since epoch $l$ is blocking, we know $\mathcal{E}^{> x_1}_{Step1}(> y_1)$ or $\mathcal{E}^{> x_2}_{Step2}(> y_2)$ must have occurred during phase $\lg{n}-1$ of this epoch (notice $l\geq\lg{n}$ implies phase $\lg{n}-1$ must exist). Thus, the cost of Eve during epoch $l$ is at least $b\cdot x_2y_2\cdot 2^{\alpha(2l-2(\lg{n}-1))}\cdot l^3\cdot 2^{\lg{n-1}}=b\cdot x_2y_2\cdot 2^{2\alpha l}\cdot l^3\cdot 2^{(1-2\alpha)(\lg{n}-1)}$, implying $T\geq (x_2y_2/2)\cdot 2^{2\alpha l}\cdot l^3\cdot n^{(1-2\alpha)}\geq (x_2y_2/2)\cdot 2^{2\alpha l}\cdot n^{(1-2\alpha)}$. As a result, $l\leq (\lg(T/((x_2y_2/2)\cdot n^{(1-2\alpha)})))/(2\alpha)$.

At this point, we can conclude each node will halt within:

\vspace{-3ex}
\begin{align*}
& \sum_{i=1}^{l+18/\alpha}\sum_{j=0}^{i-1}\Theta(1)\cdot 2^{2\alpha i}\cdot 2^{-2\alpha j}\cdot i^3\leq \Theta\left(\frac{1}{1-2^{-2\alpha}}\right)\cdot \sum_{i=1}^{l+18/\alpha}2^{2\alpha i}\cdot i^3\\
\leq & \Theta\left(\frac{1}{1-2^{-2\alpha}}\right)\cdot\left(l+\frac{18}{\alpha}\right)^3\cdot \sum_{i=1}^{l+18/\alpha}2^{2\alpha i}\\
\leq & \Theta\left(\frac{1}{1-2^{-2\alpha}}\right)\cdot(2l)^3\cdot\frac{2^{2\alpha}\cdot 2^{2\alpha(l+18/\alpha)}}{2^{2\alpha}-1}\\
\leq & \Theta\left(\frac{1}{1-2^{-2\alpha}}\right)\cdot\left(\frac{2}{2\alpha}\lg{T}\right)^3\cdot\frac{2^{2\alpha+36}}{2^{2\alpha}-1}\cdot\frac{T}{(x_2y_2/2)\cdot n^{(1-2\alpha)}}\\
= & \Theta(1)\cdot \frac{T}{n^{(1-2\alpha)}}\cdot\lg^3{T}
\end{align*}

To bound each node's cost, first consider epochs one to $\lg{n}$. For each node, the total cost of it during these epochs cannot exceed the total length of these epochs. Thus, for each node, the total cost during the first $\lg{n}$ epochs is at most:

\vspace{-3ex}
\begin{align*}
& \sum_{i=1}^{\lg{n}}\sum_{j=0}^{i-1}\Theta(1)\cdot 2^{2\alpha i}\cdot 2^{-2\alpha j}\cdot i^3 \leq \Theta\left(\frac{1}{1-2^{-2\alpha}}\right)\cdot \sum_{i=1}^{\lg{n}}2^{2\alpha i}\cdot i^3\\
\leq & \Theta\left(\frac{1}{1-2^{-2\alpha}}\right)\cdot\lg^3{n}\cdot\sum_{i=1}^{\lg{n}}2^{2\alpha i}\\
\leq & \Theta\left(\frac{1}{1-2^{-2\alpha}}\right)\cdot\lg^3{n}\cdot\frac{2^{2\alpha}\cdot 2^{2\alpha\lg{n}}}{2^{2\alpha}-1}\\
= & \Theta\left(n^{2\alpha}\cdot\lg^3{n}\right)
\end{align*}

Starting from epoch $\lg{n}+1$, each node's cost within an epoch will be tightly concentrated around its expectation, with sufficiently high probability. Thus, with high probability, for every node, the total cost after epoch $\lg{n}$ is at most:

\vspace{-3ex}
\begin{align*}
& \sum_{i=1}^{l+18/\alpha}\sum_{j=0}^{i-1}\Theta(1)\cdot 2^{\alpha i}\cdot 2^{-\alpha j}\cdot i^3 \leq \Theta\left(\frac{1}{1-2^{-\alpha}}\right)\cdot \sum_{i=1}^{l+18/\alpha}2^{\alpha i}\cdot i^3\\
\leq & \Theta\left(\frac{1}{1-2^{-\alpha}}\right)\cdot\left(l+\frac{18}{\alpha}\right)^3\cdot \sum_{i=1}^{l+18/\alpha}2^{\alpha i}\\
\leq & \Theta\left(\frac{1}{1-2^{-\alpha}}\right)\cdot(2l)^3\cdot\frac{2^{\alpha}\cdot 2^{\alpha(l+18/\alpha)}}{2^{\alpha}-1}\\
\leq & \Theta\left(\frac{1}{1-2^{-\alpha}}\right)\cdot\left(\frac{2}{2\alpha}\lg{T}\right)^3\cdot\frac{2^{\alpha+18}}{2^{\alpha}-1}\cdot\left(\frac{T}{(x_2y_2/2)\cdot n^{(1-2\alpha)}}\right)^{1/2}\\
= & \Theta(1)\cdot\left(\frac{T}{n^{(1-2\alpha)}}\right)^{1/2}\cdot\lg^3{T}
\end{align*}

As a result, when $l\geq\lg{n}$, with high probability, the total cost of each node is $O((T/(n^{(1-2\alpha)}))^{1/2}\cdot\lg^3{T}+n^{2\alpha}\cdot\lg^3{n})$.

\textbf{The $l<\lg{n}$ scenario.} This situation is easier to analyze. By an analysis similar to the $l\geq\lg{n}$ scenario, we know all nodes will terminate by the end of epoch $\lg{n}+18/\alpha$, w.h.p. Therefore, the total runtime and cost for each node in this situation is at most:

\vspace{-2ex}
$$\sum_{i=1}^{\lg{n}+18/\alpha}\sum_{j=0}^{i-1}\Theta(1)\cdot 2^{2\alpha i}\cdot 2^{-2\alpha j}\cdot i^3\leq\Theta\left(n^{2\alpha}\cdot\lg^3{n}\right)$$

We conclude the proof by noting that the $l<\lg{n}$ scenario also includes the situation in which Eve is not present.
\end{proof}

\section{Omitted Figures}\label{sec-appendix-figure}

\begin{figure*}[!t]
\hrule
\vspace{1ex}\textbf{Pseudocode of \MultiCastAdv executed at node $u$:}\vspace{1ex}
\hrule
\begin{small}
\begin{algorithmic}[1]
\State $status\gets un$, $i\gets 1$.
\If {(node $u$ is the source node)} $status\gets in$. \EndIf
\Repeat
	\For {(each phase $j$ from $0$ to $i-1$)}
		\State $\texttt{MultiCastAdvPhase}(i,j)$. \Comment Execute phase $j$ of epoch $i$.
	\EndFor
	\State $i\gets i+1$.
\Until {($status==halt$)}
\end{algorithmic}
\end{small}
\hrule
\vspace{3ex}
\hrule
\vspace{1ex}\textbf{Pseudocode of }$\texttt{MultiCastAdv}(i,j)$ \textbf{executed at node $u$:}\vspace{1ex}
\hrule
\begin{small}
\begin{algorithmic}[1]
\State $R\gets b\cdot 2^{2\alpha(i-j)}\cdot i^3$, $p\gets 2^{-\alpha(i-j)}/2$. \Comment $b$ is some sufficiently large constant.
\Statex \textsc{Step I}: Message dissemination.
\For {(each slot from $1$ to $R$)}
	\State $ch\gets\texttt{rnd}(1,2^j)$, $coin\gets\texttt{rnd}(1,1/p)$.
	\If {($status==un$ \textbf{and} $coin==1$)}
		\State $feedback\gets\texttt{listen}(ch)$.
		\If {($feedback$ contains the message $m$)} $status\gets in$. \EndIf
	\ElsIf {($status\neq un$ \textbf{and} $coin==1$)}
		\State $\texttt{broadcast}(ch,m)$.
	\EndIf
\EndFor
\Statex \textsc{Step II}: Status adjustment and termination detection.
\State $N_m\gets 0$, $N'_m\gets 0$, $N_n\gets 0$, $N_s\gets 0$.
\For {(each slot from $1$ to $R$)}
	\State $ch\gets\texttt{rnd}(1,2^j)$, $coin\gets\texttt{rnd}(1,1/p)$.
	\If {($coin==1$)}
		\State $feedback\gets\texttt{listen}(ch)$.
		\If {($feedback$ contains the message $m$)} $N_m\gets N_m+1,N'_m\gets N'_m+1$.
		\ElsIf {($feedback$ contains beacon message $\pm$ )} $N'_m\gets N'_m+1$.
		\ElsIf {($feedback$ is noise)} $N_n\gets N_n+1$.
		\ElsIf {($feedback$ is silence)} $N_s\gets N_s+1$.
		\EndIf
	\ElsIf {($coin==2$)}
		\If {($status==un$)} $\texttt{broadcast}(ch,\pm)$.
		\Else\ $\texttt{broadcast}(ch,m)$.
		\EndIf
	\EndIf
\EndFor
\If {($status==un$ \textbf{and} $N_m\geq 1$)} $status\gets in$. \EndIf
\If {($status==in$ \textbf{and} $N_m\geq 1.5Rp^2$ \textbf{and} $N_s\geq 0.9Rp$ \textbf{and} $N'_m\leq 2.2Rp^2$)} $status\gets helper$, $\hat{i}_u\gets i$, $\hat{j}_u\gets j$. \EndIf
\If {($status==helper$ \textbf{and} $i-\hat{i}_u\geq 2/\alpha$ \textbf{and} $j==\hat{j}_u$ \textbf{and} $N_n\leq Rp/3000$)} $status\gets halt$. \EndIf
\end{algorithmic}
\end{small}
\hrule
\vspace{-2ex}
\caption{Pseudocode of the \MultiCastAdv algorithm.}\label{fig-alg-multicastadv}
\end{figure*}

\noindent Pseudocode of \MultiCastAdv: see Figure \ref{fig-alg-multicastadv}.

\begin{figure*}[!t]
\hrule
\vspace{1ex}\textbf{Pseudocode of \MultiCastC executed at node $u$:}\vspace{1ex}
\hrule
\begin{small}
\begin{algorithmic}[1]
\State $status\gets un$.
\If {(node $u$ is the source node)} $status\gets in$. \EndIf
\For {(each iteration $i\geq 6$)}
	\State $N_n\gets 0$.
	\For {(each round from $1$ to $R_i=ai\cdot 4^i\cdot \lg^2{n}$)}
		\State $ch\gets\texttt{rnd}(1,n/2)$, $coin\gets\texttt{rnd}(1,2^i)$.
		\For {(each slot $k$ from $1$ to $n/(2C)$)} \Comment Use $n/(2C)$ slots to simulate one slot in \MultiCast.
			\If {($k==\lfloor(ch-1)/C\rfloor+1$)} \Comment Operate in the $(\lfloor(ch-1)/C\rfloor+1)$\textsuperscript{th} slot within current round.
				\If {($coin==1$)}
					\State $feedback\gets\texttt{listen}(((ch-1)\mod C)+1)$.
					\If {($feedback$ is noise)}
						\State $N_n\gets N_n+1$.
					\ElsIf {($feedback$ contains the message $m$)}
						\State $status\gets in$.
					\EndIf
				\ElsIf {($coin==2$ \textbf{and} $status==in$)}
					\State $\texttt{broadcast}(((ch-1)\mod C)+1,m)$.
				\EndIf
			\EndIf
		\EndFor
	\EndFor
	\If {($N_n<R_i/2^{i+1}$)} \textbf{halt}. \EndIf
\EndFor
\end{algorithmic}
\end{small}
\hrule\vspace{-2ex}
\caption{Pseudocode of the \MultiCastC algorithm.}\label{fig-alg-multicastc}
\end{figure*}

\noindent Pseudocode of \MultiCastC: see Figure \ref{fig-alg-multicastc}.

\begin{figure*}[!t]
\hrule
\vspace{1ex}\textbf{Pseudocode of \MultiCastAdvC executed at node $u$:}\vspace{1ex}
\hrule
\begin{small}
\begin{algorithmic}[1]
\State $status\gets un$, $i\gets 1$.
\If {(node $u$ is the source node)} $status\gets in$. \EndIf
\Repeat
	\For {(each phase $j$ from $0$ to $\min\{i-1,\lg C\}$)}
		\State $\texttt{MultiCastAdvPhaseAlt}(i,j)$.
	\EndFor
	\State $i\gets i+1$.
\Until {($status==halt$)}
\end{algorithmic}
\end{small}
\hrule
\vspace{3ex}
\hrule
\vspace{1ex}\textbf{Pseudocode of }$\texttt{MultiCastAdvPhaseAlt}(i,j)$ \textbf{executed at node $u$:}\vspace{1ex}
\hrule
\begin{small}
\begin{algorithmic}[1]
\State $R\gets b\cdot 2^{2\alpha(i-j)}\cdot i^3$, $p\gets 2^{-\alpha(i-j)}/2$. \Comment $b$ is some sufficiently large constant.
\Statex \textsc{Step I}: Message dissemination.
\For {(each slot from $1$ to $R$)}
	\State $ch\gets\texttt{rnd}(1,2^j)$, $coin\gets\texttt{rnd}(1,1/p)$.
	\If {($status==un$ \textbf{and} $coin==1$)}
		\State $feedback\gets\texttt{listen}(ch)$.
		\If {($feedback$ contains the message $m$)} $status\gets in$. \EndIf
	\ElsIf {($status\neq un$ \textbf{and} $coin==1$)}
		\State $\texttt{broadcast}(ch,m)$.
	\EndIf
\EndFor
\Statex \textsc{Step II}: Status adjustment and termination detection.
\State $N_m\gets 0$, $N'_m\gets 0$, $N_n\gets 0$, $N_s\gets 0$.
\For {(each slot from $1$ to $R$)}
	\State $ch\gets\texttt{rnd}(1,2^j)$, $coin\gets\texttt{rnd}(1,1/p)$.
	\If {($coin==1$)}
		\State $feedback\gets\texttt{listen}(ch)$.
		\If {($feedback$ contains the message $m$)} $N_m\gets N_m+1,N'_m\gets N'_m+1$.
		\ElsIf {($feedback$ contains beacon message $\pm$ )} $N'_m\gets N'_m+1$.
		\ElsIf {($feedback$ is noise)} $N_n\gets N_n+1$.
		\ElsIf {($feedback$ is silence)} $N_s\gets N_s+1$.
		\EndIf
	\ElsIf {($coin==2$)}
		\If {($status==un$)} $\texttt{broadcast}(ch,\pm)$.
		\Else\ $\texttt{broadcast}(ch,m)$.
		\EndIf
	\EndIf
\EndFor
\If {($status==un$ \textbf{and} $N_m\geq 1$)} $status\gets in$. \EndIf
\If {($status==in$)}
	\If {($j==\lg{C}$ \textbf{and} $N_m\geq 1.5Rp^2$ \textbf{and} $N_s\geq 0.9Rp$)} $status\gets helper$, $\hat{i}_u\gets i$, $\hat{j}_u\gets j$.
	\ElsIf {($j<\lg{C}$ \textbf{and} $N_m\geq 1.5Rp^2$ \textbf{and} $N_s\geq 0.9Rp$ \textbf{and} $N'_m\leq 2.2Rp^2$)} $status\gets helper$, $\hat{i}_u\gets i$, $\hat{j}_u\gets j$.
	\EndIf
\EndIf
\If {($status==helper$ \textbf{and} $i-\hat{i}_u\geq 2/\alpha$ \textbf{and} $j==\hat{j}_u$ \textbf{and} $N_n\leq Rp/3000$)} $status\gets halt$. \EndIf
\end{algorithmic}
\end{small}
\hrule
\vspace{-2ex}
\caption{Pseudocode of the \MultiCastAdvC algorithm.}\label{fig-alg-multicastadvc}
\end{figure*}

\noindent Pseudocode of \MultiCastAdvC: see Figure \ref{fig-alg-multicastadvc}.

\section{Analysis of \MultiCastAdvC}\label{sec-appendix-multicastadvc}

As mentioned in the main body of the paper, here we focus on the scenario in which $C\leq n/2$.

\bigskip\noindent\textbf{Good $(i,j)$ phases.} It is easy to verify Lemma \ref{lemma-multicastadv-ij-range-1} and \ref{lemma-multicastadv-ij-range-2} still hold. However, now Lemma \ref{lemma-multicastadv-ij-range-2} is not that helpful, as \MultiCastAdvC does not contain phases in which $j\geq\lg{n}$ when $C\leq n/2$.

Lemma \ref{lemma-multicastadv-ij-range-3}, on the other hand, needs some care. Specifically, this lemma shows that nodes cannot obtain \texttt{helper} status when $j<\lg{n}-1$ and all nodes are active. Recall in \MultiCastAdv and \MultiCastAdvC, the pseudocode of an $(i,j)$-phase are identical when $j<\lg{C}$. Moreover, since $C\leq n/2$, we know $\lg{C}\leq\lg{n}-1$. Thus, when $j<\lg{C}$, Lemma \ref{lemma-multicastadv-ij-range-3} still holds. More specifically:

\begin{corollary}[Analog of Lemma \ref{lemma-multicastadv-ij-range-3}]\label{corollary-multicastadvc-ij-range-3}
Fix an $(i,j)$-phase in which $i>\lg{n}$, $j<\lg{C}$, and all nodes are active. Fix a node $u$. With probability at least $1-n^{-\Theta(i^2)}$, node $u$ will not become \texttt{helper} by the end of this phase.
\end{corollary}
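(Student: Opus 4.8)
The plan is to reduce this corollary directly to Lemma~\ref{lemma-multicastadv-ij-range-3}, exploiting the fact that \MultiCastAdvC and \MultiCastAdv behave identically in every $(i,j)$-phase with $j<\lg{C}$. The starting observation is a one-line inequality: since we are in the case $C\leq n/2$, we have $\lg{C}\leq\lg(n/2)=\lg{n}-1$, so the hypothesis $j<\lg{C}$ immediately implies $j<\lg{n}-1$, which is precisely the range in which the original lemma operates.

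Next I would verify that, for phases with $j<\lg{C}$, the pseudocode of a phase in \MultiCastAdvC coincides with that of \MultiCastAdv. Comparing Figure~\ref{fig-alg-multicastadvc} with Figure~\ref{fig-alg-multicastadv}, the only changes introduced by the cut-off mechanism are (a) the truncation of phases with $j>\lg{C}$, which does not touch the phase under consideration, and (b) the altered helper-transition rule in phases with $j=\lg{C}$, where the condition $N'_m\leq 2.2Rp^2$ is dropped. Crucially, for $j<\lg{C}$ the helper rule is unchanged: a node becomes \texttt{helper} only if $N_m\geq 1.5Rp^2$, $N_s\geq 0.9Rp$, \emph{and} $N'_m\leq 2.2Rp^2$ all hold simultaneously. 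Since step one and step two, and in particular the way the counters $N_m$, $N'_m$, $N_s$, $N_n$ are accumulated, are word-for-word identical, the joint distribution of these counters (given Eve's jamming pattern and the number of active \texttt{informed} nodes) is exactly the one analyzed in Lemma~\ref{lemma-multicastadv-ij-range-3}.

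With these two facts in hand, the argument of Lemma~\ref{lemma-multicastadv-ij-range-3} carries over verbatim. That proof is entirely phase-local: it fixes a single $(i,j)$-phase with all nodes active, defines the events $\mathcal{E}_1=\{N'_m\leq 2.2Rp^2\}$ and $\mathcal{E}_2=\{N_s\geq 0.9Rp\}$, and shows via a case split on the non-jammed-channel count $Q$ that, whenever $j<\lg{n}-1$, these two events cannot both hold except with probability $n^{-\Theta(i^2)}$. Since a node becoming \texttt{helper} requires both $\mathcal{E}_1$ and $\mathcal{E}_2$ among its conditions, the failure bound $1-n^{-\Theta(i^2)}$ transfers directly to the \MultiCastAdvC setting. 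Because that proof never inspects phases other than the one under consideration, the cut-off mechanism has no effect on it, and no new probabilistic estimate is needed.

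The main obstacle---really the only step requiring care---is the bookkeeping in point (b): one must confirm that the proof of Lemma~\ref{lemma-multicastadv-ij-range-3} relies only on quantities and behavior preserved for $j<\lg{C}$, and in particular does not implicitly appeal to the existence of higher-numbered phases or to the unmodified $j=\lg{C}$ rule. A quick inspection confirms it does not, so the corollary follows immediately.
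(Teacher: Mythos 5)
Your proposal is correct and follows essentially the same route as the paper: the paper also observes that $C\leq n/2$ gives $\lg{C}\leq\lg{n}-1$, so $j<\lg{C}$ places the phase in the range $j<\lg{n}-1$ covered by Lemma~\ref{lemma-multicastadv-ij-range-3}, and that the per-phase pseudocode (including the unchanged \texttt{helper} condition with $N'_m\leq 2.2Rp^2$) is identical for $j<\lg{C}$, so the original phase-local argument applies verbatim.
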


Essentially, Lemma \ref{lemma-multicastadv-ij-range-1} and Corollary \ref{corollary-multicastadvc-ij-range-3} imply, when $C\leq n/2$, during the execution of \MultiCastAdvC, if all nodes are active, then \texttt{informed} nodes will only become \texttt{helper} in phases in which $i>\lg{n}$ and $j=\lg{C}$.

\bigskip\noindent\textbf{Correctness and competitiveness guarantees.} Firstly, it is easy to verify that Lemma \ref{lemma-multicastadv-helper-imply-inform} still holds. Thus, \MultiCastAdvC also possesses the property that when some node becomes \texttt{helper}, all nodes must have learned the message $m$.

Next, we inspect Lemma \ref{lemma-multicastadv-halt-imply-helper}, which states that when some node \texttt{halt}, all nodes have progressed to at least \texttt{helper} status. In the \MultiCastAdvC setting, the lemma statement still holds, but the proof needs some small adjustments.

\begin{corollary}[Analog of Lemma \ref{lemma-multicastadv-halt-imply-helper}]\label{corollary-multicastadvc-halt-imply-helper}
Fix an $(i,j)$-phase in which all nodes are active, fix a node $u$ that obtained \texttt{helper} status in phase $(\hat{i},\hat{j})$. Assume $i-2/\alpha\geq \hat{i}>\lg{n}$. By the end of step two, the following two events happen simultaneously with probability at most $n^{-\Theta(\hat{i}^2)}$: (a) node $u$ decides to halt; and (b) some node has not progressed to \texttt{helper} status.
\end{corollary}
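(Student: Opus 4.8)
The plan is to mirror the proof of Lemma \ref{lemma-multicastadv-halt-imply-helper} almost verbatim, making only the substitution $\hat{j}=\lg{C}$ in place of $\hat{j}=\lg{n}-1$ and discarding the bookkeeping for $N'_m$. I would begin with the same three events: let $\mathcal{E}_1$ be that $u$ halts by the end of step two, $\mathcal{E}_2$ that some node has not reached \texttt{helper} status, and $\mathcal{E}_3$ the strong-jamming event ``during step two, for at least $0.02$ fraction of slots Eve jams at least $0.02$ fraction of all channels''. Then decompose $\Pr(\mathcal{E}_1\mathcal{E}_2)=\Pr(\mathcal{E}_1\mathcal{E}_2\mathcal{E}_3)+\Pr(\mathcal{E}_1\mathcal{E}_2\overline{\mathcal{E}_3})$ and bound each term separately.

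The term $\Pr(\mathcal{E}_1\mathcal{E}_2\mathcal{E}_3)\le\Pr(\mathcal{E}_1\mid\mathcal{E}_3)$ carries over without change: that argument only lower-bounds the number of noisy slots $u$ observes, through $u$'s own listening probability $p$ and the guaranteed jamming level, and is entirely insensitive to whether the channel count is $n/2$ or $C$. Hence w.h.p.\ $u$ hears more than $Rp/3000$ noisy slots under $\mathcal{E}_3$ and cannot halt, yielding $\Pr(\mathcal{E}_1\mid\mathcal{E}_3)\le n^{-\Theta(\hat{i}^2)}$.

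The substantive work is bounding $\Pr(\mathcal{E}_1\mathcal{E}_2\overline{\mathcal{E}_3})\le\Pr(\mathcal{E}_2\mid\mathcal{E}_1\overline{\mathcal{E}_3})$. Here I would first use Lemma \ref{lemma-multicastadv-ij-range-1} together with Corollary \ref{corollary-multicastadvc-ij-range-3} (and the fact that the cut-off forbids $j>\lg{C}$) to pin $\hat{j}=\lg{C}$, then invoke Claims \ref{claim-multicastadv-helper-phase-relation-1} and \ref{claim-multicastadv-helper-phase-relation-2} --- which involve only $N_m$ and $N_s$ and so survive the dropped $N'_m$ test --- to deduce that in phase $(\hat{i},\hat{j})$ we have $((n-1)/2^{\hat{j}})(1-p(\hat{i},\hat{j})/2^{\hat{j}})^{n-2}\ge 1.45$ and $(1-p(\hat{i},\hat{j})/2^{\hat{j}})^{n-1}\ge 0.89$, w.h.p. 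Since $u$ halts in phase $(\tilde{i},\tilde{j})$ with $\tilde{j}=\hat{j}$ and $\tilde{i}\ge\hat{i}+2/\alpha$, I have $p(\tilde{i},\tilde{j})\le p(\hat{i},\hat{j})/4$, and the same chain of $1-y\ge e^{-2y}$ and $e^{-y}\ge 1-y$ estimates as in the original propagates both structural inequalities to the stronger forms $\ge 1.59$ and $\ge 0.94$ in phase $(\tilde{i},\tilde{j})$. The one place the original specializes to $2^{\hat{j}}=n/2$ is the final step $((n-1)/2^{\hat{j}})^{1/2}\ge(2(n-1)/n)^{1/2}$; since here $2^{\hat{j}}=C\le n/2$ gives $(n-1)/C\ge 2(n-1)/n$, this inequality only strengthens, so the bound $\ge 1.59$ still closes.

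Given these conditions and that all nodes already know $m$ by the end of phase $(\hat{i},\hat{j})$ (Lemma \ref{lemma-multicastadv-helper-imply-inform}), I would finish by fixing any active non-\texttt{helper} node $v$ and, under the worst-case weak jamming permitted by $\overline{\mathcal{E}_3}$, computing $\mathbb{E}[N^v_m]\ge 1.52\,Rp^2$ and $\mathbb{E}[N^v_s]\ge 0.902\,Rp$, then applying a Chernoff bound so that $N^v_m\ge 1.5Rp^2$ and $N^v_s\ge 0.9Rp$ hold w.h.p. Crucially --- and this is precisely where the proof is strictly simpler than the original --- the helper criterion at $j=\lg{C}$ omits the upper bound on $N'_m$, so I do not need to bound $N^v_{m'}$ at all; the two satisfied inequalities already force $v$ into \texttt{helper} status. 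A union bound over the $O(n)$ such nodes then gives $\Pr(\mathcal{E}_2\mid\mathcal{E}_1\overline{\mathcal{E}_3})\le n^{-\Theta(\hat{i}^2)}$. The only genuinely new point to check --- and the step I expect to require the most care --- is that dropping the $N'_m$ test does not reopen premature \texttt{helper} formation at some $j<\lg{C}$; this is exactly what Corollary \ref{corollary-multicastadvc-ij-range-3} rules out, so once that analog is in hand the whole argument transfers cleanly.
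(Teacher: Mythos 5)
Your proposal is correct and follows essentially the same route as the paper's own proof: the same $\mathcal{E}_1,\mathcal{E}_2,\mathcal{E}_3$ decomposition, the unchanged noisy-slot bound for $\Pr(\mathcal{E}_1\mid\mathcal{E}_3)$, the reuse of Claims \ref{claim-multicastadv-helper-phase-relation-1} and \ref{claim-multicastadv-helper-phase-relation-2} to propagate the structural inequalities to phase $(\tilde{i},\tilde{j})$, and the observation that pinning $\hat{j}=\lg{C}$ via Corollary \ref{corollary-multicastadvc-ij-range-3} lets the dropped $N'_m$ test make the final step simpler. In fact you make explicit the one inequality the paper leaves as ``easy to verify'' --- that $((n-1)/2^{\hat{j}})^{1/2}$ only grows when $2^{\hat{j}}=C\leq n/2$ --- so nothing is missing.
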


\begin{proof}[Proof sketch]
We take the same path as in the proof of Lemma \ref{lemma-multicastadv-halt-imply-helper}. Let $\mathcal{E}_1$ be the event that $u$ decides to halt by the end of step two, and $\mathcal{E}_2$ be the event that some node still has not obtained \texttt{helper} status by the end of step two. Let $\mathcal{E}_3$ be the event that ``during step two, for at least 0.02 fraction of slots, Eve jams at least 0.02 fraction of all channels''. We know $\Pr(\mathcal{E}_1\mathcal{E}_2)=\Pr(\mathcal{E}_1\mathcal{E}_2\mathcal{E}_3)+\Pr(\mathcal{E}_1\mathcal{E}_2\overline{\mathcal{E}_3})$.

Bounding $\Pr(\mathcal{E}_1\mathcal{E}_2\mathcal{E}_3)$ is easy. Specifically, this part of the proof does not need any change, thus we can conclude $\Pr(\mathcal{E}_1\mathcal{E}_2\mathcal{E}_3)\leq \Pr(\mathcal{E}_1\mathcal{E}_3)\leq \Pr(\mathcal{E}_1~|~\mathcal{E}_3)\leq n^{-\Theta(i^2)}$.

On the other hand, we bound $\Pr(\mathcal{E}_1\mathcal{E}_2\overline{\mathcal{E}_3})$ via $\Pr(\mathcal{E}_2~|~\mathcal{E}_1\overline{\mathcal{E}_3})$.

Assume $u$ decides to halt by the end of phase $(\tilde{i},\tilde{j})$. Since $u$ obtained \texttt{helper} status in phase $(\hat{i},\hat{j})$, we know $\tilde{i}\geq \hat{i}+2/\alpha$ and $\tilde{j}=\hat{j}$.

Claim \ref{claim-multicastadv-helper-phase-relation-1} and Claim \ref{claim-multicastadv-helper-phase-relation-2} still hold. Therefore, in phase $(\hat{i},\hat{j})$, with probability at least $1-n^{-\Theta(\hat{i}^2)}$, we have $((n-1)/2^{\hat{j}})\cdot(1-p(\hat{i},\hat{j})/2^{\hat{j}})^{n-2}\geq 1.45$ and $(1-p(\hat{i},\hat{j})/2^{\hat{j}})^{n-1}\geq 0.89$. Since $\tilde{i}\geq \hat{i}+2/\alpha$ and $\tilde{j}=\hat{j}$, we also know $p(\tilde{i},\tilde{j})\leq p(\hat{i},\hat{j})/4$. As a result, it is easy to verify, we again have $((n-1)/2^{\tilde{j}})\cdot(1-p(\tilde{i},\tilde{j})/2^{\tilde{j}})^{n-2}>1.59$, and $(1-p(\tilde{i},\tilde{j})/2^{\tilde{j}})^{n-1}>0.94$, with probability at least $1-n^{-\Theta(\hat{i}^2)}$. Assume this is indeed the case.

Besides, by Lemma \ref{lemma-multicastadv-helper-imply-inform}, by the end of phase $(\hat{i},\hat{j})$, all nodes already know $m$, with probability at least $1-n^{-\Theta(\hat{i}^2)}$. Assume indeed all nodes know $m$ in phase $(\tilde{i},\tilde{j})$.

Now, fix a node $v$ that has not obtained \texttt{helper} status at the beginning of phase $(\tilde{i},\tilde{j})$. Let $N^v_m$ and $N^v_s$ denote the number of slots that $v$ hear message $m$ and silence in step two, respectively. Once again, we are able to prove: the probability that $N^v_m<1.5\cdot R(\tilde{i},\tilde{j})\cdot(p(\tilde{i},\tilde{j}))^2$ or $N^v_s<0.9\cdot R(\tilde{i},\tilde{j})\cdot p(\tilde{i},\tilde{j})$ is exponentially small in $R(\tilde{i},\tilde{j})\cdot(p(\tilde{i},\tilde{j}))^2$.

Finally, here comes the part that is different from the proof of Lemma \ref{lemma-multicastadv-halt-imply-helper}. Due to Lemma \ref{lemma-multicastadv-ij-range-1} and Corollary \ref{corollary-multicastadvc-ij-range-3}, it must be the case that $\hat{j}=\tilde{j}=\lg{C}$, with probability at least $1-n^{-\Theta(\hat{i}^2)}$. But in \MultiCastAdvC, in phases in which $j=\lg{C}$, an \texttt{informed} node will become \texttt{helper} so long as $N_m\geq 1.5Rp^2$ and $N_s\geq 0.9Rp$. (This is the exact reason why we remove the $N'_m\leq 2.2Rp^2$ condition when $j=\lg{C}$.) At this point, take a union bound over all $O(n)$ nodes that have not obtained \texttt{helper} status at the beginning of phase $(\tilde{i},\tilde{j})$, we can conclude $\Pr(\mathcal{E}_2~|~\mathcal{E}_1\overline{\mathcal{E}_3})$ is at most $n^{-\Theta(\hat{i}^2)}$.
\end{proof}

\bigskip\noindent\textbf{Fast termination.} Once Eve ceases jamming (more precisely, jamming from Eve is not strong), all nodes should quickly receive the message (if they have not done so already) and then halt. In the \MultiCastAdvC setting, such fast termination properties are also enforced, but exact lemma statements and detailed proofs need some adjustments.

To begin with, the definition of a ``blocking'' epoch needs to be updated so as to capture the fact that Eve has to to jam phases in which $j=\lg{C}$ to effectively disrupt protocol execution.

\begin{definition}\label{def-epoch-blocking-multicastadvc}
Epoch $i$ is \emph{blocking} if at least one of the following two conditions hold: (a) $\mathcal{E}^{> x_1}_{Step1}(> y_1)$ in phase $\lg{C}$; and (b) $\mathcal{E}^{> x_2}_{Step2}(> y_2)$ in phase $\lg{C}$. Here, $x_1=y_1=1/10$, $x_2=y_2=1/10^4$. On the other hand, epoch $i$ is \emph{non-blocking} if both of the following conditions hold: (a) $\mathcal{E}^{\geq 1-x_1}_{Step1}(\leq y_1)$ in phase $\lg{C}$; and (b) $\mathcal{E}^{\geq 1-x_2}_{Step2}(\leq y_2)$ in phase $\lg{C}$.
\end{definition}

With the updated definition, we now state and prove an analog of Lemma \ref{lemma-multicastadv-fast-term-inform}.

\begin{corollary}[Analog of Lemma \ref{lemma-multicastadv-fast-term-inform}]\label{corollary-multicastadvc-fast-term-inform}
Consider phase $\lg{C}$ of an epoch $i>\max\{(\lg(n/(2C)))/\alpha+\lg{C},\lg{n}\}$, assume all nodes are active and are either \texttt{uninformed} or \texttt{informed} at the beginning of this phase. If $\mathcal{E}^{\geq 1-x_1}_{Step1}(\leq y_1)$ happens, then by the end of this phase, all nodes are \texttt{informed}, with probability at least $1-n^{-\Theta(i)}$.
\end{corollary}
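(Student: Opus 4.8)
The plan is to re-run, almost verbatim, the two-stage epidemic-broadcast argument from the proof of Lemma~\ref{lemma-multicastadv-fast-term-inform} (which itself mirrors Lemma~\ref{lemma-multicast-epidemic-bcst}), but with the number of channels set to $c=2^{\lg C}=C$ instead of $n/2$. The only genuinely new ingredient is exploiting the extra hypothesis $i>(\lg(n/(2C)))/\alpha+\lg C$. First I would record what it buys: writing $p=p(i,\lg C)=2^{-\alpha(i-\lg C)}/2$, the bound $i-\lg C>(\lg(n/(2C)))/\alpha$ gives $2^{-\alpha(i-\lg C)}<2C/n$, hence $p<C/n$ and therefore $np/C<1$. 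This is precisely the statement that the expected number of broadcasters on any one of the $C$ channels stays below a constant even when all $n$ nodes are \texttt{informed}; it replaces the property ``$c=n/2$ keeps collisions rare'' that was automatic in Lemma~\ref{lemma-multicastadv-fast-term-inform} but fails once $C\le n/2$.

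Next, since $\mathcal{E}^{\geq 1-x_1}_{Step1}(\leq y_1)$ holds with $x_1=y_1=1/10$, there are $\Theta(R)$ slots of step one in which at least a $1/10$ fraction of the $C$ channels are unjammed; I would put the first half into $\mathcal{R}_1$ and the second half into $\mathcal{R}_2$, and partition $\mathcal{R}_1$ into $i$ segments of $\Theta(2^{2\alpha(i-\lg C)}i^2)$ slots each. On $\mathcal{R}_1$ I would run the doubling argument: for an \texttt{informed} node $u$ and a slot with $t\le n/2$ \texttt{informed} nodes, the probability $u$ informs a fresh node is at least $p\,(1-p/C)^{t-1}\bigl(1-(1-p/C)^{n-t}\bigr)(1/10)$, which the same elementary estimates as before (now using $np/C<1$ to get $(1-p/C)^{t-1}\ge e^{-np/C}\ge e^{-1}$ and $1-(1-p/C)^{n-t}\ge np/(4C)$) lower-bound by $\Omega(np^2/C)$. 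Since $(np^2/C)\cdot\Theta(2^{2\alpha(i-\lg C)}i^2)=\Theta((n/C)i^2)$, a union bound over the $O(n)$ \texttt{informed} nodes and the $i$ segments shows the \texttt{informed} count at least doubles per segment and reaches $n/2$ by the end of $\mathcal{R}_1$ with probability $1-n^{-\Theta(i)}$ (here $i>\lg n$ guarantees the $i$ segments suffice for the $\lg n-1$ doublings starting from the source).

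For $\mathcal{R}_2$ I would assume $t\ge n/2$ throughout and run the mop-up argument: a still-\texttt{uninformed} node hears $m$ in a given slot with probability at least $p\cdot t(p/C)(1-p/C)^{t-1}(1/10)=\Omega(np^2/C)$, again using $np/C<1$. Over the $\Theta(2^{2\alpha(i-\lg C)}i^3)$ slots of $\mathcal{R}_2$ this fails with probability $e^{-\Theta((n/C)i^3)}\le n^{-\Theta(i^2)}$ per node, and a union bound over the at most $n/2$ remaining \texttt{uninformed} nodes finishes the phase. Combining the two stages yields all nodes \texttt{informed} with probability $1-n^{-\Theta(i)}$, the doubling stage being the bottleneck.

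I expect the main obstacle to be bookkeeping rather than conceptual: one must check that the condition on $i$ indeed drives $np/C$ below $1$ in every place the original proof silently used $c=n/2$, and that the segment count $i$ (which must exceed the number of doublings needed to grow from one \texttt{informed} node to $n/2$) is supplied by the companion hypothesis $i>\lg n$. Both follow directly from the stated range of $i$, so no new probabilistic machinery is needed — the Chernoff/union-bound skeleton of Lemma~\ref{lemma-multicastadv-fast-term-inform} transfers intact once $c$ is replaced by $C$ and $p<C/n$ is invoked.
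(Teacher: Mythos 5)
Your proposal is correct and follows essentially the same route as the paper's own proof: the same $\mathcal{R}_1/\mathcal{R}_2$ split with $i>\lg n$ doubling segments, the same per-slot success bounds $\Omega(np^2/C)$, and the same use of the hypothesis $i>(\lg(n/(2C)))/\alpha+\lg C$ to force $p=O(C/n)$ so that $np/C=O(1)$ replaces the collision control that $c=n/2$ gave for free in Lemma~\ref{lemma-multicastadv-fast-term-inform}. The only (immaterial) difference is that you keep the extra factor $n/C$ in the exponent where the paper discards it via $(np^2/40C)e^{-np/C}\geq (p^2/20)e^{-2}$.
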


\begin{proof}[Proof sketch]
The proof is very similar to the one for Lemma \ref{lemma-multicastadv-fast-term-inform}, except that we use the extra condition $i>(\lg(n/(2C)))/\alpha+\lg{C}$ to ensure $p(i,j)\leq 2C/n$ when $j=\lg{C}$. We now sketch the proof.

Since event $\mathcal{E}^{\geq 0.1}_{Step1}(\leq 0.9)$ occurs, during step one of phase $j=\lg{C}$, there must exist $2d\cdot 2^{2\alpha i-2\alpha j}\cdot i^3$ slots in which Eve jams at most 0.9 fraction of all used channels. Here, $d$ is some sufficiently large constant. Let $\mathcal{R}_1$ be the set of the first half of these $2d\cdot 2^{2\alpha i-2\alpha j}\cdot i^3$ slots, and $\mathcal{R}_2$ be the second half. We further split $\mathcal{R}_1$ into $i>\lg{n}$ segments, each containing $d\cdot 2^{2\alpha i-2\alpha j}\cdot i^2$ slots.

We first focus on slots in $\mathcal{R}_1$, and show that at least $n/2$ nodes will be \texttt{informed} by the end of $\mathcal{R}_1$. To see this, consider some segment $\mathcal{S}$ in $\mathcal{R}_1$, consider some node $u$ that is \texttt{informed} at the beginning of $\mathcal{S}$. Consider a slot in $\mathcal{S}$, let $t\in[1,n/2]$ denote the number of \texttt{informed} nodes at the beginning of this slot. The probability that $u$ informs a previously \texttt{uninformed} node in this slot is at least $p\cdot(1-p/C)^{t-1}\cdot\left(1-(1-p/C)^{n-t}\right)\cdot(1/10)\geq p\cdot(1-p/C)^{n/2}\cdot(1-(1-p/C)^{n/2})\cdot(1/10)\geq p\cdot e^{-np/C}\cdot(1-e^{-np/2C})\cdot(1/10)\geq p\cdot e^{-np/C}\cdot\left(1-(1-np/4C)\right)\cdot(1/10)=(np^2/40C)\cdot e^{-np/C}\geq(p^2/20)\cdot e^{-2}$. Here, the last inequality is due to $C\leq n/2$ and $p\leq 2C/n$. As a result, by the end of $\mathcal{S}$, the probability that no previously \texttt{uninformed} node is informed by $u$ during $\mathcal{S}$ is at most $(1-p^2/(20e^2))^{d\cdot 2^{2\alpha i-2\alpha j}\cdot i^2}\leq e^{-(d/80e^2)\cdot i\cdot\lg{n}}$. Take a union bound over all the $O(n)$ nodes that are \texttt{informed} at the beginning of $\mathcal{S}$, we know the number of \texttt{informed} nodes will at least double by the end of $\mathcal{S}$, with probability at least $1-n\cdot e^{-(d/80e^2)\cdot i\cdot\lg{n}}$. Since there are $i>\lg{n}$ segments in $\mathcal{R}_1$, take another union bound over these segments and we know the number of \texttt{informed} nodes will at least reach $n/2$ by the end of $\mathcal{R}_1$, with probability at least $1-i\cdot n\cdot e^{-(d/80e^2)\cdot i\cdot\lg{n}}$. For sufficiently large $d$, this probability is at least $1-n^{-\Theta(i)}$.

We now turn our attention to the slots in $\mathcal{R}_2$. Due to the above analysis, assume at the beginning of $\mathcal{R}_2$, at least $n/2$ nodes are \texttt{informed}. Consider a slot in $\mathcal{R}_2$ and a node $u$ that is still \texttt{uninformed} at the beginning of this slot. Let $t\in[n/2,n)$ denote the number of \texttt{informed} nodes at the beginning of this slot. The probability that $u$ will hear message $m$ in this slot is at least $p\cdot t\cdot (p/C)\cdot\left(1-p/C\right)^{t-1}\cdot(1/10)\geq p\cdot(n/2)\cdot(p/C)\cdot\left(1-p/C\right)^{n}\cdot(1/10)\geq p\cdot(n/2)\cdot(p/C)\cdot e^{-2np/C}\cdot(1/10)\geq (p^2/10)\cdot e^{-4}$. Here, the last inequality is due to $C\leq n/2$ and $p\leq 2C/n$. As a result, by the end of $\mathcal{R}_2$, the probability that $u$ is still \texttt{uninformed} is at most $(1-(p^2/10)\cdot e^{-4})^{d\cdot 2^{2\alpha i-2\alpha j}\cdot i^3}\leq e^{-(d/40e^4)\cdot i^2\cdot \lg{n}}$. Take a union bound over the at most $n/2$ nodes that are \texttt{uninformed} at the beginning of $\mathcal{R}_2$, we know all nodes will be \texttt{informed} by the end of $\mathcal{R}_2$, with probability at least $1-n\cdot e^{-(d/40e^4)\cdot i^2\cdot \lg{n}}$. For sufficiently large $d$, this probability is at least $1-n^{-\Theta(i^2)}$.
\end{proof}

Next, we prove an analog of Lemma \ref{lemma-multicastadv-fast-term-helper}, showing that when all nodes are \texttt{informed} or \texttt{helper}, after a non-blocking epoch, all nodes must have progressed to \texttt{helper} or \texttt{halt} status.

\begin{corollary}[Analog of Lemma \ref{lemma-multicastadv-fast-term-helper}]\label{corollary-multicastadvc-fast-term-helper}
Consider phase $\lg{C}$ of an epoch $i\geq\max\{(\lg(32n/C))/\alpha+\lg{C},\lg{n}\}$, assume all nodes are active and are either \texttt{informed} or \texttt{helper} at the beginning of step two of this phase. If $\mathcal{E}^{\geq 1-x_2}_{Step2}(\leq y_2)$ happens, then by the end of this phase, all nodes must be in \texttt{helper} or \texttt{halt} status, with probability at least $1-n^{-\Theta(i^2)}$.
\end{corollary}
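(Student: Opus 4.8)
The plan is to follow the proof of Lemma~\ref{lemma-multicastadv-fast-term-helper} closely, adapting it to two changes specific to \MultiCastAdvC with $C\leq n/2$: the ``good'' phase is now $j=\lg{C}$ rather than $j=\lg{n}-1$, and at $j=\lg{C}$ the helper condition drops the requirement $N'_m\leq 2.2Rp^2$. Consequently, to show that a node which is merely \texttt{informed} at the start of step two reaches \texttt{helper} (or \texttt{halt}) by the end of the phase, I only need to verify $N_m\geq 1.5Rp^2$ and $N_s\geq 0.9Rp$ with high probability; I do \emph{not} need to control $N'_m$. This is exactly why that condition was removed: with $2^j=C\leq n/2$ we have $(n-1)/2^j\geq 2$, so the natural bound on $\mathbb{E}[N'_m]$ can exceed $2.2Rp^2$, and keeping the check would block the transition.

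First I would use the hypothesis on $i$ to pin down the probability $p=p(i,j)=2^{-\alpha(i-j)}/2$. With $j=\lg{C}$, the condition $i\geq(\lg(32n/C))/\alpha+\lg{C}$ rearranges to $2^{\alpha(i-\lg{C})}\geq 32n/C$, hence $p\leq C/(64n)$, i.e.\ $np/C\leq 1/64$. This plays the role that $p\leq 1/64$ played in the original proof: it keeps every factor of the form $(1-p/C)^{\Theta(n)}$ within a constant of $1$. Next, since $\mathcal{E}^{\geq 1-x_2}_{Step2}(\leq y_2)$ holds with $x_2=y_2=1/10^4$, I would relax to the weaker $0.99/0.01$ bound and, in the worst case, assume Eve jams $0.01$ of the channels in $0.99$ of the step-two slots and all channels otherwise; let $\mathcal{R}_1$ be the slots in which not all channels are jammed, so $|\mathcal{R}_1|\geq 0.99R$ and each channel there is unjammed with probability at least $0.99$.

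Then, for a fixed \texttt{informed} node $u$, using that every broadcaster sends $m$ (all nodes are \texttt{informed} or \texttt{helper}), I would estimate
\[
\mathbb{E}[N_m]\geq 0.99^2\cdot Rp^2\cdot\frac{n-1}{C}\cdot\Bigl(1-\frac{p}{C}\Bigr)^{n-2},\qquad
\mathbb{E}[N_s]\geq 0.99^2\cdot Rp\cdot\Bigl(1-\frac{p}{C}\Bigr)^{n-1}.
\]
Using $C\leq n/2$ (so $(n-1)/C\geq 2-2/n$) together with $np/C\leq 1/64$ (so each $(1-p/C)^{\Theta(n)}\geq e^{-1/32}$), these give $\mathbb{E}[N_m]\geq 1.8Rp^2$ and $\mathbb{E}[N_s]\geq 0.92Rp$, comfortably above the thresholds $1.5Rp^2$ and $0.9Rp$. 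Since Eve is oblivious, the per-slot indicators are independent across $\mathcal{R}_1$, so a Chernoff bound yields $N_m\geq 1.5Rp^2$ and $N_s\geq 0.9Rp$ except with probability $e^{-\Theta(Rp^2)}$; because $Rp^2=\Theta(i^3)$ and $i\geq\lg{n}$, this is at most $n^{-\Theta(i^2)}$. A union bound over the $O(n)$ \texttt{informed} nodes then gives the claim.

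The step I expect to need the most care is the expectation of $N_m$. Unlike in Lemma~\ref{lemma-multicastadv-fast-term-helper}, where $(n-1)/2^j=2(n-1)/n<2$ yields a clean value near $2Rp^2$, here $(n-1)/C$ can be far larger than $2$, so $\mathbb{E}[N_m]$ (and $\mathbb{E}[N'_m]$) may greatly exceed $2.2Rp^2$. This is harmless for the helper transition only because the $N'_m\leq 2.2Rp^2$ check is absent at $j=\lg{C}$; the hard part will be to keep the lower-bound chain using $C\leq n/2$ and the $i$-hypothesis correctly and to resist accidentally re-introducing any \emph{upper} bound on $N_m$ or $N'_m$ that would fail in this regime. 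The remaining bookkeeping---the worst-case jamming reduction, the Chernoff tail, and the final union bound---is routine and parallels both Lemma~\ref{lemma-multicastadv-fast-term-helper} and Corollary~\ref{corollary-multicastadvc-fast-term-inform}.
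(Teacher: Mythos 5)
Your proposal is correct and follows essentially the same route as the paper's proof: exploit the removal of the $N'_m$ check at $j=\lg{C}$, use the hypothesis on $i$ to force $p=O(C/n)$ so that $(1-p/C)^{\Theta(n)}$ stays near $1$, reduce to the worst-case jamming pattern, lower-bound $\mathbb{E}[N_m]$ and $\mathbb{E}[N_s]$ above the thresholds, and finish with Chernoff plus a union bound over the \texttt{informed} nodes. The only cosmetic difference is that you derive $p\leq C/(64n)$ (keeping the extra factor of $1/2$ in $p$) where the paper uses the slightly weaker $p\leq C/(32n)$; both suffice.
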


\begin{proof}[Proof sketch]
The proof is very similar to the one for Lemma \ref{lemma-multicastadv-fast-term-helper}, except that we use the extra condition $i>(\lg(32n/C))/\alpha+\lg{C}$ to ensure $p(i,j)\leq C/(32n)$ when $j=\lg{C}$.

Notice that to prove the lemma, we only need to show by the end of phase $\lg{C}$, all nodes have \emph{at least} progressed to \texttt{helper} status, with sufficiently high probability.

Consider a node $u$ that is only \texttt{informed} at the beginning of step two of phase $\lg{C}$. Since event $\mathcal{E}^{\geq 0.9999}_{Step2}(\leq 0.0001)$ occurs, event $\mathcal{E}^{\geq 0.99}_{Step2}(\leq 0.01)$ must occur. Now, in order to obtain the smallest possible $N_m$ and $N_s$ for node $u$, without loss of generality, assume during step two, Eve jams 0.01 fraction of all channels in 0.99 fraction of slots, and she jams all channels in remaining slots. Let $\mathcal{R}_1$ denote the set of slots in which Eve does not jam all channels, we know $\mathbb{E}[N_m]=|\mathcal{R}_1|\cdot p\cdot 0.99\cdot (n-1)\cdot(p/C)\cdot(1-p/C)^{n-2}\geq 0.99R\cdot p\cdot 0.99\cdot 0.99n\cdot(p/C)\cdot e^{-2np/C}\geq 0.99^3\cdot 2Rp^2\cdot e^{-1/16}>1.8Rp^2$. Here, the second to last inequality is due to $C\leq n/2$ and $p\leq C/32n$. Similarly, $\mathbb{E}[N_s]=|\mathcal{R}_1|\cdot p\cdot 0.99\cdot (1-p/C)^{n-1}\geq 0.99R\cdot p\cdot 0.99\cdot e^{-2np/C}\geq 0.99^2\cdot Rp\cdot e^{-1/16}>0.92Rp$. Since whether $u$ will hear message $m$ (or silence) are independent among slots in $\mathcal{R}_1$, by a Chernoff bound, the probability that $N_m$ is less than $1.5Rp^2$ or $N_s$ is less than $0.9Rp$ is at most $n^{-\Theta(i^2)}$.

Take a union bound over all the $O(n)$ \texttt{informed} nodes at the beginning of step two of phase $\lg{C}$, we know all nodes will at least progress to \texttt{helper} status by the end of phase $\lg{C}$, with probability at least $1-n^{-\Theta(i^2)}$.
\end{proof}

Lastly, Lemma \ref{lemma-multicastadv-fast-term-halt} shows, once a \texttt{helper} starts considering termination, after $O(1)$ epochs, the node will successfully halt in a non-blocking epoch. In the \MultiCastAdvC setting, it still holds.

\bigskip\noindent\textbf{Proof of the main theorem.} We are now ready to sketch the proof for Theorem \ref{thm-multicastadvc}.

To begin with, the following nice properties still hold in the \MultiCastAdvC setting: (a) due to Lemma \ref{lemma-multicastadv-ij-range-1} and Lemma \ref{lemma-multicastadv-helper-imply-inform}, when the first \texttt{helper} appears, all nodes must have already learned the message $m$, with high probability; (b) due to Lemma \ref{lemma-multicastadv-ij-range-1} and Corollary \ref{corollary-multicastadvc-halt-imply-helper}, when the first \texttt{halt} node appears, all nodes must have already progressed to \texttt{helper} status, with high probability; and (c) due to Lemma \ref{lemma-multicastadv-ij-range-1}, Corollary \ref{corollary-multicastadvc-ij-range-3}, and Corollary \ref{corollary-multicastadvc-halt-imply-helper}, for each node $u$, if it obtained \texttt{helper} status in phase $(\hat{i}_u,\hat{j}_u)$, then it must be the case that $\hat{i}_u>\lg{n}$ and $\hat{j}_u=\lg{C}$, with high probability.

In the following proof, assume the above properties indeed hold.

Let $l$ denote the last epoch that is blocking while some node is still active at the beginning of that epoch. By the end of epoch $l$, we know the status of all nodes must belong to exactly one of the following four cases: (1) all nodes are active and either \texttt{uninformed} or \texttt{informed} (and there exists at least one \texttt{uninformed} node); (2) all nodes are active and either \texttt{informed} or \texttt{helper} (and there exists at least one \texttt{informed} node); (3) every node is either \texttt{helper} or has terminated (and there exists at least one \texttt{helper} node); or (4) all nodes have terminated.

Let $\beta=\max\{(\lg(32n/C))/\alpha+\lg{C},\lg{n}\}$, following analysis consider two complement scenarios: $l\geq\beta$ and $l<\beta$.

\textbf{The $l\geq\beta$ scenario.} In this situation, due to Corollary \ref{corollary-multicastadvc-fast-term-inform}, Corollary \ref{corollary-multicastadvc-fast-term-helper}, and Lemma \ref{lemma-multicastadv-fast-term-halt}, it is easy to verify, no matter which case (among the aforementioned four cases) the system is in by the end of epoch $l$, all nodes must have terminated by the end of epoch $l+11/\alpha+2\leq l+12/\alpha$, w.h.p.

To bound the runtime and cost of honest nodes, we need to bound $l$. Since epoch $l$ is blocking, $\mathcal{E}^{> x_1}_{Step1}(> y_1)$ or $\mathcal{E}^{> x_2}_{Step2}(> y_2)$ must have occurred during phase $\lg{C}$ of this epoch. Thus, the cost of Eve during epoch $l$ is at least $b\cdot x_2y_2\cdot 2^{\alpha(2l-2\lg{C})}\cdot l^3\cdot 2^{\lg{C}} = b\cdot x_2y_2\cdot 2^{2\alpha l}\cdot l^3\cdot 2^{(1-2\alpha)\lg{C}}$, implying $T\geq x_2y_2\cdot 2^{2\alpha l}\cdot l^3\cdot C^{(1-2\alpha)}\geq x_2y_2\cdot 2^{2\alpha l}\cdot C^{(1-2\alpha)}$. Hence, $l\leq (\lg(T/(x_2y_2\cdot C^{(1-2\alpha)})))/(2\alpha)$.

At this point, we can conclude each node will halt within:

\vspace{-3ex}
\begin{align*}
& \sum_{i=1}^{l+12/\alpha}\sum_{j=0}^{i-1}\Theta(1)\cdot 2^{2\alpha i}\cdot 2^{-2\alpha j}\cdot i^3\leq \Theta\left(\frac{1}{1-2^{-2\alpha}}\right)\cdot \sum_{i=1}^{l+12/\alpha}2^{2\alpha i}\cdot i^3\\
\leq & \Theta\left(\frac{1}{1-2^{-2\alpha}}\right)\cdot\left(l+\frac{12}{\alpha}\right)^3\cdot \sum_{i=1}^{l+12/\alpha}2^{2\alpha i}\\
\leq & \Theta\left(\frac{1}{1-2^{-2\alpha}}\right)\cdot(2l)^3\cdot\frac{2^{2\alpha}\cdot 2^{2\alpha(l+12/\alpha)}}{2^{2\alpha}-1}\\
\leq & \Theta\left(\frac{1}{1-2^{-2\alpha}}\right)\cdot\left(\frac{2}{2\alpha}\lg{T}\right)^3\cdot\frac{2^{2\alpha+24}}{2^{2\alpha}-1}\cdot\frac{T}{x_2y_2\cdot C^{(1-2\alpha)}}\\
= & \Theta(1)\cdot \frac{T}{C^{(1-2\alpha)}}\cdot\lg^3{T}
\end{align*}

To bound each node's cost, first consider epochs one to $\lg{n}$. For each node, the total cost of it during these epochs cannot exceed the total length of these epochs. Thus, for each node, the total cost during the first $\lg{n}$ epochs is at most:

\vspace{-2ex}
$$\sum_{i=1}^{\lg{n}}\sum_{j=0}^{i-1}\Theta(1)\cdot 2^{2\alpha i}\cdot 2^{-2\alpha j}\cdot i^3 \leq \Theta\left(n^{2\alpha}\cdot\lg^3{n}\right)$$

Starting from epoch $\lg{n}+1$, each node's cost within an epoch will be tightly concentrated around its expectation, with sufficiently high probability. Thus, with high probability, for every node, the total cost after epoch $\lg{n}$ is at most:

\vspace{-3ex}
\begin{align*}
& \sum_{i=1}^{l+12/\alpha}\sum_{j=0}^{i-1}\Theta(1)\cdot 2^{\alpha i}\cdot 2^{-\alpha j}\cdot i^3 \leq \Theta\left(\frac{1}{1-2^{-\alpha}}\right)\cdot \sum_{i=1}^{l+12/\alpha}2^{\alpha i}\cdot i^3\\
\leq & \Theta\left(\frac{1}{1-2^{-\alpha}}\right)\cdot\left(l+\frac{12}{\alpha}\right)^3\cdot \sum_{i=1}^{l+12/\alpha}2^{\alpha i}\\
\leq & \Theta\left(\frac{1}{1-2^{-\alpha}}\right)\cdot(2l)^3\cdot\frac{2^{\alpha}\cdot 2^{\alpha(l+12/\alpha)}}{2^{\alpha}-1}\\
\leq & \Theta\left(\frac{1}{1-2^{-\alpha}}\right)\cdot\left(\frac{2}{2\alpha}\lg{T}\right)^3\cdot\frac{2^{\alpha+12}}{2^{\alpha}-1}\cdot\left(\frac{T}{x_2y_2\cdot C^{(1-2\alpha)}}\right)^{1/2}\\
= & \Theta(1)\cdot\left(\frac{T}{C^{(1-2\alpha)}}\right)^{1/2}\cdot\lg^3{T}
\end{align*}

As a result, when $l\geq\beta$, with high probability, the total cost of each node is $O((T/(C^{(1-2\alpha)}))^{1/2}\cdot\lg^3{T}+n^{2\alpha}\cdot\lg^3{n})$.

\textbf{The $l<\beta$ scenario.} This situation is easier to analyze. By an analysis similar to the $l\geq\beta$ scenario, we know all nodes will terminate by the end of epoch $\kappa=\beta+11/\alpha+2\leq \lg{n}+(\lg(32n/C))/\alpha+\lg{C}+12/\alpha$, with high probability. Therefore, the total runtime and cost for each node in this situation is at most:

\vspace{-2ex}
\begin{align*}
& \sum_{i=1}^{\kappa}\sum_{j=0}^{i-1}\Theta(1)\cdot 2^{2\alpha i}\cdot 2^{-2\alpha j}\cdot i^3 \leq \Theta\left(\frac{1}{1-2^{-2\alpha}}\right)\cdot \sum_{i=1}^{\kappa}2^{2\alpha i}\cdot i^3\\
\leq & \Theta\left(\frac{1}{1-2^{-2\alpha}}\right)\cdot\left(\frac{4\lg{n}}{\alpha}\right)^3\cdot\sum_{i=1}^{\kappa}2^{2\alpha i}\\
\leq & \Theta\left(\frac{1}{1-2^{-2\alpha}}\right)\cdot\left(\frac{4\lg{n}}{\alpha}\right)^3\cdot\frac{2^{2\alpha}\cdot 2^{2\alpha\kappa}}{2^{2\alpha}-1}\\
= & \Theta\left(n^{2+2\alpha}/C^{2-2\alpha}\cdot\lg^3{n}\right)
\end{align*}

We conclude the proof by noting that the $l<\beta$ scenario also includes the situation in which Eve is not present.

\end{document}